\def\BibTeX{{\rm B\kern-.05em{\sc i\kern-.025em b}\kern-.08em
    T\kern-.1667em\lower.7ex\hbox{E}\kern-.125emX}}
\newtheorem{theorem}{Theorem}
\newtheorem{lemma}{Lemma}
\newtheorem{definition}{Definition}
\newcommand{\setd}{\ensuremath{\mathcal{D}}}
\newcommand{\bs}[1]{\boldsymbol{#1}}
\newcommand{\mc}{\mathcal}
\definecolor{calpolypomonagreen}{rgb}{0.12, 0.3, 0.17}
\newcounter{remarkcount}
\newenvironment{remark}{\refstepcounter{remarkcount}\begin{trivlist}\item \textbf{Remark \theremarkcount.}}{\end{trivlist}}
\newcommand{\circlearrow}{}
\DeclareRobustCommand{\circlearrow}{%
  \mathrel{\vphantom{\rightarrow}\mathpalette\circle@arrow\relax}%
}
\newcommand{\circle@arrow}[2]{%
  \m@th
  \ooalign{%
    \hidewidth$#1\circ\mkern1mu$\hidewidth\cr
    $#1-$\cr}%
}
\let\emptyset\varnothing
\newcommand{\mbf}{\mathbf}
\newcommand{\mbb}{\mathbb}
\newcommand{\vect}{\mathrm{vec}}
\theoremstyle{definition}
\newtheorem{claim}{Claim}
\newtheorem{claimproof}{Proof of Claim}
\theoremstyle{remark}
\def\BibTeX{{\rm B\kern-.05em{\sc i\kern-.025em b}\kern-.08em
    T\kern-.1667em\lower.7ex\hbox{E}\kern-.125emX}}
\begin{document}
\onecolumn
\title{ A Single-Letter Capacity Formula for MIMO Gauss-Markov Rayleigh Fading Channels
}

\author{
\IEEEauthorblockN{Rami Ezzine\IEEEauthorrefmark{1}, Moritz Wiese\IEEEauthorrefmark{1}\IEEEauthorrefmark{3}, Christian Deppe\IEEEauthorrefmark{2}\IEEEauthorrefmark{4} and Holger Boche\IEEEauthorrefmark{1}\IEEEauthorrefmark{3}\IEEEauthorrefmark{4}}
\IEEEauthorblockA{\IEEEauthorrefmark{1}Technical University of Munich, Chair of Theoretical Information Technology, Munich, Germany\\
\IEEEauthorrefmark{2}Technical University of Munich, Institute for Communications Engineering,  Munich, Germany\\
\IEEEauthorrefmark{3}CASA -- Cyber Security in the Age of Large-Scale Adversaries–
Exzellenzcluster, Ruhr-Universit\"at Bochum, Germany\\
\IEEEauthorrefmark{4}BMBF Research Hub 6G-life, Munich, Germany\\
Email: \{rami.ezzine, wiese, christian.deppe, boche\}@tum.de}
}
\maketitle
\thispagestyle{plain}
\pagenumbering{arabic}
\pagestyle{plain}
\begin{abstract}
Over the past decades, the problem of communication over finite-state Markov channels (FSMCs) has been investigated in many researches and the capacity of FSMCs has been studied in closed form under the assumption of the availability of partial/complete channel state information at the sender and/or the receiver.   
In our work, we focus on infinite-state Markov channels by investigating the problem of message transmission over time-varying  single-user multiple-input multiple-output (MIMO) Gauss-Markov Rayleigh fading channels with average power constraint and with complete channel state information available at the receiver side (CSIR). We completely solve the problem by giving a single-letter characterization of the channel capacity in closed form and by providing a proof of it.

\end{abstract}

\begin{IEEEkeywords}
Gauss-Markov Rayleigh fading channels, channel capacity, multiple-antenna channels
\end{IEEEkeywords}
\section{Introduction}
In many new applications in modern wireless communications such as several machine-to-machine
and human-to-machine systems, the tactile internet \cite{tactiles} and industry 4.0 \cite{industrie40}, robust and ultra-reliable low latency information exchange is required.
These applications impose challenges on the robustness requirement because of the time-varying nature of the channel conditions caused by the mobility and the changing wireless medium.

Several accurate tractable channel models are employed to model the channel variations appearing in wireless communications including the Markov model, often employed in flat fading and inter-symbol interference \cite{Gallager}. The Markov model is widely used for modeling wireless flat-fading channels 
due to its low memory and its consolidated theory.

The availability and quality of channel state information
(CSI) has a high influence on the capacity of the Markov channels.
Over the past decades, many researchers have addressed the problem of communication over finite-state Markov channels (FSMCs) \cite{survey} and
extensive studies have been performed to analyze the capacity of FSMCs in closed form under the assumption of the availability of partial/complete channel state information at the sender and/or the receiver side\cite{Goldsmith,capacityanalysis,communicationsreceiverbased,firstordermarkov,finitestatewithfeedback,finitestaterayleighfading,markovdelayedfeedback}. 

In our work, the focus is on \color{black} discrete-time continuous-valued time-varying Markov channels\color{black}, which  are  of high relevance  for practical systems. 
In particular, we are concerned with the time-varying single-user multiple-input multiple-output (MIMO) Rayleigh fading channels, where we assume that the statistics of the gain sequence are known to both the sender and the receiver and that the actual realization of the channel state sequence is completely known to the receiver only (CSIR). Therefore, the state sequence is viewed as a second output sequence of the channel. 
We further assume that the channel fades are modeled as a first-order Gauss-Markov process, which is widely used to describe the time-varying aspect of the channel\cite{mimorayleighfading,gaussmarkov1,gaussmarkov2,gaussmarkov3}. The focus is on the  multiple-antenna setting which has drawn considerable attention
in the area of wireless communications because MIMO systems offer higher rates and more
reliability and resistance to interference, compared to single-input single-output (SISO) systems\cite{mimobenefits}.

To the best of our knowledge, no capacity formula for MIMO Gauss-Markov fading channels with CSIR is provided in the literature. A single-letter expression for the capacity is provided in \cite{telatar} in the case when the channel fades are independent and identically distributed (i.i.d.). 
Other than that, only the proof of a general formula based on the inf-information
rate for the capacity which can be generalized for arbitrary channels with abstract alphabets is provided in \cite{verduhan}.

The main contribution of our work is to give a single-letter expression of the capacity of MIMO Gauss-Markov fading channels with average power constraint and to provide a proof of it.

\quad \textit{ Paper Outline:} The rest of the paper is organized as follows. In Section \ref{sec2}, we present the channel model, provide the key definitions and the main and auxiliary results. In Section \ref{proofthm}, we provide a rigorous proof of the  capacity of  time-varying multi-antenna Rayleigh fading channels with CSIR. Section \ref{prooflemma} is devoted to deriving an upper-bound on the variance of the normalized information density between the inputs and the outputs of the time-varying MIMO Rayleigh fading channel. This auxiliary result is used in the proof of the capacity formula.
Section \ref{conclusion} contains concluding remarks and proposes
potential future research in this field. Several auxiliary lemmas are collected in the
Appendix.

\quad \textit{Notation:}  $\mathbb{C}$ denotes the set of complex numbers and $\mbb R$ denotes the set of real numbers; $H(\cdot)$ and $h(\cdot)$  correspond to the entropy and the differential entropy function, respectively; $I(\cdot;\cdot)$ denotes the mutual information between two random variables. All information
quantities are taken to base 2. Throughout the paper,  $\log$ is taken to  base 2.  The natural exponential and the natural logarithm are denoted by $\exp$ and $\ln$, respectively. For any random variables $X$ and $Y$ whose joint probability law has a density $p_{X,Y}(x,y),$ we denote their marginal probability density function by $p_{X}(x)$ and $p_{Y}(y),$ respectively,  and their conditional probability density functions by $p_{X|Y}(x|y)$ and $p_{Y|X}(y|x).$ 
For any random variables $X$, $Y$ and $Z$, we use the notation $\color{black}X \circlearrow{Y} \circlearrow{Z}\color{black}$ to indicate a Markov chain. For any random variable $X,$ $\mathrm{var}(X)$ refers to the variance of $X$ and for any random variables $X$ and $Y,$ 
$\text{cov}(X,Y)$ refers to the covariance between $X$ and $Y.$ For any set $\mc
S,$
$|\mathcal{S}|$ stands for the cardinality of $\mathcal{S}.$ For any matrix $\mbf A,$ $\mathrm{tr}(\mbf A)$ refers to the trace of $\mbf A,$ $\lVert \mbf A\rVert$ stands for the operator norm of $\mbf A$ with respect to the Euclidean norm,  $\mbf A^{H}$ stands for the standard Hermitian transpose of $\mbf A,$  $\vect\left(\mbf A\right)$ refers to the vectorization of $\mbf A,$ $\lambda_{\max}(\mbf A)$ refers to the maximum eigenvalue
of $ \mbf A$ and $\lambda_{\min}(\mbf A)$ refers to its minimum eigenvalue. For any matrix $\mbf A$ and $\mbf B,$ we use the notation $\mbf A \preceq \mbf B$ to indicate that $\mbf B-\mbf A$ is positive semi-definite.
For any vector $\bs{X},$ $\bs{X}^{T}$ refers to its transpose.
For any random matrix $\mbf A \in \mathbb{C}^{m \times n}$ with entries $\mbf A_{i,j}$ $i=1,\hdots,m,j=1,\hdots, n,$ we define
\[
\mbb E \left[\mbf A\right] = \begin{bmatrix} 
    \mbb E\left[\mbf A_{11}\right] &  \mbb E\left[\mbf A_{12}\right] & \dots \\
    \vdots & \ddots & \\
    \mbb E\left[\mbf A_{m1}\right] &        &  \mbb E\left[\mbf A_{mn}\right]
    \end{bmatrix}.
\]

For any integer $m,$ $\mathcal{Q}_{(P,m)}$ is defined to be the set of positive semi-definite Hermitian matrices which are elements of
$\mbb C^{m\times m}$ and whose trace is smaller than or equal to $P.$ 

\section{Channel Model, Definitions and  Results}
\label{sec2}
\subsection{Channel Model}
For any block-length $n$, we consider the following channel model for the time-variant fading channel $W_{\mbf G^n}$
\begin{align}
\bs{z}_{i}=\mbf G_i\bs{t}_{i}+\bs{\xi}_{i} \quad i=1\hdots n,  \label{channelmodel.}
\end{align}
where $\bs{t}^n=(\bs{t}_1,\hdots,\bs{t}_n)\in\mbb C^{N_{T} \times n}$ and $ \bs{z}^n=(\bs{z}_1,\hdots,\bs{z}_n)\in \mbb C^{N_{R} \times n}$ are channel input and output blocks, respectively, and where $N_T$ and $N_R$ refer to the number of transmit and receive antennas, respectively.

Here, $\mbf G^n=\mbf G_1\hdots \mbf G_n$, where $\mbf G_i$ models the gain for the $i^{th}$ channel use.
We consider the following model for the gain.
For $0\leq \alpha <1:$
\color{black}
\begin{align} 
 \mbf G_i=\sqrt{\alpha}\mbf G_{i-1} +\sqrt{1-\alpha}\mbf W_i, \quad i=2\hdots n.
    \label{gainmodel}
\end{align}
\color{black}
Here, $\alpha$ corresponds to the \textit{memory factor}. We assume that $\mbf G_1$ and $\mbf W_i, i=2\hdots n,$ are i.i.d. such that  $\vect{(\mbf G_1)},\vect{(\mbf W_i)}, i=2\hdots n$ are drawn from $\mc N_{\mbb C}\left(\bs{0}_{N_RN_T},\mbf K\right),$ where $\mbf K$ is any $N_RN_T\times N_RN_T$ covariance matrix.
Therefore, the sequence of $\mbf G_i, i=1\hdots n,$ forms a Markov chain.
$\bs{\xi}^n=(\bs{\xi}_1,\hdots,\bs{\xi}_n)\in \mathbb{C}^{N_{R} \times n}$ models the noise sequence.
We further assume that the $\bs{\xi}_{i}s$ are i.i.d., where $\bs{\xi}_{i} \sim \mathcal{N}_{\mathbb{C}}\left(\mbf 0_{N_{R}},\sigma^2\mbf I_{N_{R}}\right),\ i=1\hdots n,$ that $\mbf G^{n}$ and $\bs{\xi}^n$ are mutually independent. Furthermore, we assume that the random input sequence $\bs{T}^n=(\bs{T}_1,\hdots,\bs{T}_n)$ is independent of $(\mbf G^{n},\bs{\xi}^n).$ 
It is also assumed that both the sender and the receiver know the statistics of the random gain sequence $\mbf G^n$ and that only the receiver knows its actual realization (CSIR). Therefore, $\mbf G^n$ is viewed as a second output sequence of the fading channel. 

\color{black}
\subsection{Properties of the random gain sequence}
 In the following lemmas, we present some properties of the random gain in  \eqref{gainmodel}.
\begin{lemma}
\label{inductionformula}
For $0<\alpha<1$ and $i\in \{1\hdots n\},$
    \begin{align}
     \mbf G_i=\sqrt{\alpha}^{i-1}\mbf G_1+\sqrt{1-\alpha}\sum_{j=2}^{i}\sqrt{\alpha}^{i-j}\mbf W_j.
    \nonumber \end{align}
\end{lemma}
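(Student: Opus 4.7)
The plan is to prove Lemma~\ref{inductionformula} by induction on $i$. The recursion~\eqref{gainmodel} immediately suggests this as the natural approach, since the claimed formula is simply what one obtains by unrolling that first-order linear recursion all the way down to $\mathbf{G}_1$.

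For the base case $i=1$, I would observe that the right-hand side collapses to $\mathbf{G}_1$: the factor $\sqrt{\alpha}^{0}=1$ multiplies $\mathbf{G}_1$, and the sum $\sum_{j=2}^{1}\sqrt{\alpha}^{1-j}\mathbf{W}_j$ is empty by convention, so the expression reduces to $\mathbf{G}_1$ as required.

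For the inductive step, I would fix $i\geq 2$, assume the identity for index $i-1$, and substitute the resulting closed form into~\eqref{gainmodel}. Pulling the leading $\sqrt{\alpha}$ through the expression for $\mathbf{G}_{i-1}$ turns the power $\sqrt{\alpha}^{i-2}$ on the $\mathbf{G}_1$ term into $\sqrt{\alpha}^{i-1}$, and turns each exponent $\sqrt{\alpha}^{i-1-j}$ inside the sum into $\sqrt{\alpha}^{i-j}$. The extra additive term $\sqrt{1-\alpha}\,\mathbf{W}_i$ coming from~\eqref{gainmodel} matches precisely the missing $j=i$ summand in the target formula (where $\sqrt{\alpha}^{i-i}=1$), so absorbing it into the sum produces the desired expression with summation range $j=2,\dots,i$.

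I do not anticipate any real obstacle: the statement is essentially the standard solution of a scalar-like first-order linear recursion applied entrywise to the matrix $\mathbf{G}_i$, and the only thing to handle carefully is the bookkeeping of the exponents of $\sqrt{\alpha}$ and of the summation limits, in particular the empty-sum convention needed at $i=1$. No use of the Gaussian or independence structure of $\mathbf{G}_1$ and the $\mathbf{W}_j$ is required for the algebraic identity itself, which is purely a consequence of~\eqref{gainmodel}.
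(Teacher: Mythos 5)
Your proposal is correct and follows essentially the same route as the paper: induction on the index, with the base case $i=1$ handled via the empty-sum convention and the inductive step obtained by substituting the inductive hypothesis into \eqref{gainmodel} and absorbing the extra $\sqrt{1-\alpha}\,\mathbf{W}_i$ term as the $j=i$ summand. No gaps.
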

\color{black}
\begin{proof}
We will proceed by induction.
\underline{Base Case:} Clearly, the statement of the Lemma holds for $i=1$\\~\\
\underline{Inductive step:} Show that for any $k\geq 2$, if the statement of the lemma holds for $i=k$ then it holds for $i=k+1.$\\~\\
Assume that the statement of the lemma holds for $i=k,$ then we have
\begin{align}
    \mbf G_{k}=\sqrt{\alpha}^{k-1}\mbf G_1+\sqrt{1-\alpha}\sum_{j=2}^{k}\sqrt{\alpha}^{k-j}\mbf W_j. \nonumber
 \end{align}
It follows that
\begin{align}
    &\mbf G_{k+1} \nonumber \\
    &\overset{(a)}{=}\sqrt{\alpha}\mbf G_{k} +\sqrt{1-\alpha}\mbf W_{k+1} \nonumber\\
    &\overset{(b)}{=}\sqrt{\alpha}\left[ \sqrt{\alpha}^{k-1}\mbf G_1+\sqrt{1-\alpha}\sum_{j=2}^{k}\sqrt{\alpha}^{k-j}\mbf W_j \right] +\sqrt{1-\alpha}\mbf W_{k+1} \nonumber \\
    &=\sqrt{\alpha}^{k}\mbf G_1+\sqrt{1-\alpha}\sum_{j=2}^{k}\sqrt{\alpha}^{k+1-j}\mbf W_j +\sqrt{1-\alpha}\mbf W_{k+1} \nonumber\\
    &=\sqrt{\alpha}^{k}\mbf G_1+\sqrt{1-\alpha}\sum_{j=2}^{k}\sqrt{\alpha}^{k+1-j}\mbf W_j+\sqrt{1-\alpha}\sqrt{\alpha}^{k+1-(k+1)}\mbf W_{k+1} \nonumber\\
    &=\sqrt{\alpha}^{k}\mbf G_1+\sqrt{1-\alpha}\sum_{j=2}^{k+1}\sqrt{\alpha}^{k+1-j}\mbf W_j, \nonumber
\end{align}
where $(a)$ follows from \eqref{gainmodel} and $(b)$ follows from the induction assumption.
Thus, the statement of the lemma holds for $i=k+1.$\\~\\
\underline{Conclusion}: Since both the base case and the inductive step have been proved as true, by mathematical induction the statement of the lemma holds for every $i=1\hdots n$. 
\end{proof}
\color{black}
\begin{lemma}
\label{distributiongain}
$\forall i\in\{1,\hdots,n\},$ it holds that
\begin{align}
    \vect\left(\mbf G_i\right) \sim \mc N_{\mbb C}\left(\mbf 0_{N_{R}N_{T}},\mbf K   \right),
\nonumber \end{align}
where $\mbf G_i, i=1\hdots n$ is defined in \eqref{gainmodel} with $0\leq \alpha<1.$
\end{lemma}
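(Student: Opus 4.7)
The plan is to use Lemma \ref{inductionformula} to express $\mbf G_i$ as an explicit linear combination of the independent Gaussian matrices $\mbf G_1, \mbf W_2, \ldots, \mbf W_i$, and then invoke the fact that an affine combination of independent complex Gaussian vectors with a common covariance is again complex Gaussian, with covariance obtained by summing the squared scalar coefficients. First I would dispose of the degenerate case $\alpha = 0$: the recursion \eqref{gainmodel} then reads $\mbf G_i = \mbf W_i$ for $i \geq 2$, and the claim is immediate from the i.i.d.\ assumption on $\vect(\mbf G_1), \vect(\mbf W_i)$.

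For $0 < \alpha < 1$, Lemma \ref{inductionformula} gives
\begin{align}
 \mbf G_i = \sqrt{\alpha}^{\,i-1}\mbf G_1 + \sqrt{1-\alpha}\sum_{j=2}^{i}\sqrt{\alpha}^{\,i-j}\mbf W_j. \nonumber
\end{align}
Applying $\vect(\cdot)$, which is linear, yields the same decomposition with $\vect(\mbf G_1)$ and $\vect(\mbf W_j)$ in place of the matrices. Since by assumption these vectors are mutually independent and each distributed as $\mc N_{\mbb C}(\mbf 0_{N_R N_T}, \mbf K)$, the vector $\vect(\mbf G_i)$ is a linear combination of independent centered complex Gaussians and is therefore itself a centered complex Gaussian.

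The only substantive computation is to verify that the covariance comes out to be exactly $\mbf K$. The coefficient in front of $\mbf K$ is
\begin{align}
\left(\sqrt{\alpha}^{\,i-1}\right)^{2} + (1-\alpha)\sum_{j=2}^{i}\left(\sqrt{\alpha}^{\,i-j}\right)^{2} = \alpha^{i-1} + (1-\alpha)\sum_{k=0}^{i-2}\alpha^{k}, \nonumber
\end{align}
after re-indexing $k = i-j$. Summing the finite geometric series gives $(1-\alpha)\cdot\frac{1-\alpha^{i-1}}{1-\alpha} = 1 - \alpha^{i-1}$, so the total coefficient collapses to $\alpha^{i-1} + 1 - \alpha^{i-1} = 1$. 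Hence $\vect(\mbf G_i) \sim \mc N_{\mbb C}(\mbf 0_{N_R N_T}, \mbf K)$.

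I do not anticipate a real obstacle: the proof is essentially a direct consequence of Lemma \ref{inductionformula} together with the stability of the complex Gaussian family under independent linear combinations. The only point requiring mild care is the bookkeeping of the geometric sum and the separate treatment of the boundary case $i=1$ (where the sum is empty and the claim is just the hypothesis on $\mbf G_1$) and of $\alpha = 0$ (which is outside the scope of Lemma \ref{inductionformula} but trivial directly from \eqref{gainmodel}).
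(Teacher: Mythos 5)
Your proposal is correct and follows essentially the same route as the paper: express $\mbf G_i$ via Lemma \ref{inductionformula} as a linear combination of the independent Gaussians $\mbf G_1,\mbf W_2,\hdots,\mbf W_i$, invoke closure of centered complex Gaussians under independent linear combinations, and verify via the geometric sum that the covariance coefficient equals $1$. The only cosmetic difference is your reindexing $k=i-j$ versus the paper's direct summation of $\sum_{j=2}^{i}\alpha^{i-j}$; both yield $\frac{1-\alpha^{i-1}}{1-\alpha}$, and both treatments of the boundary cases $\alpha=0$ and $i=1$ match the paper.
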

\color{black}
\begin{proof}
Clearly, the statement of the lemma holds for $\alpha=0.$ Now, let $0<\alpha<1.$ 
The statement of the lemma is valid for $i=1$.
 Let $i\in \{2,\hdots,n\}$ be fixed arbitrarily.
Let $\mbf G'_1=\sqrt{\alpha}^{i-1}\mbf G_1$ and $\mbf W'_j=\sqrt{1-\alpha}\sqrt{\alpha}^{i-j}\mbf W_j$ for every $j\in\{2,\hdots,i\}.$
Since $\mbf G_1$ and $\mbf W_j, j=2,\hdots,n$ are  independent, it follows that $\mbf G'_1$ and $\bs W'_j, j=2,\hdots,n$ are also independent.
Since $\vect\left( \mbf G_1\right) \sim \mc N_{\mbb C}\left(\bs{0}_{N_{R}N_{T}},\mbf K\right)$ and $\vect\left(\mbf W_j\right) \sim \mc N_{\mbb C}\left(\mbf 0_{N_{R}N_{T}},\mbf K\right)$ for every $j\in\{2,\hdots i\},$ it follows that
\begin{align}
    \vect\left(\mbf G'_1\right) \sim \mc N_{\mbb C}\left(\mbf 0_{N_{R}N_{T}}, \alpha^{i-1}\mbf K  \right)
\nonumber \end{align}
and that for every $j \in \{2,\hdots,i\}$
\begin{align}
    \vect\left(\bs W'_j\right) \sim \mc N_{\mbb C}\left(\mbf 0_{N_{R} N_{T}},\left(1-\alpha\right)\alpha^{i-j} \mbf K   \right).
\nonumber \end{align}
Now, from Lemma \ref{inductionformula}, it follows that 
\begin{align}
    \mbf G_i =\mbf G'_1 +\sum_{j=2}^{i} \mbf W'_j.
\nonumber \end{align}
As a result,
\begin{align}
   \vect\left( \mbf G_i \right) \sim \mc N_{\mbb C}\left(\mbf 0_{N_{R} N_{T}},\left[\alpha^{i-1}+\left(1-\alpha\right)\sum_{j=2}^{i} \alpha^{i-j}  \right]\mbf K  \right).  
\nonumber \end{align}
For $0<\alpha<1,$ we have
\begin{align}
    \sum_{j=2}^{i} \alpha^{i-j} 
    &=\alpha^{i}\sum_{j=2}^{i}\left(\frac{1}{\alpha}\right)^{j} \nonumber \\
    &=\alpha^{i}\left(\frac{1}{\alpha}\right)^{2}\frac{1-(\frac{1}{\alpha})^{i-1}}{1-\frac{1}{\alpha}} \nonumber \\
    &=\frac{\alpha^{i}-\alpha}{\alpha^2-\alpha} \nonumber \\
    &=\frac{1-\alpha^{i-1}}{1-\alpha}.\nonumber
\nonumber \end{align}
It follows that
\begin{align}
    \alpha^{i-1}+\left(1-\alpha\right)\sum_{j=2}^{i}\alpha^{i-j}=1.
\nonumber \end{align} 
This yields
    \begin{align}
   \vect\left( \mbf G_i \right) \sim \mc N_{\mbb C}\left(\mbf 0_{N_{R} N_{T}},\mbf K \right)   \quad \forall i \in \{1,\hdots n\}.
\nonumber \end{align}
\end{proof}
\color{black}
\begin{lemma}
\label{twoindexgainformula}
Let $i_1,i_2\in \{1,\hdots, n\}.$ Assume without loss of generality that $i_1<i_2.$ consider the gain model presented in \eqref{gainmodel}. Then, for $0<\alpha<1,$  it holds that
\begin{align}
    \mbf G_{i_{2}}=\sqrt{\alpha}^{i_{2}-i_{1}}\mbf G_{i_{1}}+\sqrt{1-\alpha}\sum_{j=i_{1}+1}^{i_{2}} \sqrt{\alpha}^{i_{2}-j} \mbf W_j.
\nonumber \end{align}
\end{lemma}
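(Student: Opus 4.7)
The plan is to reduce Lemma \ref{twoindexgainformula} directly to Lemma \ref{inductionformula}, which already expresses every $\mbf G_i$ as a linear combination of $\mbf G_1$ and the innovations $\mbf W_2,\hdots,\mbf W_i$. Since both $\mbf G_{i_1}$ and $\mbf G_{i_2}$ admit such representations, the desired identity is obtained by eliminating $\mbf G_1$ between them.

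Concretely, I would first apply Lemma \ref{inductionformula} at index $i_2$ to write
\begin{align}
\mbf G_{i_2}=\sqrt{\alpha}^{i_2-1}\mbf G_1+\sqrt{1-\alpha}\sum_{j=2}^{i_2}\sqrt{\alpha}^{i_2-j}\mbf W_j, \nonumber
\end{align}
and then split the sum as $\sum_{j=2}^{i_1}+\sum_{j=i_1+1}^{i_2}$ (the first piece being empty if $i_1=1$). Next I would apply Lemma \ref{inductionformula} at index $i_1$, multiply the resulting identity by $\sqrt{\alpha}^{i_2-i_1}$, and note that this produces exactly
\begin{align}
\sqrt{\alpha}^{i_2-i_1}\mbf G_{i_1}=\sqrt{\alpha}^{i_2-1}\mbf G_1+\sqrt{1-\alpha}\sum_{j=2}^{i_1}\sqrt{\alpha}^{i_2-j}\mbf W_j, \nonumber
\end{align}
using $\sqrt{\alpha}^{i_2-i_1}\cdot\sqrt{\alpha}^{i_1-1}=\sqrt{\alpha}^{i_2-1}$ and $\sqrt{\alpha}^{i_2-i_1}\cdot\sqrt{\alpha}^{i_1-j}=\sqrt{\alpha}^{i_2-j}$. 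Subtracting this from the expression for $\mbf G_{i_2}$ eliminates both $\mbf G_1$ and the innovation terms with $j\leq i_1$, leaving precisely the claimed formula.

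As an alternative I would keep in mind a direct induction on the gap $d=i_2-i_1\geq 1$, with $i_1$ held fixed: the base case $d=1$ is exactly the recursion \eqref{gainmodel}, and the inductive step applies \eqref{gainmodel} to $\mbf G_{i_1+d+1}=\sqrt{\alpha}\mbf G_{i_1+d}+\sqrt{1-\alpha}\mbf W_{i_1+d+1}$ and substitutes the inductive hypothesis, mirroring the calculation already carried out in the proof of Lemma \ref{inductionformula}. I expect no genuine obstacle here; the only bookkeeping point to handle carefully is the shift of exponents, in particular verifying that $\sqrt{\alpha}^{i_2-i_1}\sqrt{\alpha}^{i_1-j}=\sqrt{\alpha}^{i_2-j}$ and that the summation range $j=i_1+1,\hdots,i_2$ aligns correctly after the cancellation. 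Given that Lemma \ref{inductionformula} is already established, the subtraction argument yields the result in essentially one line of algebra.
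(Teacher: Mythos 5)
Your proposal is correct and matches the paper's own proof: the paper likewise invokes Lemma \ref{inductionformula} at both indices, scales the $i_1$ expression by $\sqrt{\alpha}^{i_2-i_1}$, and subtracts to cancel $\mbf G_1$ and the innovations with $j\leq i_1$, leaving exactly the claimed sum. The exponent bookkeeping you flag ($\sqrt{\alpha}^{i_2-i_1}\sqrt{\alpha}^{i_1-j}=\sqrt{\alpha}^{i_2-j}$) is precisely the step the paper carries out, so nothing further is needed.
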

\color{black}
\begin{proof}
By Lemma \ref{inductionformula}, it holds that
  \begin{align}
        \mbf G_{i_{2}}=\sqrt{\alpha}^{i_{2}-1}\mbf G_1+\sqrt{1-\alpha}\sum_{j=2}^{i_{2}}\sqrt{\alpha}^{i_{2}-j}\mbf W_j
    \nonumber  \end{align}
and that
 \begin{align}
        \mbf G_{i_{1}}=\sqrt{\alpha}^{i_{1}-1}\mbf G_1+\sqrt{1-\alpha}\sum_{j=2}^{i_{1}}\sqrt{\alpha}^{i_{1}-j}\mbf W_j.
    \nonumber \end{align}
    Thus
\begin{align}
&\mbf G_{i_{2}}-\sqrt{\alpha}^{i_{2}-i_{1}}\mbf G_{i_{1}} \nonumber \\
&=\sqrt{\alpha}^{i_{2}-1}\mbf G_1+\sqrt{1-\alpha}\sum_{j=2}^{i_{2}}\sqrt{\alpha}^{i_{2}-j}\mbf W_j-\sqrt{\alpha}^{i_{2}-i_{1}}\left[\sqrt{\alpha}^{i_{1}-1}\mbf G_1+\sqrt{1-\alpha}\sum_{j=2}^{i_{1}}\sqrt{\alpha}^{i_{1}-j}\mbf W_j\right] \nonumber \\
&=\sqrt{\alpha}^{i_{2}-1}\mbf G_1+\sqrt{1-\alpha}\sum_{j=2}^{i_{2}}\sqrt{\alpha}^{i_{2}-j}\mbf W_j-\sqrt{\alpha}^{i_{2}-1}\mbf G_1-\sqrt{1-\alpha}\sum_{j=2}^{i_{1}}\sqrt{\alpha}^{i_{2}-j}\mbf W_j \nonumber \\
&=\sqrt{1-\alpha}\sum_{j=i_{1}+1}^{i_{2}} \sqrt{\alpha}^{i_{2}-j} \mbf W_j.
\nonumber \end{align}
\end{proof}
\color{black}
\subsection{Achievable Rate and Capacity}
Next, we define an achievable rate for the channel $W_{\mbf G^n}$ and the corresponding  capacity.
For this purpose, we begin by providing the definition of a  transmission-code for $W_{\mbf G^n}.$
\begin{definition}
\label{defcode}
A transmission-code $\Gamma$ of length $n$ and size\footnote{\text{This is the same notation used in} \cite{codingtheorems}.} $\lVert \Gamma \rVert $ with average power constraint $P$ for the channel $W_{\mbf G^n}$ is a family of pairs $\left\{(\mbf t_\ell,\setd_\ell^{(\mbf g^n)}):\mbf g^n \in \left(\mbb C^{N_R\times N_T}\right)^{n}, \quad \ell=1,\ldots,\lVert \Gamma \rVert \right\}$ such that for all $\ell,j \in \{1,\ldots,\lVert \Gamma \rVert\}$ and all $\mbf g^n$ for which $\mbf g^n \in \left(\mbb C^{N_R\times N_T}\right)^{n},$ we have:
\begin{align}
& \mbf t_\ell \in \mbb C^{N_{T}\times n},\quad \setd_\ell^{(\mbf g^n)} \subset \mbb C^{N_{R} \times n}, \nonumber \\
&\frac{1}{n}\sum_{i=1}^{n}\bs{t}_{\ell,i}^{H} \bs{t}_{\ell,i} \leq P \quad  \mbf t_\ell=(\bs{t}_{\ell,1},\hdots,\bs{t}_{\ell,n}), \label{powerconstraint} \\
&\setd_\ell^{(\mbf g^n)} \cap \setd_j^{(\mbf g^n)}= \emptyset,\quad \ell \neq j. \nonumber
\end{align}
Here, $\bs{t}_\ell, \ \ell=1,\ldots,\lVert \Gamma \rVert$  and $\setd^{(\mbf g^n)}_\ell, \  \ell=1,\ldots,\lVert \Gamma \rVert,$ are the codewords and the decoding regions, respectively.

\end{definition}
\color{black}
\begin{remark}
Since we do not assume any channel state information at the transmitter side, the codewords $\mbf{t}_\ell, \ell=1,\hdots,\lVert \Gamma \rVert,$ do not depend on the gain sequence.
\end{remark}
\begin{definition}
\label{achievrate}
 A real number $R$ is called an \textit{achievable} \textit{rate} of the channel $W_{\mbf G^n}$ if for every $\theta,\delta>0$ there exists a code sequence $(\Gamma_n)_{n=1}^\infty$, where each code $\Gamma_n$ of length $n$ is defined according to Definition \ref{defcode}, such that
    \[
        \frac{\log\lVert \Gamma_n \rVert}{n}\geq R-\delta
    \]
    and
 \[
e_{\max}(\Gamma_n)=\underset{\ell\in\{1\hdots\lVert\Gamma_n \rVert \} }{\max} \mbb E \left[W_{\mbf G^n}({\setd_\ell^{(\mbf G^n)_{c}}}|\bs{t}_\ell)\right]\leq\theta
    \]
    for sufficiently large $n$.
\end{definition}
\color{black}
\begin{definition}
The supremum of all achievable rates defined according to Definition \ref{achievrate} is called the capacity of the fading channel $W_{\mbf G^n}$ and is denoted by $C(P,N_R\times N_T)$.
\end{definition}
\subsection{Main Result}
In this section, we present the main result of our work, which is a single-letter characterization of the time-varying MIMO Gauss-Markov Rayleigh fading channel. This is illustrated in the following theorem.
\color{black}
\begin{theorem}
\label{single-letter characterization}
Let $\mbf G$ be any random matrix such that $\vect(\mbf G)\sim \mc N_{\mbb C}(\bs{0}_{N_RN_T},\mbf K).$
A single-letter characterization of the capacity of the channel in \eqref{channelmodel.} with gain model in \eqref{gainmodel} with $0\leq\alpha<1$ is
\begin{align}
  C(P,N_R\times N_T)=\underset{\mbf Q \in \mc Q_ {(P,N_T)}}{\mathrm{\max}}\mbb E \left[\log\det\left(\mathbf{I}_{N_{R}}+\frac{1}{\sigma^2}\mathbf{G}\mbf Q\mathbf{G}^{H}\right)\right].
  \label{capacityformula}
  \end{align}
\end{theorem}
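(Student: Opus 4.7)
The plan is to prove Theorem \ref{single-letter characterization} by establishing achievability and the converse separately, both exploiting the stationarity of $\mbf G^n$ established in Lemma \ref{distributiongain}.

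\textbf{Achievability.} Since CSIR makes the pair $(\bs Z^n,\mbf G^n)$ the channel output, I would invoke the inf-information-rate capacity formula of Verd\'u and Han \cite{verduhan} and test it with i.i.d.\ Gaussian inputs $\bs T_i \sim \mc N_{\mbb C}(\bs 0_{N_T},\mbf Q^\ast)$, where $\mbf Q^\ast$ attains the maximum in \eqref{capacityformula}. Independence of $\bs T^n$ from $\mbf G^n$ and the memoryless noise let the normalized information density factor as
\begin{align}
\frac{1}{n}\sum_{i=1}^{n}\log\frac{p_{\bs Z_i\mid \bs T_i,\mbf G_i}(\bs Z_i\mid \bs T_i,\mbf G_i)}{p_{\bs Z_i\mid \mbf G_i}(\bs Z_i\mid \mbf G_i)}.
\nonumber \end{align}
By Lemma \ref{distributiongain}, each summand has mean $\mbb E[\log\det(\mbf I_{N_R}+\sigma^{-2}\mbf G\mbf Q^\ast \mbf G^H)]$, matching the right-hand side of \eqref{capacityformula}. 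Convergence in probability of this empirical average to its mean, combined with the direct part of the Verd\'u--Han theorem, then yields the lower bound; that concentration will be supplied by the variance bound derived in Section \ref{prooflemma} and Chebyshev's inequality.

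\textbf{Converse.} For any sequence of codes of rate $R$, Fano's inequality together with the data-processing inequality gives $R \le \tfrac{1}{n}I(\bs T^n;\bs Z^n,\mbf G^n) + o(1)$. Since $\bs T^n$ is independent of $\mbf G^n$, this equals $\tfrac{1}{n}I(\bs T^n;\bs Z^n \mid \mbf G^n)$. Using the memoryless structure of the noise (so that, conditionally on $(\bs T^n,\mbf G^n)$, the outputs $\bs Z_i$ are independent) and the fact that conditioning reduces entropy on the output side, I would bound this quantity by $\tfrac{1}{n}\sum_i I(\bs T_i;\bs Z_i\mid \mbf G_i)$. A standard Gaussian maximum-entropy argument then yields $I(\bs T_i;\bs Z_i\mid \mbf G_i)\le \mbb E[\log\det(\mbf I_{N_R}+\sigma^{-2}\mbf G_i \mbf Q_i \mbf G_i^H)]$ with $\mbf Q_i=\mbb E[\bs T_i \bs T_i^H]$. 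Averaging over $i$, applying Jensen's inequality to the concave map $\mbf Q\mapsto \mbb E[\log\det(\mbf I_{N_R}+\sigma^{-2}\mbf G\mbf Q\mbf G^H)]$, and replacing $\mbf G_i$ by $\mbf G$ via Lemma \ref{distributiongain}, the bound reduces to $\mbb E[\log\det(\mbf I_{N_R}+\sigma^{-2}\mbf G\bar{\mbf Q}\mbf G^H)]$, where $\bar{\mbf Q}=\tfrac{1}{n}\sum_i \mbf Q_i$ satisfies $\mathrm{tr}(\bar{\mbf Q})\le P$ by \eqref{powerconstraint} and hence lies in $\mc Q_{(P,N_T)}$.

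\textbf{Main obstacle.} The converse is essentially routine once the single-letter bound on $I(\bs T_i;\bs Z_i\mid \mbf G_i)$ is in place. The real difficulty lies in the achievability concentration step: the summands of the information density depend on the correlated gain sequence $\mbf G^n$, so the classical i.i.d.\ law of large numbers does not apply, and one cannot invoke off-the-shelf Shannon-theoretic arguments. The crux is therefore the variance estimate of Section \ref{prooflemma}, which must exploit the exponential decay of correlations in the Gauss--Markov chain (made explicit by Lemma \ref{twoindexgainformula} and the assumption $0\le \alpha<1$) to show that the variance of the empirical information density grows sub-quadratically in $n$, so that Chebyshev's inequality delivers vanishing deviation probabilities and closes the achievability argument.
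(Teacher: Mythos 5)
Your converse matches the paper's converse essentially step for step (Fano, data processing, chain rule, conditioning reduces entropy, the Gaussian single-letter bound, Jensen via concavity of $\log\det$, and the trace bound on $\bar{\mbf Q}$), and your achievability rests on the same key ingredient as the paper's: concentration of the normalized information density, supplied by the variance bound of Lemma \ref{boundvariance} (which exploits the $\sqrt{\alpha}^{\lvert i-k\rvert}$ decay of correlations) together with Chebyshev's inequality. The packaging differs only slightly: you route achievability through the Verd\'u--Han inf-information-rate formula, while the paper applies Feinstein's lemma with input constraints (Lemma \ref{feinsteinlemma}) directly; by itself that is a cosmetic difference.

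There is, however, a genuine gap in your treatment of the power constraint. Definition \ref{defcode} imposes the per-codeword constraint \eqref{powerconstraint}, and the unconstrained Verd\'u--Han formula you invoke does not account for it. You propose i.i.d.\ inputs with covariance $\mbf Q^\ast$ attaining the maximum in \eqref{capacityformula}; such a maximizer typically has $\mathrm{tr}(\mbf Q^\ast)=P$, in which case $\tfrac{1}{n}\sum_{i=1}^{n}\lVert \bs T_i\rVert^2$ exceeds $P$ with probability approaching $1/2$, so the randomly generated codewords violate \eqref{powerconstraint} with non-vanishing probability and your argument does not produce a code satisfying Definition \ref{defcode}. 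The paper closes exactly this hole: it first shows $\phi$ is continuous on the compact set $\mc Q_{(P,N_T)}$ (dominated convergence plus Lemma \ref{boundedmatrixnorm}), backs the input covariance off to $\tilde{\mbf Q}$ with $\mathrm{tr}(\tilde{\mbf Q})=P-\beta$ at an arbitrarily small rate loss $\epsilon$, and then applies Feinstein's lemma with the constraint set $E_n$ together with the Chernoff-type bound of Lemma \ref{probnotinconstraint}, which yields $\mbb P\left[\bs T^n\notin E_n\right]\le 2^{-n\hat{\beta}}$. You would need this back-off-plus-expurgation step (or a cost-constrained version of the Verd\'u--Han direct part) to make your achievability valid; the concentration argument you correctly identify as the crux is necessary but not sufficient.
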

\color{black}
The proof of Theorem \ref{single-letter characterization} is provided in Section \ref{proofthm}.
\begin{remark}
The capacity in \eqref{capacityformula} is the same as the capacity of i.i.d. MIMO Rayleigh fading channels. Under the CSIR assumption, the memory factor $\alpha$ has no influence on the capacity. The CSIR assumption is critical in deriving the closed-form expression for the capacity. In the absence of CSIR, no single-letter formula for the capacity exists even for SISO finite-state Markov fading channels\cite{simulationbased}\cite{optinfrate}. 
\end{remark}
\subsection{Auxiliary Result}
For the proof of Theorem \ref{single-letter characterization}, we require the following auxiliary result on the normalized information density of $W_{\mbf G^n}.$
\color{black}
\begin{lemma}
\label{boundvariance}
Let $\bs{T}^{n}=\left(\bs{T}_1,\hdots,\bs{T}_n\right)$ be an $n$-length input sequence of the channel $W_{\mbf G^n}$ in \eqref{channelmodel.} with gain model in \eqref{gainmodel} such that $0\geq \alpha<1$ and such that the $\bs{T}_{i}s$ are i.i.d., where $\bs{T}_i \sim \mc N\left(\bs{0}_{N_{T}},\tilde{\mbf Q}\right),\ i=1\hdots n,$ and $\tilde{\mbf Q} \in \mc Q_{(P,N_T)}.$ Let $\bs{Z}^{n}=\left(\bs{Z}_1,\hdots,\bs{Z}_n\right)$  be the corresponding  output sequence. Then, it holds that
\begin{align}
\mathrm{var}\left(\frac{i\left(\bs{T}^n;\bs{Z}^n,\mbf G^n\right)}{n} \right)\leq \kappa(n,\alpha),
\nonumber \end{align}
where  
\[
    \kappa(n,\alpha) =
        \begin{cases}
      \frac{c''}{n}, & \text{for } \alpha=0 \\
        \frac{2c'}{n(1-\sqrt{\alpha})}+ \frac{c''}{n}, & \text{for } 0<\alpha<1
\end{cases}
  \]
for some $c',c''>0$ and where  $\underset{n\rightarrow \infty}{\lim}\kappa(n,\alpha)=0$ for any $0\leq \alpha <1.$
\end{lemma}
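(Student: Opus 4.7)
The plan is to express the normalized information density as an empirical mean $\frac{1}{n}\sum_{i=1}^n X_i$ of per-symbol information densities, and then split its variance into a sum of marginal variances, which yields the $c''/n$ term, and a double sum of covariances that decay geometrically in $|i-j|$, which yields the $c'/[n(1-\sqrt{\alpha})]$ term.

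First, since $\bs T^n$ is independent of $\mbf G^n$, the information density $i(\bs T^n;\bs Z^n,\mbf G^n)$ coincides with $i(\bs T^n;\bs Z^n\mid \mbf G^n)$. Conditional on $\mbf G^n$, the pairs $(\bs T_i,\bs Z_i)$ are jointly independent across $i$, because both $\bs T^n$ and $\bs \xi^n$ are i.i.d.\ and mutually independent. Hence
\[
i(\bs T^n;\bs Z^n,\mbf G^n) \;=\; \sum_{i=1}^n X_i, \qquad X_i := i(\bs T_i;\bs Z_i \mid \mbf G_i).
\]
By Lemma \ref{distributiongain} the matrices $\mbf G_i$ are identically distributed, so the $X_i$ share a common law and $\mathrm{var}(X_i)=v$ is independent of $i$. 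Writing $X_i$ through the explicit complex-Gaussian densities of $\bs Z_i\mid(\bs T_i,\mbf G_i)$ and $\bs Z_i\mid\mbf G_i$ expresses it as $\phi(\mbf G_i)+R_i$, where $\phi(\mbf G):=\log\det(\mbf I_{N_R}+\tfrac{1}{\sigma^2}\mbf G\tilde{\mbf Q}\mbf G^H)$ and $R_i$ is a quadratic form in the Gaussians $\bs T_i,\bs\xi_i$ conditional on $\mbf G_i$. Standard Gaussian and chi-squared moment computations, combined with $\mathrm{tr}(\tilde{\mbf Q})\leq P$ and the fixed covariance $\mbf K$, bound $v$ by an $n$-independent constant, which contributes at most $c''/n$ to the variance of the empirical mean.

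For the cross terms, the key observation is that, conditional on $\mbf G^n$, the variables $X_i$ and $X_j$ are independent, since the i.i.d.\ noise and i.i.d.\ inputs are mutually independent. The total covariance formula then gives
\[
\text{cov}(X_i,X_j) \;=\; \text{cov}(\mbb E[X_i\mid\mbf G_i],\,\mbb E[X_j\mid\mbf G_j]) \;=\; \text{cov}(\phi(\mbf G_i),\phi(\mbf G_j)).
\]
For $\alpha=0$ the $\mbf G_i$ are i.i.d., so every such covariance vanishes and only the $c''/n$ term survives. For $0<\alpha<1$, Lemma \ref{twoindexgainformula} yields the orthogonal decomposition $\mbf G_j=\sqrt{\alpha}^{\,j-i}\mbf G_i+\mbf U_{i,j}$ with $\mbf U_{i,j}$ Gaussian and independent of $\mbf G_i$. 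Because $\text{cov}(\phi(\mbf G_i),\phi(\mbf U_{i,j}))=0$, one has
\[
\text{cov}(\phi(\mbf G_i),\phi(\mbf G_j)) \;=\; \text{cov}(\phi(\mbf G_i),\,\phi(\mbf G_j)-\phi(\mbf U_{i,j})),
\]
and the fundamental theorem of calculus gives
\[
\phi(\mbf G_j)-\phi(\mbf U_{i,j}) \;=\; \sqrt{\alpha}^{\,j-i}\int_0^1 \langle \nabla\phi(\mbf U_{i,j}+s\sqrt{\alpha}^{\,j-i}\mbf G_i),\,\mbf G_i\rangle\, ds.
\]
A Cauchy--Schwarz bound, together with moment estimates on $\|\nabla\phi\|$ (controlled by $\tilde{\mbf Q}$ and $\sigma^{-2}$) and $\|\mbf G_i\|$, delivers $|\text{cov}(\phi(\mbf G_i),\phi(\mbf G_j))| \leq c\,\sqrt{\alpha}^{\,j-i}$ for a constant $c$ independent of $n,i,j$; summing over $1\leq i<j\leq n$ produces the geometric sum $\tfrac{2c\sqrt{\alpha}}{n(1-\sqrt{\alpha})}$, which accounts for the $c'/[n(1-\sqrt{\alpha})]$ term.

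The main obstacle is this geometric-decay step. The function $\phi$ is not globally Lipschitz, as its gradient grows with $\|\mbf G\|$, so the Cauchy--Schwarz control requires finite higher-order moments of polynomial functionals of the complex Gaussian $\mbf G_i$ and of its independent complement $\mbf U_{i,j}$. Proving that these moments are uniformly bounded in $n,i,j$---exploiting the invariance of the marginal distribution from Lemma \ref{distributiongain} and the additive decomposition of Lemma \ref{twoindexgainformula}---is the technical core of the proof. Combining the diagonal bound with the off-diagonal geometric sum then yields $\mathrm{var}(i(\bs T^n;\bs Z^n,\mbf G^n)/n)\leq \kappa(n,\alpha)$ in the stated form, with $\kappa(n,\alpha)\to 0$ as $n\to\infty$ for every $0\leq\alpha<1$.
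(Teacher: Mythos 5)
Your proposal is correct and follows the same skeleton as the paper: tensorize the information density into per-letter terms (the paper's Lemma \ref{infdensity}), split the variance into a diagonal part bounded by a constant over $n$ and off-diagonal cross terms, reduce the cross terms via conditional independence given the gains to covariances of $\phi(\mbf G_i)=\log\det\left(\mbf I_{N_R}+\tfrac{1}{\sigma^2}\mbf G_i\tilde{\mbf Q}\mbf G_i^{H}\right)$, exploit the AR(1) decomposition $\mbf G_j=\sqrt{\alpha}^{\,j-i}\mbf G_i+\mbf U_{i,j}$ from Lemma \ref{twoindexgainformula}, and sum a geometric series to get the $\tfrac{2c'}{n(1-\sqrt{\alpha})}$ term. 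Where you genuinely diverge is in the key decay estimate (the paper's Lemma \ref{meanproducti1i2}): the paper does not differentiate $\phi$ at all; instead it couples $\mbf G_{i_2}$ with an auxiliary matrix $\tilde{\mbf W}=\mbf S+\sqrt{\alpha}^{\,i_2-i_1}\tilde{\mbf G}$ built from an independent copy $\tilde{\mbf G}$, so that $\tilde{\mbf W}$ has the same law as $\mbf G_{i_2}$ but is independent of $\mbf G_{i_1}$, and then compares $\log\det$ terms through matrix inequalities ($\mbf A\preceq\lVert\mbf A\rVert\mbf I$, and $\log\det(\mbf A+\mbf B)\le\log\det\mbf A+\log\det(\mbf I+\lambda_{\min}(\mbf A)^{-1}\mbf B)$, Lemma \ref{bounddeterminantsum}), with the residual expectation handled in Lemma \ref{meanbounded}; also, rather than invoking the law of total covariance, the paper bounds $\mbb E[(\sum_i X_i)^2]$ from above and $(\mbb E[\sum_i X_i])^2$ from below separately. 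Your route — law of total covariance plus a fundamental-theorem-of-calculus/mean-value bound with $\lVert\nabla\phi(\mbf G)\rVert\lesssim \tfrac{P}{\sigma^2}\lVert\mbf G\rVert$ (using $\lVert(\mbf I+\tfrac{1}{\sigma^2}\mbf G\tilde{\mbf Q}\mbf G^{H})^{-1}\rVert\le 1$) and Cauchy--Schwarz — is sound and arguably more conceptual, since it makes the geometric decay of $\mathrm{cov}(\phi(\mbf G_i),\phi(\mbf G_j))$ transparent and dispenses with the auxiliary coupling; its price is that you must explicitly verify differentiability of $\phi$ as a function of the real and imaginary parts, the gradient bound, and the uniform moment bounds on $\lVert\mbf G_i\rVert$ and $\lVert\mbf U_{i,j}\rVert$ (which the paper's Lemma \ref{boundedmatrixnorm} supplies, noting $\mathrm{var}(\phi(\mbf G_i))<\infty$ from $0\le\phi(\mbf G)\le\tfrac{PN_R}{\ln(2)\sigma^2}\lVert\mbf G\rVert^2$). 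The paper's approach buys a purely algebraic, derivative-free argument at the cost of the extra coupling construction and the auxiliary Lemmas \ref{bounddeterminantsum} and \ref{meanbounded}; both yield the same $c'\sqrt{\alpha}^{\,|i-j|}$ bound and the same final $\kappa(n,\alpha)$.
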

\color{black}
The proof of Lemma \ref{boundvariance} is provided in Section \ref{prooflemma}.
\section{Proof of Theorem \ref{single-letter characterization}}
\label{proofthm}
\subsection{Direct Proof}
\label{direct} Let
\color{black}
  $$R_{\max}=\underset{\mbf Q \in \mathcal{Q}_{(P,N_T)}}{\max}\mbb E \left[\log\det\left(\mathbf{I}_{N_{R}}+\frac{1}{\sigma^2}\mathbf{G}\mbf Q\mathbf{G}^{H}\right)\right],$$
  \color{black}
  where $\mbf G \in \mbb C^{N_{R}\times N_{T}}$ is any random matrix satisfying $\vect\left(\mbf G\right)\sim \mc N_{\mbb C}\left(\bs{0}_{N_{R}N_{T}},\mbf K  \right).$
  We are going to show that
  \begin{align}
      C(P,N_R\times N_T)\geq  R_{\max}-\epsilon,
  \nonumber \end{align}
 with $\epsilon$ being an arbitrarily small positive constant.
  Let $\theta,\delta>0$ and 
  \begin{align}
  E_n=\{ \bs{t}^n=(\bs{t}_1,\hdots,\bs{t}_n)\in \mbb C^{N_{T}\times n}: \frac{1}{n}\sum_{i=1}^{n}\lVert \bs{t}_i \rVert^2\leq P\}.\label{seten}
  \end{align}
  We define for any $\mbf Q \in \mc Q_{(P,N_T)},$
  \begin{align}
      \phi(\mbf Q)=\mbb E \left[\log\det\left(\mathbf{I}_{N_{R}}+\frac{1}{\sigma^2}\mathbf{G}\mbf Q\mathbf{G}^{H}\right)\right]. \nonumber
  \end{align}
  Now notice that any  $\mbf Q \in   \mathcal{Q}_{(P,N_T)},$ we have
  \begin{align}
   &\log\det\left(\mathbf{I}_{N_{R}}+\frac{1}{\sigma^2}\mathbf{G}\mbf Q\mathbf{G}^{H}\right) \nonumber \\
      &\overset{(a)}{\leq} \log\det\left(\mathbf{I}_{N_{R}}+\frac{1}{\sigma^2}\lVert\mathbf{G}\mbf Q\mathbf{G}^{H}\rVert \mbf I_{N_{R}}\right)\nonumber \\
      &=\log\det\left(\left[1+\frac{1}{\sigma^2}\lVert\mathbf{G}\mbf Q\mathbf{G}^{H}\rVert\right] \mbf I_{N_{R}}\right) \nonumber \\
      &=N_R\log(1+\frac{1}{\sigma^2}\lVert\mathbf{G}\mbf Q\mathbf{G}^{H}\rVert)\nonumber \\
      &\leq \frac{N_{R}}{\ln(2)\sigma^2}\lVert\mathbf{G}\mbf Q\mathbf{G}^{H}\rVert \nonumber \\
      &\leq \frac{N_{R}}{\ln(2)\sigma^2} \lVert \mbf Q \rVert\lVert\mathbf{G}\rVert^2 \nonumber \\
      &\overset{(b)}{\leq} \frac{PN_{R}}{\ln(2)\sigma^2} \lVert\mathbf{G}\rVert^2, \nonumber
  \end{align}
  where \color{black} $(a)$ follows because $\mbf A \preceq \lVert \mbf A \rVert\mbf I_{n}$ for any Hermitian $\mbf A \in \mbb C^{n\times n}$ (by Lemma \ref{normposdefinite} in the Appendix) \color{black} and $(b)$ follows because $\lVert \mbf Q \rVert=\lambda_{\max}(\mbf Q)\leq \text{tr}(\mbf Q)\leq P.$
  Now, it holds that $\mbb E\left[\frac{PN_{R}}{\ln(2)\sigma^2} \lVert\mathbf{G}\rVert^2\right]<\infty$ since  $\mbb E\left[ \lVert \mbf G\rVert^2\right]<\infty$ (by Lemma \ref{boundedmatrixnorm} in the Appendix). 
  Therefore, it follows from the dominated convergence theorem that $\phi$ is continuous on the compact set $\mc Q_{(P,N_T)}.$
  Therefore, \color{black} one can find a  $\tilde{\mbf Q} \in\mc Q_{(P,N_T)}$ such that $\mathrm{tr}(\tilde{\mbf Q})=P-\beta$ for some $\beta>0$ and such that 
  \begin{align}
  \phi(\tilde{\mbf Q})\geq R_{\max}-\epsilon.
  \label{choiceoftildeQ}
\end{align}
\color{black}
We define $$\hat{P}=P-\beta$$ and
\begin{align}\hat{\beta}=\frac{\beta}{\ln(2)\hat{P}}-\log(1+\frac{\beta}{\hat{P}})>0. \label{hatbeta}
\end{align} 
Let us now introduce the following well-known lemma: 
\color{black}
\begin{lemma}{(Feinstein's Lemma with input constraints)\cite{feinstein}}
\label{feinsteinlemma}
Let $n>0$ be fixed arbitrarily. Consider any channel with random input sequence $T^n$, with corresponding random channel output sequence $Z^n$ and with information density $i(T^n;Z^n).$
Then, for any integer $\tau >0$, real number $\gamma>0$, and measurable set $E_n$, there exists a code with cardinality $\tau$, maximum error probability $\epsilon_n$ and block-length $n$, whose codewords are contained in the set $E_n,$ where $\epsilon_n$ satisfies
$$\epsilon_n\leq \mbb P \left[\frac{1}{n}i(T^n;Z^n)\leq \frac{\log\tau}{n}+\gamma  \right]+\mbb P\left[T^n\notin E_n  \right] +2^{-n\gamma}.       $$
\end{lemma}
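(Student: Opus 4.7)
The plan is to carry out the classical greedy construction due to Feinstein. Define the threshold set
\[
A=\left\{(t,z) : \tfrac{1}{n}\, i(t;z) > \tfrac{\log\tau}{n}+\gamma\right\},
\]
and, for every input $t$, its fiber $B_t=\{z : (t,z)\in A\}$. Let
\[
\epsilon^{*}= \mbb P\!\left[\tfrac{1}{n} i(T^n;Z^n)\le \tfrac{\log\tau}{n}+\gamma\right]+\mbb P[T^n\notin E_n]+2^{-n\gamma}
\]
denote the target error bound. Codewords and decoding regions are produced iteratively: at step $\ell$, having already selected $(t_1,D_1),\dots,(t_{\ell-1},D_{\ell-1})$ with $t_j\in E_n$ and the $D_j$ pairwise disjoint, I would pick any $t_\ell\in E_n$ for which the candidate decoding region $D_\ell := B_{t_\ell}\setminus\bigcup_{j<\ell} D_j$ satisfies $W(D_\ell^c \mid t_\ell)\le \epsilon^{*}$; the procedure terminates once $\ell=\tau$ codewords have been produced.

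The core of the proof is to show that this iteration cannot stall before $\tau$ is reached. Suppose toward contradiction that it halts at step $M+1$ with $M<\tau$; write $F=\bigcup_{j=1}^{M} D_j$. Maximality means that every $t\in E_n$ violates the admissibility criterion, so $W(B_t\setminus F \mid t)<1-\epsilon^{*}$. Averaging this over $T^n$ against the input law and discarding the contribution from $E_n^{c}$ gives
\[
\mbb E\!\left[W(B_{T^n}\setminus F\mid T^n)\,\mathbf{1}_{\{T^n\in E_n\}}\right]< 1-\epsilon^{*}.
\]
Decomposing the integrand as $W(B_{T^n}\mid T^n)-W(B_{T^n}\cap F\mid T^n)$, the first averaged piece equals $\mbb P[(T^n,Z^n)\in A,\,T^n\in E_n]$, which is at least $1-\mbb P[\tfrac{1}{n}i(T^n;Z^n)\le \tfrac{\log\tau}{n}+\gamma]-\mbb P[T^n\notin E_n]$, while the second averaged piece is bounded above by $\mbb E[W(F\mid T^n)]=\mbb P[Z^n\in F]$.

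To close the contradiction I would bound $\mbb P[Z^n\in F]<2^{-n\gamma}$. Since $D_\ell\subset B_{t_\ell}$, the defining property of $A$ gives $p_{Z^n\mid T^n}(z\mid t_\ell)>\tau\,2^{n\gamma}\,p_{Z^n}(z)$ on $D_\ell$; integrating and using $W(D_\ell\mid t_\ell)\le 1$ yields $\mbb P[Z^n\in D_\ell]<\tau^{-1}2^{-n\gamma}$, and disjointness together with $M\le \tau$ gives $\mbb P[Z^n\in F]<2^{-n\gamma}$. Substituting back then shows the averaged quantity is at least $1-\epsilon^{*}$, contradicting the strict inequality; hence the construction must reach $\ell=\tau$, producing a code with the claimed maximum-error guarantee.

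The main obstacle is the last step, i.e.\ converting per-codeword conditional-probability statements into a single unconditional bound on $\mbb P[Z^n\in F]$. This requires the two features of the construction to interact correctly: the threshold property of the information density on each $D_\ell$ (which replaces $p_{Z^n\mid T^n}(\cdot\mid t_\ell)$ by $p_{Z^n}(\cdot)$ at the price of the factor $\tau\,2^{n\gamma}$) and the pairwise disjointness of the $D_\ell$'s (which allows one to sum contributions without overcounting). The resulting slack $2^{-n\gamma}$ is exactly the third term on the right-hand side of the bound stated in the lemma; the first two terms appear directly from the averaging of $W(B_{T^n}\mid T^n)$ over the input law restricted to $E_n$.
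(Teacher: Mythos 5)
The paper does not prove this lemma at all: it is quoted as a known result, with a citation to Thomasian's ``Error Bounds for Continuous Channels,'' so there is no in-paper argument to compare against. Your proposal reconstructs exactly the classical Feinstein--Thomasian maximal-code argument that underlies the cited reference, and it is correct: the greedy selection with threshold set $A$, the averaging of the stalling condition restricted to $E_n$ (which produces the terms $\mbb P[\frac{1}{n}i(T^n;Z^n)\le \frac{\log\tau}{n}+\gamma]$ and $\mbb P[T^n\notin E_n]$), and the bound $\mbb P[Z^n\in F]<2^{-n\gamma}$ obtained from the information-density threshold on each $D_\ell$ together with disjointness and $M<\tau$ are precisely the ingredients needed, and they are combined in the right order to yield the contradiction. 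The only glossed points are routine ones (measurability of $t\mapsto W(B_t\setminus F\mid t)$, and degenerate cases such as the right-hand side exceeding $1$ or $E_n$ being input-null, where the statement is vacuous or the greedy step never stalls), none of which constitutes a gap.
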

\color{black}
Let $\bs{T}^n=(\bs{T}_1,\hdots,\bs{T}_n) \in \mbb C^{N_{T} \times n}$ to be the random input sequence of the channel $W_{\mbf G^n},$ where the $\bs{T}_is$ are i.i.d. such that   $\bs{T}_i \sim \mc N_{\mbb C}\left( \bs{0}_{N_{T}},\tilde{\mbf Q}\right), i=1\hdots n.$ We denote its corresponding random output sequence by $\bs{Z}^n=(\bs{Z}_1,\hdots,\bs{Z}_n).$ 
Now, we apply Lemma \ref{feinsteinlemma} for the set $E_n$ 
defined in \eqref{seten} and for $\gamma=\frac{\delta}{4}.$ 
It follows that there exists a code sequence $(\Gamma_n)_{n=1}^\infty,$ where each code $\Gamma_n$  is defined according to Definition \ref{defcode} such that
\color{black}
\begin{align}
    e_{\max}(\Gamma_n)&\leq \mbb P \left[\frac{1}{n}i(\bs{T}^n;\bs{Z}^n,\mbf G^n)  \leq \frac{1}{n}\log\lVert \Gamma_n\rVert+\frac{\delta}{4}\right] + \mbb P \left[\bs{T}^n \notin E_{n}  \right] +2^{-n\frac{\delta}{4}}, \nonumber 
 \end{align}
 where here $\mbf G^n$ is viewed as a second output sequence of $W_{\mbf G^n}$ because we assume CSIR and
\color{black}
\color{black}
where
\begin{align}
    e_{\max}(\Gamma_n)&=\underset{\ell\in\{1\hdots\lVert\Gamma_n \rVert \}}{\max} \mbb E \left[W_{\mbf G^n}({\setd_\ell^{(\mbf G^n)_c}}|\bs{t}_\ell)\right] \nonumber \\
    &=\underset{\ell\in\{1\hdots\lvert \mc M \rvert \}}{\max} \mbb E\left[\mbb P \left[\hat{M}\neq \ell|M=\ell,\mbf G^n\right]\right] \nonumber \\
    &=\underset{\ell\in\{1\hdots\lvert \mathcal{M} \rvert \}}{\max} \mbb P \left[\hat{M}\neq \ell|M=\ell\right], \nonumber
\end{align}
with $M,\hat{M}$ being the random message and the random decoded message and with $\mc M$ being the set of messages.

Choose $\lVert \Gamma_n \rVert$ such that  for sufficiently large $n$
 $$ R_{\max}-\epsilon-\delta\leq \frac{\log\lVert\Gamma_n\rVert}{n}\leq R_{\max}-\epsilon-\frac{\delta}{2}.$$
 It follows that
 \begin{align}
     \color{black} e_{\max}(\Gamma_n) \color{black}&\leq  \mbb P \left[\frac{1}{n}i(\bs{T}^n;\bs{Z}^n,\mbf G^n)  \leq R_{\max}-\epsilon-\frac{\delta}{2}\right] + \mbb P \left[\bs{T}^n \notin E_{n}  \right] +2^{-n\frac{\delta}{4}} \nonumber \\
      &\leq \color{black}\mbb P \left[\frac{1}{n}i(\bs{T}^n;\bs{Z}^n,\mbf G^n)  \leq  \phi(\tilde{\mbf Q})-\frac{\delta}{2}\right]+ \mbb P \left[\bs{T}^n \notin E_{n}  \right] +2^{-n\frac{\delta}{4}}\color{black}, \label{upperprob1}
 \end{align}
 where we used \eqref{choiceoftildeQ} in the last step.
It remains to find upper-bounds for $ \mbb P \left[\frac{1}{n}i(\bs{T}^n;\bs{Z}^n,\mbf G^n)  \leq  \phi(\tilde{\mbf Q})-\frac{\delta}{2}\right]$ and for $\mbb P \left[\bs{T}^n \notin E_{n}  \right]$ that vanish as $n$ goes to infinity.
\subsubsection{Upper-bound for $\mbb P\left[ \bs{T}^n \notin E_n\right]$}
We will prove  that $$  \color{black} \mbb P\left[ \bs{T}^n \notin E_n\right]\leq 2^{-n\hat{\beta}} \color{black},$$
where $\hat{\beta}$ is defined in \eqref{hatbeta}.
For this purpose, we will introduce and prove the following lemma:
\color{black}
\begin{lemma}
\label{probnotinconstraint}
Let $\bs{X}_i,$ $i=1, \hdots,n$ be i.i.d. $N$-dimensional complex Gaussian random vectors with mean $\bs{0}_N$ and covariance matrix $\mbf O$ whose trace is smaller than or equal to $\rho$.
Then, for any $\delta>0$
\begin{align}
    \mbb P\left[\sum_{i=1}^{n} \lVert \bs{X}_i\rVert^2\geq n(\rho+\delta)\right]\leq \left[(1+\frac{\delta}{\rho})2^{-\frac{\delta}{\ln(2)\rho}}\right]^{n}, \nonumber
\end{align}
where 
\begin{align}
    \lVert \bs{X}_i \rVert^2=\sum_{j=1}^{N} |\bs{X}_i^j|^2 \nonumber
\end{align}
and
\begin{align}
    \bs{X}_i=(\bs{X}_i^1,\hdots,\bs{X}_i^N)^{T}. \nonumber
\end{align}
\end{lemma}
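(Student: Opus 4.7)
The plan is to apply a Chernoff bound to $S_n:=\sum_{i=1}^n \lVert \bs{X}_i\rVert^2$ and then optimize the exponential tilt parameter. First, I would compute the moment generating function of $\lVert \bs{X}_1\rVert^2$. Using the spectral decomposition $\mbf O=\mbf U\bs{\Lambda}\mbf U^{H}$ with eigenvalues $\lambda_1,\ldots,\lambda_N\geq 0$ satisfying $\sum_j\lambda_j=\mathrm{tr}(\mbf O)\leq\rho$, and realizing $\bs{X}_1$ as $\mbf O^{1/2}\bs{Z}$ with $\bs{Z}$ having i.i.d.\ $\mc N_{\mbb C}(0,1)$ entries, the unitary invariance of $\bs{Z}$ gives $\lVert \bs{X}_1\rVert^2=\sum_{j=1}^N\lambda_j|W_j|^2$, where the $|W_j|^2$ are i.i.d.\ unit-mean exponentials. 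Hence for every $0<s<1/\lambda_{\max}(\mbf O)$,
\[
\mbb E\!\left[e^{s\lVert \bs{X}_1\rVert^2}\right]=\prod_{j=1}^N (1-s\lambda_j)^{-1}.
\]

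Next, I would collapse this product to a single factor using only the trace constraint. The map $\lambda\mapsto -\ln(1-s\lambda)$ is convex on $[0,1/s)$, so $(\lambda_1,\ldots,\lambda_N)\mapsto \sum_j -\ln(1-s\lambda_j)$ is a convex function on the simplex $\{\lambda_j\geq 0:\ \sum_j \lambda_j\leq \rho\}$ and therefore attains its maximum at a vertex where one coordinate equals $\rho$ and the others vanish. This yields $\prod_j (1-s\lambda_j)^{-1}\leq (1-s\rho)^{-1}$ for every $s\in(0,1/\rho)$. Combined with independence of the $\bs{X}_i$'s, this gives $\mbb E[e^{sS_n}]\leq (1-s\rho)^{-n}$.

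Finally, Markov's inequality applied to $e^{sS_n}$ yields
\[
\mbb P\!\left[S_n\geq n(\rho+\delta)\right]\leq \left[\frac{e^{-s(\rho+\delta)}}{1-s\rho}\right]^n.
\]
Setting the derivative in $s$ of the exponent equal to zero produces the stationary point $s^{\ast}=\delta/\bigl(\rho(\rho+\delta)\bigr)\in(0,1/\rho)$, at which $1-s^{\ast}\rho=\rho/(\rho+\delta)$ and $s^{\ast}(\rho+\delta)=\delta/\rho$. Substituting and converting $e^{-\delta/\rho}=2^{-\delta/(\rho\ln 2)}$ reproduces the claimed bound $\bigl[(1+\delta/\rho)\,2^{-\delta/(\rho\ln 2)}\bigr]^n$.

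The main obstacle is the middle step: one must verify that the trace constraint alone (rather than a stronger assumption on $\lambda_{\max}(\mbf O)$) suffices to reduce $\prod_j(1-s\lambda_j)^{-1}$ to $(1-s\rho)^{-1}$. The convexity-at-a-vertex argument handles this cleanly and simultaneously guarantees that the optimal $s^{\ast}$ lies inside the MGF's domain of finiteness. Everything else is a textbook Chernoff calculation.
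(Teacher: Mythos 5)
Your proof is correct and takes essentially the same route as the paper: a Chernoff bound on $\sum_i\lVert\bs{X}_i\rVert^2$, the moment generating function $\prod_{j}(1-s\lambda_j)^{-1}$ obtained from the eigendecomposition of $\mbf O$, a reduction to the single factor $(1-s\rho)^{-1}$ using only the trace constraint, and the optimal tilt $s^{\ast}=\delta/(\rho(\rho+\delta))$. The only difference is in one small step: the paper collapses the product via the elementary inequality $\prod_{j}(1-s\lambda_j)\geq 1-s\sum_j\lambda_j\geq 1-s\rho$, whereas you use convexity of $\lambda\mapsto-\ln(1-s\lambda)$ and maximization at a vertex of the simplex; both are valid.
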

\color{black}
\begin{proof}
Let $\bs{X}$ be a random vector with the same distribution as each of the $\bs{X}_i$. Then
\begin{align}
&\mbb P\left[ \sum_{i=1}^{n} \lVert \bs{X}_i\rVert^2 \geq n(\rho+\delta)      \right] \nonumber \\
&=\mbb P\left[ \sum_{i=1}^{n}\lVert \bs{X}_i\rVert^2-n(\rho+\delta)\geq 0       \right] \nonumber \\
    &\leq \mbb E \left[\exp\left(\beta\left(\sum_{i=1}^{n}\lVert \bs{X}_{i}\rVert^2-n(\rho+\delta\right)    \right)            \right]\nonumber \\
    &=\left[\exp(-[\rho+\delta]\beta)\mbb E\left[ \exp\left(\beta \lVert \bs{X} \rVert^2\right)              \right]\right]^n,
    \nonumber
\end{align}
where we used the $\bs{X}_is$ are i.i.d..
By a standard calculation which follows below, one can show that
\begin{align}
    \mbb E\left[ \exp\left(\beta \lVert \bs{X}\rVert^2\right)    \right]&=\mbb E \left[\exp\left(\beta\bs{X}^H\bs{X}\right)           \right] \nonumber \\
    &=\prod_{j=1}^{N}(1-\beta\mu_j)^{-1} \quad \beta<\beta_0, \label{standardcalculation}
\end{align}
where $\mu_1,\hdots,\mu_{N}$ are the eigenvalues of $\mbf O$, and for $\beta_0=\frac{1}{\rho}\leq \frac{1}{\mu_1+\hdots+\mu_N}\leq \underset{j\in \{1,\hdots,N\}}{\min}\frac{1}{\mu_j}$  so that all the factors are positive, whether $\mbf O$ is non-singular or singular.

To prove this, we let $r$ be the rank of $\mbf O.$ It holds that $r\leq N$. We make use of the spectral decomposition theorem to express $\mbf O$ as $\mbf S_{\mbf O}^{\star} \Lambda^{\star}{S_{\mbf O}^{\star}}^{H} $, where $\Lambda^{\star}$ is a diagonal matrix whose first $r$ diagonal elements are positive and where the remaining diagonal elements are equal to zero.
Next, we let $\mbf V^{\star}=\mbf S_{\mbf O}^{\star} {\Lambda^{\star}}^{\frac{1}{2}}$ and remove the $N-r$ last columns of $\mbf V^{\star}$, which are null vectors to obtain the matrix $\mbf V.$ Then, it can be verified that $\mbf O=\mbf V \mbf V^{H}.$
We can write $\bs{X}=\mbf  V \bs{U}^\star$
where $\bs{U}^\star \sim\mathcal{N}_{\mbb C}(\bs{0},\mbf I_{r}).$
As a result:
\begin{align}
    \bs{X}^H\bs{X}={(\bs{U}^\star)}^{H}\mbf V^{H}\mbf V \bs{U}^\star. \nonumber
\end{align}
Let $\mbf S$ be a unitary matrix which diagonalizes $\mbf V^{H} \mbf V$ such that $\mbf S^{H} \mbf V^{H} \mbf V \mbf S= \text{Diag}(\mu_1,\hdots,\mu_r)$ with $\mu_1,\hdots,\mu_r$ being the positive eigenvalues of $\mbf O=\mbf V \mbf V^{H}$ in decreasing order.
One defines $\bs{U}=\mbf S^{H} \bs{U}^{\star}.$ We have
\begin{align}
    \text{cov}(\bs{U})&=\mbf S^{H} \text{cov}(\bs{U}^{\star}) \mbf S \nonumber \\
    &=\mbf S^{H} \mbf S \nonumber \\
    &=\mbf I_{r}. \nonumber
\end{align}
Therefore, it holds that $\bs{U} \sim \mathcal{N}_{\mbb C}(\bs{0},\mbf I_r).$
Since $\mbf S$ is unitary, we have
\begin{align}
    \bs{X}^{H}\bs{X}&=\left((\mbf S^{H})^{-1}\bs{U}\right)^{H} \mbf V^{H} \mbf V (\mbf S^{H})^{-1} \bs{U} \nonumber \\
    &=\bs{U}^{H}\mbf S^{H} \mbf V^{H} \mbf V \mbf S \bs{U} \nonumber \\
    &=\bs{U}^{H}\text{Diag}(\mu_1,\hdots,\mu_r)\bs{U} \nonumber \\
    &=\sum_{j=1}^{r} \mu_j |\bs{U}_{j}|^{2}.\nonumber
\end{align}
Then, we have
\begin{align}
    \mbb E\left[ \exp\left(\beta \lVert \bs{X}\rVert^2\right)    \right] &= \mbb E \left[\prod_{j=1}^{r}\exp\left(\frac{1}{2}\beta\mu_j 2|\bs{U}_j|^2  \right)             \right] \nonumber \\
    &=\prod_{j=1}^{r} \mbb E \left[ \exp\left(\frac{1}{2}\beta\mu_j 2|\bs{U}_j|^2  \right)       \right] \nonumber \\
    &=\prod_{j=1}^{N} (1-\beta\mu_j)^{-1},             \nonumber 
\end{align}
where we used that all the $\bs{U}_j$s are independent, that $ \forall j \in \{1,\hdots,r\},  2|\bs{U}_{j}|^2$ is chi-square distributed with $k=2$ degrees of freedom and with moment generating function equal to $\mbb E\left[\exp(2t|\bs{U}_j|^2)\right]=(1-2t)^{-k/2}$ for $t<\frac{1}{2}$ and that $\forall j \in \{1,\hdots,r\}$ and for $\beta<\beta_0,$ $\frac{1}{2}\beta\mu_j<\frac{1}{2}$. This proves \eqref{standardcalculation}.

Now, it holds that
$$ \prod_{i=1}^{N}(1-\beta\mu_i)\geq 1-\beta(\mu_1+\hdots+\mu_{N})\geq 1-\beta \rho.         $$This yields
\begin{align}
\mbb P\left[ \sum_{i=1}^{n} \lVert \bs{X}_i\rVert^2 \geq n(\rho+\delta)\right]& \leq\exp(-(\rho+\delta)\beta)\mbb E\left[ \exp\left(\beta \lVert \bs{X}\rVert^2\right)    \right] \nonumber \\
&\leq \frac{\exp(-(\rho+\delta)\beta)}{1-\beta \rho}, \nonumber
\end{align}
where $0<\beta<\frac{1}{\rho}=\beta_0.$
Putting $\beta=\frac{\delta}{\rho(\delta+\rho)}<\frac{1}{\rho}$ yields
\begin{align}
   \mbb P\left[ \sum_{i=1}^{n} \lVert \bs{X}_i\rVert^2 \geq n(\rho+\delta)\right]&\leq \left(1+\frac{\delta}{\rho}\right)\exp\left(-\frac{\delta}{\rho}\right)\nonumber \\
    &=\left(1+\frac{\delta}{\rho}\right)2^{\left(-\frac{\delta}{\ln(2)\rho}\right)}.
\nonumber \end{align}
This completes the proof of Lemma \ref{probnotinconstraint} .
\end{proof}
\color{black}
By Lemma \ref{probnotinconstraint}, it holds that
\begin{align}
    \mbb P \left[\sum_{i=1}^{n}\lVert \bs{T}_i\rVert^2 \geq n(\hat{P}+\beta)  \right] &\leq \left[(1+\frac{\beta}{\hat{P}})2^{(-\frac{\beta}{\ln(2)\hat{P}})}\right]^{n} \nonumber \\
    &=2^{\left(-n\frac{\beta}{\ln(2)\hat{P}}+n\log(1+\frac{\beta}{\hat{P}})\right)} \nonumber \\
    &= 2^{-n\hat{\beta}}.\nonumber
\end{align}

As a result, we have
\begin{align}
    \mbb P\left[ \bs{T}^n \notin E_n\right]&= \mbb P \left[\sum_{i=1}^{n}\lVert \bs{T}_i\rVert^2 > nP \right]\nonumber \\
    &\leq  \mbb P \left[\sum_{i=1}^{n}\lVert \bs{T}_i\rVert^2 \geq n(\hat{P}+\beta)  \right] \nonumber \\
    &\leq 2^{-n\hat{\beta}}.\label{upperprob2}
\end{align}
\color{black}
\subsubsection{Upper-bound for  $ \mbb P \left[\frac{1}{n}i(\bs{T}^n;\bs{Z}^n,\mbf G^n)  \leq  \phi(\tilde{\mbf Q})-\frac{\delta}{2}\right]
$}
Let us introduce the following lemma:
\begin{lemma}
\label{infdensity}
 \begin{align}
 i(\bs{T}^n;\bs{Z}^n,\mbf G^n)=\sum_{i=1}^{n} i(\bs{T}_i;\bs{Z}_i,\mbf G_i).
 \nonumber \end{align}
\end{lemma}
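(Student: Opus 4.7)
The plan is to work directly with the ratio of densities that defines the information density and to show, as an equality of random variables (not merely in expectation), that the joint likelihood ratio factorizes into a product indexed by the time instants. The two structural ingredients are (i) the noise sequence $\bs{\xi}^n$ is i.i.d.\ and independent of $(\bs{T}^n,\mbf G^n)$, so conditional on $(\bs{T}^n,\mbf G^n)$ the outputs $\bs{Z}_i=\mbf G_i\bs{T}_i+\bs{\xi}_i$ are mutually independent with $p_{\bs{Z}_i\mid \bs{T}_i,\mbf G_i}$ depending only on the time-$i$ inputs, and (ii) $\bs{T}^n$ is i.i.d.\ and independent of $\mbf G^n$.

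First I would write the joint density using the chain rule and the independence of $\bs{T}^n$ from $\mbf G^n$:
\begin{align}
p_{\bs{T}^n,\bs{Z}^n,\mbf G^n}(\bs{t}^n,\bs{z}^n,\bs{g}^n)
 &= p_{\bs{T}^n}(\bs{t}^n)\,p_{\mbf G^n}(\bs{g}^n)\,p_{\bs{Z}^n\mid \bs{T}^n,\mbf G^n}(\bs{z}^n\mid \bs{t}^n,\bs{g}^n) \nonumber \\
 &= \Bigl(\prod_{i=1}^n p_{\bs{T}_i}(\bs{t}_i)\Bigr)\,p_{\mbf G^n}(\bs{g}^n)\,\prod_{i=1}^n p_{\bs{Z}_i\mid \bs{T}_i,\mbf G_i}(\bs{z}_i\mid \bs{t}_i,\bs{g}_i), \nonumber
\end{align}
where the second equality uses the i.i.d.\ assumption on $\bs{T}^n$ and the memoryless-noise factorization.

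Next I would obtain the output marginal $p_{\bs{Z}^n,\mbf G^n}$ by integrating out $\bs{t}^n$; because the integrand factorizes over $i$, Fubini yields
\begin{align}
p_{\bs{Z}^n,\mbf G^n}(\bs{z}^n,\bs{g}^n)
 = p_{\mbf G^n}(\bs{g}^n)\prod_{i=1}^n \int p_{\bs{T}_i}(\bs{t}_i)\,p_{\bs{Z}_i\mid \bs{T}_i,\mbf G_i}(\bs{z}_i\mid \bs{t}_i,\bs{g}_i)\,d\bs{t}_i
 = p_{\mbf G^n}(\bs{g}^n)\prod_{i=1}^n p_{\bs{Z}_i\mid \mbf G_i}(\bs{z}_i\mid \bs{g}_i). \nonumber
\end{align}
Dividing the two displays, the $p_{\mbf G^n}(\bs{g}^n)$ and the $\prod_i p_{\bs{T}_i}(\bs{t}_i)$ factors cancel against those in $p_{\bs{T}^n}(\bs{t}^n)p_{\bs{Z}^n,\mbf G^n}(\bs{z}^n,\bs{g}^n)$, giving
\begin{align}
\frac{p_{\bs{T}^n,\bs{Z}^n,\mbf G^n}(\bs{t}^n,\bs{z}^n,\bs{g}^n)}{p_{\bs{T}^n}(\bs{t}^n)\,p_{\bs{Z}^n,\mbf G^n}(\bs{z}^n,\bs{g}^n)}
 = \prod_{i=1}^n \frac{p_{\bs{Z}_i\mid \bs{T}_i,\mbf G_i}(\bs{z}_i\mid \bs{t}_i,\bs{g}_i)}{p_{\bs{Z}_i\mid \mbf G_i}(\bs{z}_i\mid \bs{g}_i)}. \nonumber
\end{align}

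Finally, I would identify each factor with a single-letter information-density exponential. Since $\bs{T}_i$ is independent of $\mbf G_i$ (a marginal consequence of the independence of $\bs{T}^n$ from $\mbf G^n$), one has $p_{\mbf G_i\mid \bs{T}_i}=p_{\mbf G_i}$ and therefore
\begin{align}
\frac{p_{\bs{T}_i,\bs{Z}_i,\mbf G_i}}{p_{\bs{T}_i}\,p_{\bs{Z}_i,\mbf G_i}}
 = \frac{p_{\bs{Z}_i,\mbf G_i\mid \bs{T}_i}}{p_{\bs{Z}_i,\mbf G_i}}
 = \frac{p_{\mbf G_i}\,p_{\bs{Z}_i\mid \bs{T}_i,\mbf G_i}}{p_{\mbf G_i}\,p_{\bs{Z}_i\mid \mbf G_i}}
 = \frac{p_{\bs{Z}_i\mid \bs{T}_i,\mbf G_i}}{p_{\bs{Z}_i\mid \mbf G_i}}. \nonumber
\end{align}
Taking $\log$ on both sides of the product identity and using this matching then yields $i(\bs{T}^n;\bs{Z}^n,\mbf G^n)=\sum_{i=1}^n i(\bs{T}_i;\bs{Z}_i,\mbf G_i)$, as claimed. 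The only subtlety to watch for is the correlation among the $\mbf G_i$'s, which does \emph{not} appear on the right-hand side because the $p_{\mbf G^n}$ factor cancels from numerator and denominator; this is really the content of the lemma, and no step beyond careful bookkeeping of the marginal and conditional densities is required.
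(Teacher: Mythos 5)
Your proposal is correct and follows essentially the same route as the paper: both reduce the information density to the ratio $\prod_{i=1}^{n} p_{\bs{Z}_i\mid \bs{T}_i,\mbf G_i}\big/\prod_{i=1}^{n} p_{\bs{Z}_i\mid \mbf G_i}$ using the independence of $\bs{T}^n$ and $\mbf G^n$, the i.i.d.\ inputs, the memoryless noise, and the per-letter independence of $\bs{T}_i$ and $\mbf G_i$. The only (harmless) difference is that you obtain the factorization of $p_{\bs{Z}^n,\mbf G^n}$ by explicitly integrating out the inputs, whereas the paper asserts the corresponding conditional-independence statements directly.
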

\color{black}
\begin{proof}
We have
\begin{align}
i(\bs{T}^n;\bs{Z}^n,\mbf G^n)
&=\log\left(\frac{p_{\bs{T}^n,\bs{Z}^n,\mbf G^n}\left(\bs{T}^n,\bs{Z}^n,\mbf G^n \right)}{p_{\bs{Z}^n,\mbf G^n}\left(\bs{Z}^n,\mbf G^n\right)p_{\bs{T}^{n}}(\bs{T}^{n})}\right)\nonumber \\
&=\log\left(\frac{p_{\bs{Z}^n,\mbf G^n|\bs{T}^{n}}\left(\bs{Z}^n,\mbf G^n |\bs{T}^{n}\right)}{p_{\bs{Z}^n,\mbf G^n}\left(\bs{Z}^n,\mbf G^n\right)}\right).
\nonumber
\end{align}
Since $\mbf G^n$ and $\bs{T}^n$ are independent, we have
\begin{align}
    \log\left(\frac{p_{\bs{Z}^n,\mbf G^n|\bs{T}^{n}}\left(\bs{Z}^n,\mbf G^n |\bs{T}^{n}\right)}{p_{\bs{Z}^n,\mbf G^n}\left( \bs{Z}^n,\mbf G^n\right)}\right)=\log\left(\frac{p_{\bs{Z}^n|\mbf G^n,\bs{T}^{n}}\left(\bs{Z}^n|\mbf G^n ,\bs{T}^{n}\right)}{p_{\bs{Z}^n|\mbf G^{n}}\left( \bs{Z}^n|\mbf G^{n}\right)}\right). \nonumber
\end{align}
Furthermore, since conditioned on $(\mbf G^n, \bs{T}^n)$, the outputs are independent, we have
\begin{align}
    \log\left(\frac{p_{\bs{Z}^n|\mbf G^n,\bs{T}^{n}}\left(\bs{Z}^n|\mbf G^n ,\bs{T}^{n}\right)}{p_{\bs{Z}^n|\mbf G^{n}}\left( \bs{Z}^n|\mbf G^{n}\right)}\right)=\log\left(\frac{\prod_{i=1}^{n}p_{\bs{Z}_i|\mbf G^n,\bs{T}^n}\left(\bs{Z}_i|\mbf G^n ,\bs{T}^n\right)}{p_{\bs{Z}^n|\mbf G^{n}}\left( \bs{Z}^n|\mbf G^{n}\right)}\right)\nonumber.
\end{align}
This yields

\begin{align} 
&i(\bs{T}^n;\bs{Z}^n,\mbf G^n) \nonumber \\
&=\log\left(\frac{\prod_{i=1}^{n}p_{\bs{Z}_i|\mbf G^n,\bs{T}^n}\left(\bs{Z}_i|\mbf G^n ,\bs{T}^n\right)}{p_{\bs{Z}^n|\mbf G^{n}}\left( \bs{Z}^n|\mbf G^{n}\right)}\right) \nonumber\\
&\overset{(a)}{=}\log\left(\frac{\prod_{i=1}^{n}p_{\bs{Z}_i|\mbf G_i,\bs{T}_i}\left(\bs{Z}_i|\mbf G_i ,\bs{T}_i\right)}{p_{\bs{Z}^n|\mbf G^{n}}\left( \bs{Z}^n|\mbf G^{n}\right)}\right),\nonumber
\end{align}
where $(a)$ follows because $$ \mbf G_{1}\bs{T}_{1}\dots \mbf G_{i-1}\bs{T}_{i-1} \mbf G_{i+1}\bs{T}_{i+1}\dots \mbf G_{n}\bs{T}_{n}\bs{Z}^{i-1} \circlearrow{\mbf G_i\bs{T}_{i}}\circlearrow{\bs{Z}_{i}}$$ forms a Markov chain.

Now since conditioned on $\mbf G^n$ and for independent inputs, the outputs are independent, we have
\begin{align}
&\log\left(\frac{\prod_{i=1}^{n}p_{\bs{Z}_i|\mbf G_i,\bs{T}_i}\left(\bs{Z}_i|\mbf G_i ,\bs{T}_i\right)}{p_{\bs{Z}^n|\mbf G^{n}}\left( \bs{Z}^n|\mbf G^{n}\right)}\right) \nonumber \\
&=\log\left(\frac{\prod_{i=1}^{n}p_{\bs{Z}_i|\mbf G_i,\bs{T}_i}\left(\bs{Z}_i|\mbf G_i, \bs{T}_i\right)}{\prod_{i=1}^{n}p_{\bs{Z}_i|\mbf G^{n}}\left( \bs{Z}_i|\mbf G^{n}\right)}\right).\nonumber
\end{align}
It follows that

\begin{align}
&i(\bs{T}^n;\bs{Z}^n,\mbf G^n)\nonumber \\
&=\log\left(\frac{\prod_{i=1}^{n}p_{\bs{Z}_i|\mbf G_i,\bs{T}_i}\left(\bs{Z}_i|\mbf G_i ,\bs{T}_i\right)}{\prod_{i=1}^{n}p_{\bs{Z}_i|\mbf G^{n}}\left( \bs{Z}_i|\mbf G^{n}\right)}\right) \nonumber \\
&\overset{(a)}{=}\log\left(  \frac{ \prod_{i=1}^{n} p_{\bs{Z}_i|\mbf G_i,\bs{T}_i}\left(\bs{Z}_i|\mbf G_i ,\bs{T}_i\right)   }{\prod_{i=1}^{n}  p_{\bs{Z}_i|\mbf G_i}\left( \bs{Z}_i|\mbf G_i\right)}   \right)\nonumber \\
&=\log\left( \prod_{i=1}^{n}  \frac{ p_{\bs{Z}_i|\mbf G_i,\bs{T}_i}\left(\bs{Z}_i|\mbf G_i ,\bs{T}_i\right)   }{ p_{\bs{Z}_i|\mbf G_i}\left( \bs{Z}_i|\mbf G_i\right)}   \right)\nonumber \\
&=\sum_{i=1}^{n} \log\left(  \frac{ p_{\bs{Z}_i|\mbf G_i,\bs{T}_i}\left(\bs{Z}_i|\mbf G_i ,\bs{T}_i\right)   }{ p_{\bs{Z}_i|\mbf G_i}\left( \bs{Z}_i|\mbf G_i\right)}   \right) \nonumber \\
&=\sum_{i=1}^{n} \log\left(  \frac{ p_{\bs{Z}_i|\mbf G_i,\bs{T}_i}\left(\bs{Z}_i|\mbf G_i ,\bs{T}_i\right)p_{\mbf G_i,\bs{T}_i}(\mbf G_i,\bs{T}_i)    }{ p_{\bs{Z}_i|\mbf G_i}\left( \bs{Z}_i|\mbf G_i\right)p_{\mbf G_i,\bs{T}_i}(\mbf G_i,\bs{T}_i) }  \right) \nonumber \\
&\overset{(b)}{=}\sum_{i=1}^{n} \log\left(  \frac{ p_{\bs{Z}_i,\mbf G_i,\bs{T}_i}\left(\bs{Z}_i,\mbf G_i ,\bs{T}_i\right)   }{ p_{\bs{Z}_i,\mbf G_i}\left( \bs{Z}_i,\mbf G_i\right)p_{\bs{T}_i}(\bs{T}_i) }   \right) \nonumber \\
&=\sum_{i=1}^{n} i(\bs{T}_i;\bs{Z}_i,\mbf G_i),
\nonumber \end{align}
where
$(a)$ follows because conditioned on $\mbf G_i,$  $\bs{Z}_i$ is independent of  $\mbf G_1,\hdots,\mbf G_{i-1},\mbf G_{i+1},\hdots,\mbf G_{n}$ since $(\bs{T}_i,\bs{\xi}_i)$ is independent of $\mbf G_1,\hdots,\mbf G_{i-1},\mbf G_{i+1},\hdots,\mbf G_{n}$ and $(b)$ follows because $\bs{T}_i$ and $\mbf G_i$ are independent for $i=1\hdots n.$
\end{proof}
\color{black}
 Now, recall that we chose the inputs $\bs{T}^n$ of $W_{\mbf G^n}$ to be i.i.d such that   $\bs{T}_i \sim \mc N_{\mbb C}\left( \bs{0}_{N_{T}},\tilde{\mbf Q}\right), i=1\hdots n.$ We have using Lemma \ref{infdensity}
\begin{align}
    \color{black}\mbb E \left[\frac{1}{n}i(\bs{T}^n;\bs{Z}^n,\mbf G^n )\right] \color{black}&= \frac{1}{n}\mbb E \left[\sum_{i=1}^{n} i(\bs{T}_i;\bs{Z}_i,\mbf G_i) \right] \nonumber \\
    &=\frac{1}{n}\sum_{i=1}^{n} \mbb E \left[i(\bs{T}_i;\bs{Z}_i,\mbf G_i)  \right] \nonumber \\
    &=\frac{1}{n} \sum_{i=1}^{n}I(\bs{T}_i;\bs{Z}_i,\mbf G_i) \nonumber \\
    &=\frac{1}{n} \sum_{i=1}^{n}\left(I(\bs{T}_i;\bs{Z}_i|\mbf G_i)+I(\bs{T}_i,\mbf G_i)\right) \nonumber \\
    &=\frac{1}{n} \sum_{i=1}^{n}I(\bs{T}_i;\bs{Z}_i|\mbf G_i) \nonumber \\
    &\overset{(a)}{=}\frac{1}{n} \sum_{i=1}^{n}\mbb E \left[\log\det\left(\mathbf{I}_{N_{R}}+\frac{1}{\sigma^2}\mathbf{G}_i\tilde{\mbf Q}\mathbf{G}_i^{H}\right)\right] \nonumber \\
    &\overset{(b)}{=}\color{black}\mbb E \left[\log\det\left(\mathbf{I}_{N_{R}}+\frac{1}{\sigma^2}\mathbf{G}\tilde{\mbf Q}\mathbf{G}^{H}\right)\right] \nonumber \\
    &=\phi(\tilde{\mbf Q})\color{black},
\nonumber \end{align}
 where $(a)$ follows because  $\bs{\xi}_i \sim \mathcal{N}_{\mathbb{C}}(\bs{0}_{N_R},\sigma^2 \mathbf{I}_{N_R}), \ i=1, \hdots,n$ and because all the $\bs{T}_i's$ are i.i.d. such that   $\bs{T}_i \sim \mc N_{\mbb C}\left( \bs{0}_{N_{T}},\tilde{\mbf Q}\right), i=1\hdots n.$
and $(b)$ follows because from Lemma \ref{distributiongain}, we know that    $\vect\left(\mbf G_i\right) \sim \mc N_{\mbb C}\left(\mbf 0_{N_{R}N_{T}},\mbf I_{N_{R}N_{T}}   \right), i=1 \hdots n$ and because  $\vect\left(\mbf G\right)\sim \mc N_{\mbb C}\left(\bs{0}_{N_{R}N_{T}},\mbf I_{N_{R}N_{T}}   \right).$
It follows that

  \begin{align}
    &\color{black}\mbb P \left[\frac{1}{n}i(\bs{T}^n;\bs{Z}^n,\mbf G^n)  \leq  \phi(\tilde{\mbf Q})-\frac{\delta}{2}\right] \nonumber \\
     &= \mbb P \left[\frac{1}{n}i(\bs{T}^n;\bs{Z}^n,\mbf G^n)  \leq  \mbb E \left[\frac{1}{n}i(\bs{T}^n;\bs{Z}^n,\mbf G^n)  \right]-\frac{\delta}{2}\right] \nonumber \\
     &\leq  \mbb P \left[\Bigg\vert \frac{1}{n}i(\bs{T}^n;\bs{Z}^n,\mbf G^n) - \mbb E \left[\frac{1}{n}i(\bs{T}^n;\bs{Z}^n,\mbf G^n)  \right]\Bigg\vert \geq \frac{\delta}{2}\right] \nonumber \\
     &\overset{(a)}{\leq} \color{black}\frac{4\mathrm{var}\left(\frac{i\left(\bs{T}^n;\bs{Z}^n,\mbf G^n\right)}{n} \right)}{\delta^2} \nonumber \\
     &\overset{(b)}{\leq} \frac{4\kappa(n,\alpha)}{\delta^2}\color{black},
 \label{upperprob3} \end{align}
 where $(a)$ follows from the Chebyshev's inequality and $(b)$ follows 
because $\mathrm{var}\left(\frac{i\left(\bs{T}^n;\bs{Z}^n,\mbf G^n\right)}{n} \right)\leq \kappa(n,\alpha)$ for some $\kappa(n,\alpha)>0$ with $\underset{n\rightarrow \infty}{\lim}\kappa(n,\alpha)=0$ for any $0\leq \alpha<1$ (from the auxiliary result of Lemma \ref{boundvariance}).

From \eqref{upperprob1}, \eqref{upperprob2} and \eqref{upperprob3}, we obtain
\begin{align}
    e_{\max}(\Gamma_n) \leq  4\frac{\kappa(n)}{\delta^2}+2^{-n\hat{\beta}} +2^{-n\frac{\delta}{4}}, 
\nonumber \end{align}
where $ \underset{n\rightarrow \infty}{\lim} 4\frac{\kappa(n)}{\delta^2}+2^{-n\hat{\beta}}  +2^{-n\frac{\delta}{4}}=0.$
Therefore, for sufficiently large $n,$ it holds that
  $e_{\max}(\Gamma_n)\leq \theta$. This completes the direct proof of Theorem \ref{single-letter characterization}. 

\subsection{Converse Proof}
\label{converse}
Let $R$ be any achievable rate for the channel $W_{\mbf G^n}$ in \eqref{channelmodel.}.  So, for every $\theta,\delta>0$, there exists a code sequence $(\Gamma_n)_{n=1}^\infty$  such that 
    \[
        \frac{\log\lVert \Gamma_n\rVert}{n}\geq R-\delta
    \]
    and 
   \begin{align}
     e_{\max}(\Gamma_n)=\underset{\ell\in\{1\hdots\lVert\Gamma_n \rVert \}}{\max} \mbb E \left[W_{\mbf G^n}({\setd_\ell^{(\mbf G^n)_c}}|\bs{t}_\ell)\right]\leq\theta   \label{maxerror}
     \end{align}
    for sufficiently large $n$.

 Notice that from \eqref{maxerror}, it follows that the average error probability is also bounded from above by $\theta.$ The uniformly-distributed message $M$ is mapped to the random input sequence $\bs{T}^n=(\bs{T}_1,\hdots,\bs{T}_n)$ of the channel in \eqref{channelmodel.}, where the covariance matrix of each input $\bs{T}_i$ is denoted by $\mbf Q_i.$   Let $(\bs{Z}^n,\mbf G^n)$ the corresponding outputs, where $\bs{Z}^n=(\bs{Z}_1,\hdots,\bs{Z}_n).$ We define $\mbf Q^\star$ such that
$\mbf Q^\star=\frac{1}{n}\sum_{i=1}^{n}\mbf Q_i.$ We model the random decoded message by $\hat{M}.$ The set of messages is denoted by $\mathcal{M}.$ 
 \color{black}
 \begin{lemma}
\label{traceQ}
$$\mathrm{tr}(\mbf Q^\star)\leq P$$
\end{lemma}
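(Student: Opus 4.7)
The plan is to reduce the claim to a direct application of the average power constraint on the codewords. The trace of the covariance matrix of an input vector is, up to subtracting the squared norm of the mean (a nonnegative quantity), the second moment $\mathbb{E}\bigl[\|\bs T_i\|^2\bigr]$. Since the message is uniform, this expectation is a uniform average of the per-codeword powers, each of which is bounded by $P$ by definition of the code.

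First, I would write, for each $i\in\{1,\dots,n\}$,
\begin{align}
\mathrm{tr}(\mbf Q_i) &= \mathrm{tr}\!\left(\mbb E\!\left[\bs T_i\bs T_i^{H}\right]\right) - \bigl\lVert \mbb E[\bs T_i]\bigr\rVert^2 \nonumber \\
&\leq \mathrm{tr}\!\left(\mbb E\!\left[\bs T_i\bs T_i^{H}\right]\right) = \mbb E\!\left[\mathrm{tr}\!\left(\bs T_i\bs T_i^{H}\right)\right] = \mbb E\!\left[\bs T_i^{H}\bs T_i\right] = \mbb E\bigl[\lVert \bs T_i\rVert^2\bigr], \nonumber
\end{align}
using linearity of the trace and expectation together with $\mathrm{tr}(\bs{a}\bs{a}^H)=\bs{a}^H\bs{a}=\lVert \bs{a}\rVert^2$ for any column vector $\bs a$.

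Next I would average over $i$, swap the sum and the expectation by linearity, and use the law of total expectation conditioned on the uniformly distributed message $M\in\mathcal{M}$ with $|\mathcal{M}|=\lVert \Gamma_n \rVert$. Since the encoder deterministically maps $M=\ell$ to the codeword $\bs t_\ell=(\bs t_{\ell,1},\dots,\bs t_{\ell,n})$, and since each codeword satisfies the power constraint \eqref{powerconstraint}, I get
\begin{align}
\mathrm{tr}(\mbf Q^\star) &= \frac{1}{n}\sum_{i=1}^{n}\mathrm{tr}(\mbf Q_i) \leq \mbb E\!\left[\frac{1}{n}\sum_{i=1}^{n}\lVert \bs T_i\rVert^2\right] \nonumber \\
&= \frac{1}{\lVert \Gamma_n\rVert}\sum_{\ell=1}^{\lVert \Gamma_n\rVert}\frac{1}{n}\sum_{i=1}^{n}\bs t_{\ell,i}^{H}\bs t_{\ell,i} \;\leq\; \frac{1}{\lVert \Gamma_n\rVert}\sum_{\ell=1}^{\lVert \Gamma_n\rVert} P = P. \nonumber
\end{align}

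There is no real obstacle here: the argument is a short computation, and the only mild care one needs to take is to note that the trace of the covariance is at most the trace of the second-moment matrix (which is exactly $\mbb E[\lVert \bs T_i\rVert^2]$), since we do not assume the input distribution has zero mean in the converse. Everything else follows from linearity and the per-codeword power constraint imposed in Definition \ref{defcode}.
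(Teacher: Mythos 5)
Your proof is correct and follows essentially the same route as the paper: bound $\mathrm{tr}(\mbf Q_i)$ by $\mbb E\bigl[\lVert \bs T_i\rVert^2\bigr]$, average over $i$, and invoke the per-codeword power constraint \eqref{powerconstraint}. The only cosmetic differences are that you make the expectation over the uniform message explicit as an average over codewords (the paper instead notes the constraint holds almost surely and takes expectation) and that you spell out the mean-subtraction step behind $\mathrm{tr}(\mbf Q_i)\leq \mathrm{tr}\bigl(\mbb E[\bs T_i\bs T_i^{H}]\bigr)$, which the paper leaves implicit.
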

\color{black}
\begin{proof}
From \eqref{powerconstraint}, it holds that
\begin{align}
    \frac{1}{n}\sum_{i=1}^{n} \bs{T}_{i}^H\bs{T}_{i}\leq P, \quad \text{almost surely}. \nonumber
\end{align}
This implies that
\begin{align}
    \mbb E\left[\frac{1}{n}\sum_{i=1}^{n} \bs{T}_{i}^H\bs{T}_{i}\right] &=\frac{1}{n}\sum_{i=1}^{n}\mbb E\left[ \bs{T}_{i}^H\bs{T}_{i}\right] \nonumber \\
    &\leq P.\nonumber
\end{align}
This yields
\begin{align}
    \mathrm{tr}\left[\mbf Q^\star\right] &= \mathrm{tr} \left[\frac{1}{n}\sum_{i=1}^{n} \mbf Q_{i}\right] \nonumber \\
    &=\frac{1}{n}\sum_{i=1}^{n}\mathrm{tr}\left[ \mbf Q_i\right] \nonumber\\
    &\leq \frac{1}{n}\sum_{i=1}^{n} \mathrm{tr}\left(\mbb E\left[ \bs{T}_{i}\bs{T}_{i}^H\right]\right) \nonumber \\
    &=\frac{1}{n}\sum_{i=1}^{n} \mbb E\left[ \mathrm{tr}\left(\bs{T}_{i}\bs{T}_{i}^H\right)\right] \nonumber \\
    &=\frac{1}{n}\sum_{i=1}^{n} \mbb E\left[ \mathrm{tr}\left(\bs{T}_{i}^H\bs{T}_{i}\right)\right] \nonumber \\
    &=\frac{1}{n}\sum_{i=1}^{n} \mbb E\left[ \bs{T}_{i}^H\bs{T}_{i}\right] \nonumber \\
    &\leq P, \nonumber
\end{align}
where we used $r=\mathrm{tr}(r)$ for scalar $r$, $\mathrm{tr}\left(\mbf A \mbf B\right)= \mathrm{tr}\left(\mbf B \mbf A\right)$ and the linearity of the expectation and of the trace operators.
\end{proof}
\color{black}
 By using $\Gamma_n$ as a transmission-code for the channel $W_{\mbf G^n}$, it follows using the fact that $M$ and $\mbf G^n$ are independent that
\begin{align}
    \mbb P\left[ \hat{M}\neq M\right]&= \mbb E \left[\mbb P\left[ M\neq \hat{M}|\mbf G^n\right]   \right] \nonumber \\
    &= \mbb E \left[ \sum_{\ell=1}^{\lvert \mc M \rvert }\mbb P [M=\ell]\mbb P\left[ \hat{M}\neq \ell |M=\ell, \mbf G^n\right]   \right] \nonumber \\
    &=\sum_{\ell=1}^{\lvert \mc M \rvert }\mbb P [M=\ell] \mbb E \left[ \mbb P\left[ \hat{M}\neq \ell |M=\ell, \mbf G^n\right]   \right] \nonumber \\
    &= \sum_{\ell=1}^{\lvert \mc M \rvert }\mbb P [M=\ell]\mbb E \left[W_{\mbf G^n}({\setd_\ell^{(\mbf G^n)_c}}|\bs{t}_\ell)\right] \nonumber \\
    &\leq e_{\max}(\Gamma_n) \nonumber \\
    &\leq  \theta. 
\nonumber \end{align}
By applying Fano's inequality, we obtain
\begin{align}
    H(M|\hat{M})&\leq 1+ \mbb P\left[M\neq \hat{M}\right] \log \lvert \mathcal{M} \rvert \nonumber \\
    &\leq 1+\theta \log \lvert \mathcal{M} \rvert \nonumber \\
    &=1+\theta H(M). \nonumber
\end{align}

Now, on the one hand, it holds that
\begin{align}
    I(M;\hat{M})&=H(M)-H(M|\hat{M}) \nonumber \\
    &\geq (1-\theta)H(M)-1. \nonumber 
\end{align}
Since
\begin{align}
    H(M) &= \log |\mathcal{M}| \nonumber \\
        &=    \log \lVert \Gamma_n \rVert \nonumber \\
        &\geq n(R-\delta),
        \nonumber
\end{align}
it follows that
\begin{align}
  n(R-\delta)\leq \frac{1+I(M;\hat{M})}{1-\theta}. \nonumber
\end{align}
On the other hand, we have
\begin{align}
  &\frac{1}{n} I(M;\hat{M}) \nonumber \\
  &\overset{(a)}{\leq} \frac{1}{n} I(\bs{T}^{n};\bs{Z}^n,\mbf G^n) \nonumber \\
  &=\frac{1}{n} I(\bs{T}^{n};\bs{Z}^n|\mbf G^n)+\frac{1}{n} I(\bs{T}^n;\mbf G^n) \nonumber \\
&\overset{(b)}{=}\frac{1}{n} I(\bs{T}^{n};\bs{Z}^n|\mbf G^n) \nonumber \\
& \overset{(c)}{=} \frac{1}{n} \sum_{i=1}^{n} I(\bs{Z}_{i};\bs{T}^{n}|\mbf G^n,\bs{Z}^{i-1}) \nonumber \\
& = \frac{1}{n} \sum_{i=1}^{n} h(\bs{Z}_{i}|\mbf G^n,\bs{Z}^{i-1})-h(\bs{Z}_{i}|\mbf G^n,\bs{T}^{n},\bs{Z}^{i-1}) \nonumber \\
& \overset{(d)}{=}\frac{1}{n} \sum_{i=1}^{n} h(\bs{Z}_{i}|\mbf G^n,\bs{Z}^{i-1})-h(\bs{Z}_{i}|\mbf G_{i},\bs{T}_{i}) \nonumber \\
& \overset{(e)}{\leq}\frac{1}{n} \sum_{i=1}^{n} h(\bs{Z}_{i}|\mbf G_i)-h(\bs{Z}_{i}|\mbf G_i,\bs{T}_{i}) \nonumber \\
& = \frac{1}{n}\sum_{i=1}^{n}  I(\bs{T}_{i};\bs{Z}_{i}|\mbf G_i) \nonumber \\
& \overset{(f)}{\leq }\frac{1}{n} \sum_{i=1}^{n} \mbb E\left[ \log\det(\mbf I_{N_{R}}+\frac{1}{\sigma^2}\mbf G_{i} \mbf Q_{i} \mbf G_{i}^{H}) \right] \nonumber \\
&\overset{(g)}{=} \frac{1}{n} \sum_{i=1}^{n} \mbb E\left[ \log\det(\mbf I_{N_{R}}+\frac{1}{\sigma^2}\mbf G \mbf Q_{i} \mbf G^{H}) \right] \nonumber \\
&= \mbb E\left[\frac{1}{n}\sum_{i=1}^{n}\log\det(\mathbf{I}_{N_{R}}+\frac{1}{\sigma^2}\mathbf{G}\mathbf{Q}_{i} \mathbf{G}^{H})    \right] \nonumber \\
&\overset{(h)}{\leq} \mbb E\left[\log\det\left(\frac{1}{n}\sum_{i=1}^{n} \left[\mathbf{I}_{N_{R}}+\frac{1}{\sigma^2}\mathbf{G}\mathbf{Q}_{i} \mathbf{G}^{H}\right]\right)\right] \nonumber \\
&=\mbb E\left[\log\det\left(\mathbf{I}_{N_{R}}+\frac{1}{\sigma^2}\mathbf{G}\left(\frac{1}{n}\sum_{i=1}^{n}\mathbf{Q}_{i}\right)\mathbf{G}^{H}\right)\right] \nonumber \\
&=\mbb E\left[\log\det\left(\mathbf{I}_{N_{R}}+\frac{1}{\sigma^2}\mathbf{G}\mbf Q^\star \mathbf{G}^{H}\right)\right] \nonumber \\
&\overset{(i)}{\leq} \underset{\mbf Q \in \mathcal{Q}_{(P,N_T)}}{\max}\mbb E \left[\log\det\left(\mathbf{I}_{N_{R}}+\frac{1}{\sigma^2}\mathbf{G}\mbf Q\mathbf{G}^{H}\right)\right], \nonumber 
\end{align}
where $(a)$ follows from the Data Processing Inequality because $M\circlearrow{\bs{T}^n}\circlearrow{\mbf G^n,\bs{Z}^n}\circlearrow{\hat{M}}$ forms a Markov chain, $(b)$ follows because $\mbf G^n$ and  $\bs{T}^n$ are independent, $(c)$ follows from the chain rule for mutual information, $(d)$ follows because 
$$ \mbf G_{1},\bs{T}_{1},\dots, \mbf G_{i-1},\bs{T}_{i-1}, \mbf G_{i+1},\bs{T}_{i+1},\dots, \mbf G_{n},\bs{T}_{n},\bs{Z}^{i-1} \circlearrow{\mbf G_i,\bs{T}_{i}}\circlearrow{\bs{Z}_{i}}$$ forms a Markov chain, $(e)$ follows because conditioning does not increase entropy, $(f)$ follows because $\bs{\xi}_{i} \sim \mathcal{N}_{\mathbb{C}}\left(\mbf 0_{N_{R}},\sigma^2\mbf I_{N_{R}}\right),\ i=1\hdots n,$
$(g)$ follows because the $\mbf G_is$ are identically distributed from Lemma \ref{distributiongain} where $\mbf G$ is a random matrix that has the same distribution as each of the $\mbf G_i$ and
$(h)$   follows from Jensen's Inequality since the function $\log\circ\det$ is concave on the set of Hermitian positive semidefinite matrices and since $\mbf I_{N_{R}}+\frac{1}{\sigma^2}\mbf G\mbf Q_{i}\mbf G^H$ is Hermitian positive semidefinite for $i=1\hdots n$, $(i)$ follows because $\mbf Q^\star=\frac{1}{n}\sum_{i=1}^{n}\mbf Q_i \in \mathcal{Q}_{(P,N_T)}$ from Lemma \ref{traceQ}. 

As a result, we have
\begin{align}
    n(R-\delta) \leq \frac{n\underset{\mbf Q \in \mathcal{Q}_{(P,N_T)}}{\max}\mbb E \left[\log\det\left(\mathbf{I}_{N_{R}}+\frac{1}{\sigma^2}\mathbf{G}\mbf Q\mathbf{G}^{H}\right)\right] +1}{1-\theta}.
\nonumber \end{align}

This implies that 
\begin{align}
    R \leq \color{black}\frac{ \underset{\mbf Q \in \mathcal{Q}_{(P,N_T)}}{\max}\mbb E \left[\log\det\left(\mathbf{I}_{N_{R}}+\frac{1}{\sigma^2}\mathbf{G}\mbf Q\mathbf{G}^{H}\right)\right]+\frac{1}{n}}{1-\theta}+\delta. \color{black}
    \label{finalinequality}
\end{align}
In particular, we can choose $\delta,\theta>0$ to be arbitrarily small such that the right-hand side of \eqref{finalinequality} is equal to $\underset{\mbf Q \in \mathcal{Q}_{(P,N_T)}}{\max}\mbb E \left[\log\det\left(\mathbf{I}_{N_{R}}+\frac{1}{\sigma^2}\mathbf{G}\mbf Q\mathbf{G}^{H}\right)\right]+\delta'$ for $n\rightarrow \infty,$ with $\delta'$ being an arbitrarily small positive constant. This completes the converse proof of Theorem \ref{single-letter characterization}.

\section{Proof of Lemma \ref{boundvariance}}
\label{prooflemma}
Let $\bs{T}^{n}=\left(\bs{T}_1,\hdots,\bs{T}_n\right)$ be an $n$-length input sequence of the channel $W_{\mbf G^n}$ such that the $\bs{T}_{i}'s$ are i.i.d., where $\bs{T}_i \sim \mc N\left(\bs{0}_{N_{T}},\tilde{\mbf Q}\right),\ i=1\hdots n$  and where $\tilde{\mbf Q} \in \mc Q_{(P,N_T)}.$  Let $\bs{Z}^n$ be the corresponding output sequence, where $\bs{Z}^n=(\bs{Z}_1,\hdots,\bs{Z}_n).$ 
By Lemma \ref{infdensity}, it holds that
\begin{align}
 i(\bs{T}^n;\bs{Z}^n,\mbf G^n)=\sum_{i=1}^{n} i(\bs{T}_i;\bs{Z}_i,\mbf G_i). \label{sumdenisities}
 \end{align}

We have
\begin{equation}
    \mathrm{var}\left(\frac{i(\bs{T}^n;\bs{Z}^n,\mbf G^n)}{n} \right)=\frac{1}{n^2} \mbb E\left[i(\bs{T}^n;\bs{Z}^n,\mbf G^n)^2\right]-\frac{1}{n^2}\mbb E \left[i(\bs{T}^n;\bs{Z}^n,\mbf G^n)\right]^2. \label{equalityvariance}
\end{equation}
Now
\begin{align}
    &\frac{1}{n^2}\mbb E\left[i(\bs{T}^n;\bs{Z}^n,\mbf G^n)^2\right] \nonumber \\ \nonumber\\
    &=\frac{1}{n^2}\mbb E\left[\left(\sum_{i=1}^{n} i(\bs{T}_i;\bs{Z}_i,\mbf G_i)\right)^2\right]\nonumber \\ \nonumber \\
    &=\frac{1}{n^2}\sum_{i=1}^{n}\sum_{k=1,k\neq i}^{n}\mbb E\left[ i(\bs{T}_i;\bs{Z}_i,\mbf G_i)i(\bs{T}_k;\bs{Z}_k,\mbf G_{k}) \right] + \frac{1}{n^2}\sum_{i=1}^{n}\mbb E\left[ i(\bs{T}_i;\bs{Z}_i,\mbf G_i)^2\right] \nonumber \\ \nonumber \\
    &=\frac{1}{n^2}\sum_{i=1}^{n}\sum_{k=1,k\neq i}^{n}\mbb E \left[\mbb E\left[ i(\bs{T}_i;\bs{Z}_i,\mbf G_i)i(\bs{T}_k;\bs{Z}_k,\mbf G_{k})|\mbf G_i, \mbf G_{k} \right]\right] + \frac{1}{n^2}\sum_{i=1}^{n}\mbb E\left[ i(\bs{T}_i;\bs{Z}_i,\mbf G_i)^2\right] \nonumber \\ \nonumber \\
    &\overset{(a)}{=}\frac{1}{n^2}\sum_{i=1}^{n}\sum_{k=1,k\neq i}^{n}\mbb E \left[\mbb E\left[ i(\bs{T}_i;\bs{Z}_i,\mbf G_i)|\mbf G_i, \mbf G_{k}\right] \mbb E \left[i(\bs{T}_k;\bs{Z}_k,\mbf G_{k})|\mbf G_i, \mbf G_{k} \right]\right] + \frac{1}{n^2}\sum_{i=1}^{n}\mbb E\left[ i(\bs{T}_i;\bs{Z}_i,\mbf G_i)^2\right] \nonumber \\ \nonumber \\
    &\overset{(b)}{=}\frac{1}{n^2}\sum_{i=1}^{n}\sum_{k=1,k\neq i}^{n}\mbb E \left[\mbb E\left[ i(\bs{T}_i;\bs{Z}_i,\mbf G_i)|\mbf G_i\right] \mbb E \left[i(\bs{T}_k;\bs{Z}_k,\mbf G_{k})| \mbf G_{k} \right]\right] + \frac{1}{n^2}\sum_{i=1}^{n}\mbb E\left[ i(\bs{T}_i;\bs{Z}_i,\mbf G_i)^2\right] \nonumber \\ \nonumber \\
    &\overset{(c)}{=}\frac{1}{n^2}\sum_{i=1}^{n}\sum_{k=1,k\neq i}^{n}\mbb E \left[\log\det(\mbf I_{N_{R}}+\frac{1}{\sigma^2}\mbf G_i \tilde{\mbf Q} \mbf G_i^{H})\log\det(\mbf I_{N_{R}}+\frac{1}{\sigma^2}\mbf G_{k} \tilde{\mbf Q} \mbf G_{k}^{H})\right]  +\frac{1}{n^2}\sum_{i=1}^{n}\mbb E\left[ i(\bs{T}_i;\bs{Z}_i,\mbf G_i)^2\right],
    \label{upperboundofmean} 
\end{align}
where $(a)$ follows because for independent inputs and conditioned on $(\mbf G_i,\mbf G_j),$ $i(\bs{T}_i;\bs{Z}_i,\mbf G_i)$ and  $i(\bs{T}_j;\bs{Z}_j,\mbf G_j)$ are independent, $(b)$ follows because for independent inputs and conditioned on $\mbf G_i,$ $i(\bs{T}_i;\bs{Z}_i,\mbf G_i)$ and $\mbf G_k$ are independent, and because for independent inputs and conditioned on $\mbf G_k,$ $i(\bs{T}_k;\bs{Z}_k,\mbf G_k)$ and $\mbf G_i$ are independent, and $(c)$ follows because $\bs{\xi}_i \sim \mathcal{N}_{\mathbb{C}}(\bs{0}_{N_R},\sigma^2 \mathbf{I}_{N_R}), i=1, \hdots,n$ and because all the $\bs{T}_is$ are i.i.d. such that   $\bs{T}_i \sim \mc N_{\mbb C}\left( \bs{0}_{N_{T}},\tilde{\mbf Q}\right), i=1\hdots n.$

It holds using \eqref{sumdenisities} that
\begin{align}
    \frac{1}{n^2}\mbb E\left[i(\bs{T}^n;\bs{Z}^n,\mbf G^n)\right]^2 
    &= \frac{1}{n^2}\mbb E\left[\sum_{i=1}^{n} i(\bs{T}_i;\bs{Z}_i,\mbf G_i)\right]^2 \nonumber \\  \nonumber \\
    &=\frac{1}{n^2}\left(  \sum_{i=1}^{n} \mbb E\left[ i(\bs{T}_i;\bs{Z}_i,\mbf G_i)\right] \right)^{2} \nonumber \\ \nonumber \\
    &\geq \frac{1}{n^2}\sum_{i=1}^{n} \sum_{k=1,k\neq i}^{n}\mbb E\left[i(\bs{T}_i;\bs{Z}_i,\mbf G_i)\right]\mbb E\left[i(\bs{T}_k;\bs{Z}_k,\mbf G_k)\right] \nonumber \\ \nonumber \\
    &=\frac{1}{n^2}\sum_{i=1}^{n} \sum_{k=1,k\neq i}^{n} \mbb E \left[\mbb E\left[i(\bs{T}_i;\bs{Z}_i,\mbf G_i)|\mbf G_i\right]\right]\mbb E \left[\mbb E\left[i(\bs{T}_k;\bs{Z}_k,\mbf G_k)|\mbf G_{k}\right]\right] \nonumber \\ \nonumber\\
    &=\frac{1}{n^2}\sum_{i=1}^{n} \sum_{k=1,k\neq i}^{n}\mbb E \left[\log\det(\mbf I_{N_{R}}+\frac{1}{\sigma^2}\mbf G_i \tilde{\mbf Q} \mbf G_i^{H})\right]\mbb E \left[ \log\det(\mbf I_{N_{R}}+\frac{1}{\sigma^2}\mbf G_{k} \tilde{\mbf Q} \mbf G_{k}^{H})\right]\nonumber \\ \nonumber \\
    &=\frac{1}{n^2}\sum_{i=1}^{n} \sum_{k=1,k\neq i}^{n}\mbb E \left[ \log\det\left(\mbf I_{N_{R}}+\frac{1}{\sigma^2}\tilde{\mbf G} \tilde{\mbf Q} \tilde{\mbf G}^H\right)\right]^2.
    \label{lowerbound2}
\end{align}

It follows from \eqref{equalityvariance}, \eqref{upperboundofmean} and \eqref{lowerbound2} that
\begin{align}
     &\mathrm{var}\left(\frac{i(\bs{T}^n;\bs{Z}^n,\mbf G^n)}{n} \right) \nonumber \\  \nonumber \\
     &\leq \frac{1}{n^2}\sum_{i=1}^{n}\sum_{k=1,k\neq i}^{n}\mbb E \left[\log\det(\mbf I_{N_{R}}+\frac{1}{\sigma^2}\mbf G_i \tilde{\mbf Q} \mbf G_i^{H})\log\det(\mbf I_{N_{R}}+\frac{1}{\sigma^2}\mbf G_{k} \tilde{\mbf Q} \mbf G_{k}^{H})\right] \nonumber \\  \nonumber \\
     & \quad +\frac{1}{n^2}\sum_{i=1}^{n}\mbb E\left[ i(\bs{T}_i;\bs{Z}_i,\mbf G_i)^2\right]-\frac{1}{n^2}\sum_{i=1}^{n} \sum_{k=1,k\neq i}^{n}\mbb E \left[ \log\det\left(\mbf I_{N_{R}}+\frac{1}{\sigma^2}\tilde{\mbf G} \tilde{\mbf Q} \tilde{\mbf G}^H\right)\right]^2 \nonumber \\  \nonumber \\
     &=\frac{1}{n^2}\sum_{i=1}^{n}\sum_{k=1,k\neq i}^{n} \left(\mbb E \left[\log\det(\mbf I_{N_{R}}+\frac{1}{\sigma^2}\mbf G_i \tilde{\mbf Q} \mbf G_i^{H})\log\det(\mbf I_{N_{R}}+\frac{1}{\sigma^2}\mbf G_{k} \tilde{\mbf Q} \mbf G_{k}^{H})\right]-\mbb E \left[ \log\det\left(\mbf I_{N_{R}}+\frac{1}{\sigma^2}\tilde{\mbf G} \tilde{\mbf Q} \tilde{\mbf G}^H\right)\right]^2\right) \nonumber \\  \nonumber \\
     &\quad + \frac{1}{n^2}\sum_{i=1}^{n}\mbb E\left[ i(\bs{T}_i;\bs{Z}_i,\mbf G_i)^2\right]. \label{lasteq}
\end{align}
By defining for any $i,k\in\{1,\hdots n\}$ with $i\neq k,$
\begin{align}
m(i,k) &=\mbb E \left[\log\det(\mbf I_{N_{R}}+\frac{1}{\sigma^2}\mbf G_i \tilde{\mbf Q} \mbf G_i^{H})\log\det(\mbf I_{N_{R}}+\frac{1}{\sigma^2}\mbf G_k \tilde{\mbf Q} \mbf G_k^{H})\right]-\mbb E \left[ \log\det\left(\mbf I_{N_{R}}+\frac{1}{\sigma^2}\tilde{\mbf G} \tilde{\mbf Q} \tilde{\mbf G}^H\right)\right]^2, \nonumber
\end{align}
we obtain using \eqref{lasteq}
\begin{align}
     &\mathrm{var}\left(\frac{i(\bs{T}^n;\bs{Z}^n,\mbf G^n)}{n} \right)\nonumber \\  \nonumber \\ &\leq
     \frac{1}{n^2}\sum_{i=1}^{n}\sum_{k=1,k\neq i}^{n} m(i,k) +\frac{1}{n^2}\sum_{i=1}^{n}\mbb E\left[ i(\bs{T}_i;\bs{Z}_i,\mbf G_i)^2\right] \nonumber \\  \nonumber \\
     &=\frac{1}{n^2}\sum_{i=1}^{n}\sum_{k=1}^{i-1} m(i,k)+\frac{1}{n^2}\sum_{i=1}^{n}\sum_{k=i+1}^{n} m(i,k)+\frac{1}{n^2}\sum_{i=1}^{n}\mbb E\left[ i(\bs{T}_i;\bs{Z}_i,\mbf G_i)^2\right]. \nonumber
\end{align}
Notice that for $\alpha=0,$ it follows from \eqref{gainmodel} that all the $\mbf G_{i}s$ are i.i.d., which implies that for any $i,k\in\{1,\hdots n\}$ with $i\neq k,$
\begin{align}
m(i,k) &=\mbb E \left[\log\det(\mbf I_{N_{R}}+\frac{1}{\sigma^2}\mbf G_i \tilde{\mbf Q} \mbf G_i^{H})\log\det(\mbf I_{N_{R}}+\frac{1}{\sigma^2}\mbf G_k \tilde{\mbf Q} \mbf G_k^{H})\right]-\mbb E \left[ \log\det\left(\mbf I_{N_{R}}+\frac{1}{\sigma^2}\tilde{\mbf G} \tilde{\mbf Q} \tilde{\mbf G}^H\right)\right]^2 \nonumber \\
&=\mbb E \left[\log\det(\mbf I_{N_{R}}+\frac{1}{\sigma^2}\mbf G_i \tilde{\mbf Q} \mbf G_i^{H})\right]\mbb E\left[\log\det(\mbf I_{N_{R}}+\frac{1}{\sigma^2}\mbf G_k \tilde{\mbf Q} \mbf G_k^{H})\right]-\mbb E \left[ \log\det\left(\mbf I_{N_{R}}+\frac{1}{\sigma^2}\tilde{\mbf G} \tilde{\mbf Q} \tilde{\mbf G}^H\right)\right]^2 \nonumber \\
&=0,\nonumber
\end{align}
where we used that $\tilde{\mbf G}$ has the same distribution as $\mbf G_i, i=1\hdots n.$
Therefore, it follows that
\begin{align}
    &\mathrm{var}\left(\frac{i(\bs{T}^n;\bs{Z}^n,\mbf G^n)}{n} \right)\nonumber \\ &\leq 
        \begin{cases}
    \frac{1}{n^2}\sum_{i=1}^{n}\mbb E\left[ i(\bs{T}_i;\bs{Z}_i,\mbf G_i)^2\right], & \text{for } \alpha=0 \\~\\
        \frac{1}{n^2}\sum_{i=1}^{n}\sum_{k=1}^{i-1} m(i,k)+\frac{1}{n^2}\sum_{i=1}^{n}\sum_{k=i+1}^{n} m(i,k)+\frac{1}{n^2}\sum_{i=1}^{n}\mbb E\left[ i(\bs{T}_i;\bs{Z}_i,\mbf G_i)^2\right] & \text{for } 0<\alpha<1.
\end{cases}
\label{sumterms}
\end{align}
Now, the goal is to find a suitable upper-bound for each term in \eqref{sumterms}.
\subsection{Upper-bound for $\frac{1}{n^2}\sum_{i=1}^{n}\sum_{k=1}^{i-1} m(i,k)+\frac{1}{n^2}\sum_{i=1}^{n}\sum_{k=i+1}^{n} m(i,k)$ for $0<\alpha<1$}
We are going to show that
\begin{align} \frac{1}{n^2}\sum_{i=1}^{n}\sum_{k=1}^{i-1} m(i,k)+\frac{1}{n^2}\sum_{i=1}^{n}\sum_{k=i+1}^{n} m(i,k)
    &\leq \frac{2c'}{n(1-\sqrt{\alpha})}, \nonumber
\end{align}
for some $c'>0.$
Let us first introduce and prove the following Lemma
\begin{lemma}
\label{meanproducti1i2}
Let $i_1,i_2\in \{1,\hdots n\}.$ Assume without loss of generality that $i_1<i_2,$ then
\begin{align}
&\mbb E \left[ \log\det(\mbf I_{N_{R}}+\frac{1}{\sigma^2}\mbf G_{i_{2}} \tilde{\mbf Q} \mbf G_{i_{2}}^{H})\log\det(\mbf I_{N_{R}}+\frac{1}{\sigma^2}\mbf G_{i_{1}} \tilde{\mbf Q} \mbf G_{i_{1}}^{H})\right] \nonumber \\
&\leq \mbb E\left[
 \log\det\left(\mbf I_{N_{R}}+\frac{1}{\sigma^2}\tilde{\mbf G}\tilde{\mbf Q} \tilde{\mbf G} ^{H}\right)\right]^2+c'\sqrt{\alpha}^{i_{2}-i_{1}}, \nonumber
\end{align}
for some $c'>0,$ where $\tilde{\mbf G}$ is a random  matrix such that $\vect(\tilde{\mbf G})\sim \mc N_{\mbb C}\left(\bs{0}_{N_RN_T},\mbf K \right).$
\end{lemma}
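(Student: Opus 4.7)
The plan is to decouple $\mbf G_{i_1}$ and $\mbf G_{i_2}$ via a coupling construction and then exploit the geometric decay $\sqrt{\alpha}^{i_2-i_1}$ supplied by Lemma \ref{twoindexgainformula}, combined with the Cauchy--Schwarz inequality and the finiteness of Gaussian matrix-norm moments already used elsewhere in the paper.

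Concretely, set $f(\mbf G)=\log\det(\mbf I_{N_R}+\tfrac{1}{\sigma^2}\mbf G\tilde{\mbf Q}\mbf G^H)$. First, I apply Lemma \ref{twoindexgainformula} to write $\mbf G_{i_2}=\sqrt{\alpha}^{i_2-i_1}\mbf G_{i_1}+\mbf V$, where $\mbf V=\sqrt{1-\alpha}\sum_{j=i_1+1}^{i_2}\sqrt{\alpha}^{i_2-j}\mbf W_j$ is independent of $\mbf G_{i_1}$. I then introduce an independent copy $\mbf G'$ of $\mbf G_{i_1}$, independent also of $\mbf V$, and define $\tilde{\mbf G}_{i_2}=\sqrt{\alpha}^{i_2-i_1}\mbf G'+\mbf V$. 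By construction and by Lemma \ref{distributiongain}, $\tilde{\mbf G}_{i_2}$ has the same distribution as $\mbf G_{i_2}$ (and hence as $\tilde{\mbf G}$), while being independent of $\mbf G_{i_1}$. Therefore,
\begin{align}
\mbb E\left[f(\mbf G_{i_2})f(\mbf G_{i_1})\right]-\mbb E\left[f(\tilde{\mbf G})\right]^2
&= \mbb E\left[\left(f(\mbf G_{i_2})-f(\tilde{\mbf G}_{i_2})\right)f(\mbf G_{i_1})\right], \nonumber
\end{align}
and Cauchy--Schwarz reduces the task to bounding $\mbb E[f(\mbf G_{i_1})^2]$ by a constant and $\mbb E[(f(\mbf G_{i_2})-f(\tilde{\mbf G}_{i_2}))^2]$ by a constant times $\alpha^{i_2-i_1}$.

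The first bound is immediate from the estimate $f(\mbf G)\leq \tfrac{PN_R}{\ln(2)\sigma^2}\|\mbf G\|^2$ derived in Section \ref{direct} together with the finiteness of moments of $\|\mbf G\|$ provided by Lemma \ref{boundedmatrixnorm}. For the second bound, I would establish a local Lipschitz-type estimate of the form
\[
|f(\mbf A)-f(\mbf B)|\leq C\left(\|\mbf A\|+\|\mbf B\|\right)\|\mbf A-\mbf B\|
\]
by integrating the derivative of $f$ along the segment from $\mbf B$ to $\mbf A$, using the product rule on $\mbf M\tilde{\mbf Q}\mbf M^H$ together with $(\mbf I_{N_R}+\tfrac{1}{\sigma^2}\mbf M\tilde{\mbf Q}\mbf M^H)^{-1}\preceq \mbf I_{N_R}$ and the trace inequality $|\mathrm{tr}(\mbf X\mbf Y)|\leq N_R\|\mbf X\|\|\mbf Y\|$. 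Since $\mbf G_{i_2}-\tilde{\mbf G}_{i_2}=\sqrt{\alpha}^{i_2-i_1}(\mbf G_{i_1}-\mbf G')$, squaring this estimate and taking expectations extracts the factor $\alpha^{i_2-i_1}$ and leaves only mixed fourth-order Gaussian moments of $\|\mbf G_{i_1}\|$, $\|\mbf G'\|$, and $\|\mbf V\|$, all of which are finite. Taking a square root then produces $\sqrt{\alpha}^{i_2-i_1}$, and collecting constants gives the claimed bound with an explicit $c'>0$.

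The main obstacle I expect is the Lipschitz-type estimate on $f$ and the associated moment bookkeeping: the matrix-calculus computation for the derivative of $\log\det$, the extraction of the multiplicative factor $(\|\mbf A\|+\|\mbf B\|)$, and the subsequent H\"older-type control of the resulting mixed Gaussian moments require care but reduce to standard computations. Once that estimate is in hand, the coupling together with Cauchy--Schwarz delivers the required bound $c'\sqrt{\alpha}^{i_2-i_1}$ with essentially no additional work, since all finite-moment facts needed have already been used earlier in the paper.
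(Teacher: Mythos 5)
Your proposal is correct, and it reuses the paper's central coupling idea while replacing the paper's estimation machinery with a different one. Like the paper, you decompose $\mbf G_{i_2}=\sqrt{\alpha}^{i_2-i_1}\mbf G_{i_1}+\mbf S$ (Lemma \ref{twoindexgainformula}) and introduce the surrogate $\tilde{\mbf G}_{i_2}=\mbf S+\sqrt{\alpha}^{i_2-i_1}\mbf G'$, which is exactly the paper's $\tilde{\mbf W}$: independent of $\mbf G_{i_1}$ and distributed as $\tilde{\mbf G}$. Where the routes diverge is in how the geometric decay is extracted. The paper proves the pointwise Loewner-order bound of Claim \ref{firstclaim} (using Lemmas \ref{normposdefinite} and \ref{bounddeterminantsum}) to get $f(\mbf G_{i_2})\leq f(\tilde{\mbf W})+O(\sqrt{\alpha}^{i_2-i_1})\cdot(\text{norm terms})$, multiplies it by the bound of Claim \ref{secondclaim} on $f(\mbf G_{i_1})$ (both factors being nonnegative), and then factorizes the expectation of the main term by independence. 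You instead use the exact centering identity $\mbb E[f(\mbf G_{i_2})f(\mbf G_{i_1})]-\mbb E[f(\tilde{\mbf G})]^2=\mbb E[(f(\mbf G_{i_2})-f(\tilde{\mbf G}_{i_2}))f(\mbf G_{i_1})]$, Cauchy--Schwarz, and a mean-value/Lipschitz estimate $|f(\mbf A)-f(\mbf B)|\leq C(\lVert\mbf A\rVert+\lVert\mbf B\rVert)\lVert\mbf A-\mbf B\rVert$, which is valid by the derivative computation you sketch (the resolvent satisfies $\lVert(\mbf I_{N_R}+\frac{1}{\sigma^2}\mbf M\tilde{\mbf Q}\mbf M^H)^{-1}\rVert\leq 1$ and $\lVert\tilde{\mbf Q}\rVert\leq P$), so that $\mbf G_{i_2}-\tilde{\mbf G}_{i_2}=\sqrt{\alpha}^{i_2-i_1}(\mbf G_{i_1}-\mbf G')$ delivers the factor $\sqrt{\alpha}^{i_2-i_1}$ after taking square roots; the remaining mixed fourth moments are finite by Lemma \ref{boundedmatrixnorm}. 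Your version buys a cleaner argument and in fact a two-sided bound on $|m(i_1,i_2)|$, and it avoids the determinant-sum inequality entirely; the paper's version stays at the level of matrix inequalities and avoids differentiating $\log\det$. Two small points you should make explicit: that $\tilde{\mbf G}_{i_2}$ indeed has law $\mc N_{\mbb C}(\bs 0,\mbf K)$ after vectorization (the paper notes this for $\tilde{\mbf W}$ by the argument of Lemma \ref{distributiongain}), and that the moment bounds are uniform in $i_1,i_2$ (e.g.\ $\mbf S$ is distributed as $\sqrt{1-\alpha^{i_2-i_1}}\,\tilde{\mbf G}$, so $\mbb E\lVert\mbf S\rVert^4\leq\mbb E\lVert\tilde{\mbf G}\rVert^4$), so the resulting $c'$ does not depend on the indices.
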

\begin{proof}
Let $\tilde{\mbf G}$ be any random  matrix independent of $\mbf G_1$ and $\mbf W_i,i=2,\hdots n$ such that $\vect(\tilde{\mbf G})\sim \mc N_{\mbb C}\left(\bs{0}_{N_RN_T},\mbf K \right).$
By Lemma \ref{distributiongain}, it follows that $\tilde{\mbf G}$ has the same distribution as  $\mbf G_i,i=1,\hdots n.$ Furthermore, since $\tilde{\mbf G}$ is independent of $\mbf G_1$ and $\mbf W_i,i=2,\hdots n,$ it is also independent of all the $\mbf G_is.$
 By Lemma \ref{twoindexgainformula}, we know that
\begin{align}
    \mbf G_{i_{2}}=\sqrt{\alpha}^{i_{2}-i_{1}}\mbf G_{i_{1}}+\sqrt{1-\alpha}\sum_{j=i_{1}+1}^{i_{2}} \sqrt{\alpha}^{i_{2}-j} \mbf W_j.
\nonumber \end{align}
By defining $$\mbf S=\sqrt{1-\alpha}\sum_{j=i_{1}+1}^{i_{2}} \sqrt{\alpha}^{i_{2}-j} \mbf W_j,$$
it follows that
\begin{align}
    \mbf G_{i_{2}}= \sqrt{\alpha}^{i_{2}-i_{1}}\mbf G_{i_{1}}+\mbf S. \label{newgi2}
\end{align}

Define
$$\tilde{\mbf W}=\mbf S+\sqrt{\alpha}^{i_{2}-i_{1}} \tilde{\mbf G}.$$
 Notice that $\tilde{\mbf W}$ is \textit{independent} of $\mbf G_{i_1},$ since $\mbf G_{i_{1}}$ is independent of $(\mbf S, \tilde{\mbf G}).$ Analogously to the proof of 
Lemma \ref{distributiongain}, one can show that $\vect(\tilde{\mbf W})\sim \mc N_{\mbb C}\left(\bs{0}_{N_{R} N_{T}},\mbf K \right).$
 
 First, we introduce and prove the following claims:
 \begin{claim} \label{firstclaim} \textit{It holds that}
  \begin{align}
    & \log\det(\mbf I_{N_{R}}+\frac{1}{\sigma^2}\mbf G_{i_{2}} \tilde{\mbf Q} \mbf G_{i_{2}}^{H}) \nonumber \\ \nonumber \\
    &\leq \log\det\left( \mbf I_{N_{R}}+\frac{1}{\sigma^2}\tilde{\mbf W}\tilde{\mbf Q} \tilde{\mbf W}^{H} \right) +\frac{N_R}{\ln(2)\sigma^2}\sqrt{\alpha}^{i_{2}-i_{1}} \left(P\lVert \mbf G_{i_{1}} \rVert^2+P\lVert\tilde{\mbf G}\rVert^2+2\lVert \tilde{\mbf W}\tilde{\mbf Q}\tilde{\mbf G}^H \rVert +2P\lVert\mbf G_{i_{1}}\rVert \lVert\mbf S\rVert          \right). \nonumber 
 \end{align}
 \end{claim}
  \begin{claim}
 \label{secondclaim}
 \textit{It holds that}
     \begin{align}
     \log\det(\mbf I_{N_{R}}+\frac{1}{\sigma^2}\mbf G_{i_{1}} \tilde{\mbf Q} \mbf G_{i_{1}}^{H}) \leq \frac{P N_R}{\ln(2)\sigma^2}\lVert \mbf G_{i_{1}} \rVert^{2}. \nonumber  
 \end{align}
 \end{claim}
 \begin{claimproof}
From \eqref{newgi2}, we have
\begin{align}
    &\frac{1}{\sigma^2}\mbf G_{i_{2}} \tilde{\mbf Q} \mbf G_{i_{2}}^{H} \nonumber \\  \nonumber \\
    &=\frac{1}{\sigma^2} \left[\sqrt{\alpha}^{i_{2}-i_{1}}\mbf G_{i_{1}}+\mbf S \right] \tilde{\mbf Q}  \left[\sqrt{\alpha}^{i_{2}-i_{1}}\mbf G_{i_{1}}^H+\mbf S^H   \right]    \nonumber \\ \nonumber \\ 
&=\frac{1}{\sigma^2}\alpha^{i_{2}-i_{1}}\mbf G_{i_{1}} \tilde{\mbf Q} \mbf G_{i_{1}}^{H}   +\frac{1}{\sigma^2}\sqrt{\alpha}^{i_{2}-i_{1}}\mbf G_{i_{1}} \tilde{\mbf Q}\mbf S^H+\frac{1}{\sigma^2}\sqrt{\alpha}^{i_{2}-i_{1}}\mbf S\tilde{\mbf Q} \mbf G_{i_{1}}^{H}+
\frac{1}{\sigma^2}\mbf S\tilde{\mbf Q} \mbf S^H  \nonumber \\ \nonumber \\
&=\frac{1}{\sigma^2}\left[\alpha^{i_{2}-i_{1}}\mbf G_{i_{1}} \tilde{\mbf Q} \mbf G_{i_{1}}^{H}+\mbf S\tilde{\mbf Q}\mbf S^H\right]   +\frac{1}{\sigma^2}\sqrt{\alpha}^{i_{2}-i_{1}}\mbf G_{i_{1}} \tilde{\mbf Q}\mbf S^H+\frac{1}{\sigma^2}\sqrt{\alpha}^{i_{2}-i_{1}}\mbf S\tilde{\mbf Q} \mbf G_{i_{1}}^{H}.
\label{up1} \end{align}

We will prove first that
\begin{align}
      &\frac{1}{\sigma^2}\sqrt{\alpha}^{i_{2}-i_{1}}\mbf G_{i_{1}} \tilde{\mbf Q}\mbf S^H 
 +\frac{1}{\sigma^2}\sqrt{\alpha}^{i_{2}-i_{1}}\mbf S\tilde{\mbf Q} \mbf G_{i_{1}}^{H}\nonumber \\
 &\preceq\frac{2P}{\sigma^2}\sqrt{\alpha}^{i_{2}-i_{1}}\lVert\mbf G_{i_{1}}\rVert \lVert\mbf S\rVert \mbf I_{N_R}. \label{firstpart}
\end{align}

Notice now that the matrix
  $$\frac{1}{\sigma^2}\sqrt{\alpha}^{i_{2}-i_{1}}\mbf G_{i_{1}} \tilde{\mbf Q}\mbf S^H 
 +\frac{1}{\sigma^2}\sqrt{\alpha}^{i_{2}-i_{1}}\mbf S\tilde{\mbf Q} \mbf G_{i_{1}}^{H}$$ is a  Hermitian matrix since it is equal to its Hermitian transpose. \color{black} Since $\mbf A \preceq \lVert \mbf A\rVert \mbf I_{n}$ for any Hermitian matrix $\mbf A \in \mbb C^{n\times n}$ (by Lemma \ref{normposdefinite} in the Appendix), \color{black} it follows that
 \begin{align}
     &\frac{1}{\sigma^2}\sqrt{\alpha}^{i_{2}-i_{1}}\mbf G_{i_{1}} \tilde{\mbf Q}\mbf S^H 
 +\frac{1}{\sigma^2}\sqrt{\alpha}^{i_{2}-i_{1}}\mbf S\tilde{\mbf Q} \mbf G_{i_{1}}^{H} \nonumber \\ \nonumber \\
 &\preceq \Big\lVert \frac{1}{\sigma^2}\sqrt{\alpha}^{i_{2}-i_{1}}\mbf G_{i_{1}} \tilde{\mbf Q}\mbf S^H 
 +\frac{1}{\sigma^2}\sqrt{\alpha}^{i_{2}-i_{1}}\mbf S\tilde{\mbf Q} \mbf G_{i_{1}}^{H}\Big\rVert \mbf I_{N_R} \nonumber \\ \nonumber \\
 & \preceq \left( \frac{1}{\sigma^2}\sqrt{\alpha}^{i_{2}-i_{1}}\lVert\mbf G_{i_{1}}\rVert \lVert \tilde{\mbf Q} \rVert \lVert\mbf S^H\rVert
 +\frac{1}{\sigma^2}\sqrt{\alpha}^{i_{2}-i_{1}}\Vert \mbf S\rVert\lVert\tilde{\mbf Q}\rVert \lVert\mbf G_{i_{1}}^H\rVert \right) \mbf I_{N_R} \nonumber \\ \nonumber \\
 &= \frac{2}{\sigma^2}\sqrt{\alpha}^{i_{2}-i_{1}}\lVert\mbf G_{i_{1}}\rVert \lVert \tilde{\mbf Q} \rVert \lVert\mbf S\rVert \mbf I_{N_R} \nonumber \\ \nonumber \\
 &\preceq\frac{2P}{\sigma^2}\sqrt{\alpha}^{i_{2}-i_{1}}\lVert\mbf G_{i_{1}}\rVert \lVert\mbf S\rVert \mbf I_{N_R}. \nonumber
 \end{align}
 This proves \eqref{firstpart}.

Next, we will prove that
\begin{align}
    &\frac{1}{\sigma^2}\left[\alpha^{i_{2}-i_{1}}\mbf G_{i_{1}} \tilde{\mbf Q} \mbf G_{i_{1}}^{H}+\mbf S\tilde{\mbf Q}\mbf S^H\right] \nonumber \\ \nonumber \\
    &\preceq \frac{1}{\sigma^2}\tilde{\mbf W}\tilde{\mbf Q}\tilde{\mbf W}^{H}+\frac{P}{\sigma^2}\alpha^{i_{2}-i_{1}}\left(\lVert \mbf G_{i_{1}} \rVert^2+\lVert\tilde{\mbf G}\rVert^2\right)\mbf I_{N_R}+\frac{2}{\sigma^2}\sqrt{\alpha}^{i_{2}-i_{1}} \lVert \tilde{\mbf W}\tilde{\mbf Q}\tilde{\mbf G}^H \rVert\mbf I_{N_R}. \label{proofsecondpart}
\end{align}
Since $\tilde{\mbf W}=\mbf S+\sqrt{\alpha}^{i_{2}-i_{1}} \tilde{\mbf G},$ it holds  that
\begin{align}
    &\mbf S\tilde{\mbf Q} \mbf S^H \nonumber \\ \nonumber \\
    &=\left(\mbf S+\sqrt{\alpha}^{i_{2}-i_{1}} \tilde{\mbf G}-\sqrt{\alpha}^{i_{2}-i_{1}} \tilde{\mbf G}\right) \tilde{\mbf Q} \left(\mbf S^H+
    \sqrt{\alpha}^{i_{2}-i_{1}} \tilde{\mbf G}^H-\sqrt{\alpha}^{i_{2}-i_{1}} \tilde{\mbf G}^H\right) \nonumber \\ \nonumber \\
    &=\left(\tilde{\mbf W}-\sqrt{\alpha}^{i_{2}-i_{1}} \tilde{\mbf G}\right) \tilde{\mbf Q} \left(\tilde{\mbf W}^H-\sqrt{\alpha}^{i_{2}-i_{1}} \tilde{\mbf G}^H\right) \nonumber \\ \nonumber \\
    &=\tilde{\mbf W}\tilde{\mbf Q}\tilde{\mbf W}^H-\sqrt{\alpha}^{i_{2}-i_{1}}\tilde{\mbf W}\tilde{\mbf Q}\tilde{\mbf G}^H-\sqrt{\alpha}^{i_{2}-i_{1}}\tilde{\mbf G}\tilde{\mbf Q}\tilde{\mbf W}^H+\alpha^{i_{2}-i_{1}}\tilde{\mbf G} \tilde{\mbf Q}\tilde{\mbf G}^{H}. \nonumber
\end{align}
This yields
\begin{align}
    &\frac{1}{\sigma^2}\left[\alpha^{i_{2}-i_{1}}\mbf G_{i_{1}} \tilde{\mbf Q} \mbf G_{i_{1}}^{H}+\mbf S\tilde{\mbf Q}\mbf S^H\right] \nonumber \\  \nonumber \\
    &=\frac{1}{\sigma^2}\left[\alpha^{i_{2}-i_{1}}\mbf G_{i_{1}} \tilde{\mbf Q} \mbf G_{i_{1}}^{H}+\tilde{\mbf W}\tilde{\mbf Q}\tilde{\mbf W}^H-\sqrt{\alpha}^{i_{2}-i_{1}}\tilde{\mbf W}\tilde{\mbf Q}\tilde{\mbf G}^H-\sqrt{\alpha}^{i_{2}-i_{1}}\tilde{\mbf G}\tilde{\mbf Q}\tilde{\mbf W}^H+\alpha^{i_{2}-i_{1}}\tilde{\mbf G} \tilde{\mbf Q}\tilde{\mbf G}^{H}\right]  \nonumber \\ \nonumber \\
    &=\frac{1}{\sigma^2}\tilde{\mbf W}\tilde{\mbf Q}\tilde{\mbf W}^{H}+\frac{1}{\sigma^2}\alpha^{i_{2}-i_{1}}\left[\mbf G_{i_{1}} \tilde{\mbf Q} \mbf G_{i_{1}}^{H}+\tilde{\mbf G} \tilde{\mbf Q}\tilde{\mbf G}^H\right]-\frac{1}{\sigma^2}\sqrt{\alpha}^{i_{2}-i_{1}}\tilde{\mbf W}\tilde{\mbf Q}\tilde{\mbf G}^H-\frac{1}{\sigma^2}\sqrt{\alpha}^{i_{2}-i_{1}}\tilde{\mbf G}\tilde{\mbf Q}\tilde{\mbf W}^H. 
    \nonumber
    \end{align}
 Now notice that 
$$ \frac{1}{\sigma^2}\alpha^{i_{2}-i_{1}}\left[\mbf G_{i_{1}} \tilde{\mbf Q} \mbf G_{i_{1}}^{H}+\tilde{\mbf G} \tilde{\mbf Q}\tilde{\mbf G}^H\right] $$ 
is a Hermitian matrix.
This implies that
\begin{align}
    &\frac{1}{\sigma^2}\alpha^{i_{2}-i_{1}}\left[\mbf G_{i_{1}} \tilde{\mbf Q} \mbf G_{i_{1}}^{H}+\tilde{\mbf G} \tilde{\mbf Q}\tilde{\mbf G}^H\right] \nonumber \\ \nonumber \\
    &\preceq \frac{1}{\sigma^2}\alpha^{i_{2}-i_{1}}\lVert\mbf G_{i_{1}} \tilde{\mbf Q} \mbf G_{i_{1}}^{H}+\tilde{\mbf G} \tilde{\mbf Q}\tilde{\mbf G}^H\rVert \mbf I_{N_R} \nonumber \\ \nonumber \\
    &\preceq \frac{1}{\sigma^2}\alpha^{i_{2}-i_{1}}\lVert \tilde{\mbf Q}\rVert\left(\lVert \mbf G_{i_{1}} \rVert^2+\lVert\tilde{\mbf G}\rVert^2\right)\mbf I_{N_R} \nonumber \\ \nonumber \\
    &\preceq \frac{P}{\sigma^2}\alpha^{i_{2}-i_{1}}\left(\lVert \mbf G_{i_{1}} \rVert^2+\lVert\tilde{\mbf G}\rVert^2\right)\mbf I_{N_R}. \label{part1m}
\end{align}

Notice also that $-\frac{1}{\sigma^2}\sqrt{\alpha}^{i_{2}-i_{1}}\tilde{\mbf W}\tilde{\mbf Q}\tilde{\mbf G}^H-\frac{1}{\sigma^2}\sqrt{\alpha}^{i_{2}-i_{1}}\tilde{\mbf G}\tilde{\mbf Q}\tilde{\mbf W}^H $ is a Hermitian matrix. It follows that
\begin{align}
    &-\frac{1}{\sigma^2}\sqrt{\alpha}^{i_{2}-i_{1}}\tilde{\mbf W}\tilde{\mbf Q}\tilde{\mbf G}^H-\frac{1}{\sigma^2}\sqrt{\alpha}^{i_{2}-i_{1}}\tilde{\mbf G}\tilde{\mbf Q}\tilde{\mbf W}^H \nonumber \\ \nonumber \\
    &\preceq \lVert -\frac{1}{\sigma^2}\sqrt{\alpha}^{i_{2}-i_{1}}\tilde{\mbf W}\tilde{\mbf Q}\tilde{\mbf G}^H-\frac{1}{\sigma^2}\sqrt{\alpha}^{i_{2}-i_{1}}\tilde{\mbf G}\tilde{\mbf Q}\tilde{\mbf W}^H\rVert \mbf I_{N_R} \nonumber \\ \nonumber \\
    &\preceq \frac{2}{\sigma^2}\sqrt{\alpha}^{i_{2}-i_{1}} \lVert \tilde{\mbf W}\tilde{\mbf Q}\tilde{\mbf G}^H \rVert \mbf I_{N_R}. \label{part2m}
\end{align}
As a result, we have using \eqref{part1m} and \eqref{part2m}
\begin{align}
    &\frac{1}{\sigma^2}\tilde{\mbf W}\tilde{\mbf Q}\tilde{\mbf W}^{H}+\frac{1}{\sigma^2}\alpha^{i_{2}-i_{1}}\left[\mbf G_{i_{1}} \tilde{\mbf Q} \mbf G_{i_{1}}^{H}+\tilde{\mbf G} \tilde{\mbf Q}\tilde{\mbf G}^H\right]-\frac{1}{\sigma^2}\sqrt{\alpha}^{i_{2}-i_{1}}\tilde{\mbf W}\tilde{\mbf Q}\tilde{\mbf G}^H-\frac{1}{\sigma^2}\sqrt{\alpha}^{i_{2}-i_{1}}\tilde{\mbf G}\tilde{\mbf Q}\tilde{\mbf W}^H \nonumber \\ \nonumber \\
    &\preceq \frac{1}{\sigma^2}\tilde{\mbf W}\tilde{\mbf Q}\tilde{\mbf W}^{H}+\frac{P}{\sigma^2}\alpha^{i_{2}-i_{1}}\left(\lVert \mbf G_{i_{1}} \rVert^2+\lVert\tilde{\mbf G}\rVert^2\right)\mbf I_{N_R}+\frac{2}{\sigma^2}\sqrt{\alpha}^{i_{2}-i_{1}} \lVert \tilde{\mbf W}\tilde{\mbf Q}\tilde{\mbf G}^H \rVert \mbf I_{N_R}. \nonumber
\end{align}
This proves \eqref{proofsecondpart}.
We deduce using \eqref{firstpart} and \eqref{proofsecondpart} that
\begin{align}
    &\frac{1}{\sigma^2}\left[\alpha^{i_{2}-i_{1}}\mbf G_{i_{1}} \tilde{\mbf Q} \mbf G_{i_{1}}^{H}+\mbf S\tilde{\mbf Q}\tilde{\mbf S}\right]   +\frac{1}{\sigma^2}\sqrt{\alpha}^{i_{2}-i_{1}}\mbf G_{i_{1}} \tilde{\mbf Q}\mbf S^H+\frac{1}{\sigma^2}\sqrt{\alpha}^{i_{2}-i_{1}}\mbf S\tilde{\mbf Q} \mbf G_{i_{1}}^{H} \nonumber \\ \nonumber \\
    &\preceq \frac{1}{\sigma^2}\tilde{\mbf W}\tilde{\mbf Q}\tilde{\mbf W}^{H}+\frac{P}{\sigma^2}\alpha^{i_{2}-i_{1}}\left(\lVert \mbf G_{i_{1}} \rVert^2+\lVert\tilde{\mbf G}\rVert^2\right)\mbf I_{N_R}+\frac{2}{\sigma^2}\sqrt{\alpha}^{i_{2}-i_{1}} \lVert \tilde{\mbf W}\tilde{\mbf Q}\tilde{\mbf G}^H \rVert \mbf I_{N_R}+\frac{2P}{\sigma^2}\sqrt{\alpha}^{i_{2}-i_{1}}\lVert\mbf G_{i_{1}}\rVert \lVert\mbf S\rVert \mbf I_{N_R} \nonumber \\ \nonumber \\
    &= \frac{1}{\sigma^2}\tilde{\mbf W}\tilde{\mbf Q}\tilde{\mbf W}^{H}+\frac{P}{\sigma^2}\alpha^{i_{2}-i_{1}}\left(\lVert \mbf G_{i_{1}} \rVert^2+\lVert\tilde{\mbf G}\rVert^2\right)\mbf I_{N_R}  +\frac{2}{\sigma^2}\sqrt{\alpha}^{i_{2}-i_{1}}\left(\lVert \tilde{\mbf W}\tilde{\mbf Q}\tilde{\mbf G}^H \rVert +P\lVert\mbf G_{i_{1}}\rVert \lVert\mbf S\rVert\right)\mbf I_{N_R} \nonumber \\ \nonumber \\
    &\overset{(a)}{\preceq}\frac{1}{\sigma^2}\tilde{\mbf W}\tilde{\mbf Q}\tilde{\mbf W}^{H}+\frac{P}{\sigma^2}\sqrt{\alpha}^{i_{2}-i_{1}}\left(\lVert \mbf G_{i_{1}} \rVert^2+\lVert\tilde{\mbf G}\rVert^2\right)\mbf I_{N_R}  +\frac{2}{\sigma^2}\sqrt{\alpha}^{i_{2}-i_{1}}\left(\lVert \tilde{\mbf W}\tilde{\mbf Q}\tilde{\mbf G}^H \rVert +P\lVert\mbf G_{i_{1}}\rVert \lVert\mbf S\rVert\right)\mbf I_{N_R} \nonumber \\ \nonumber \\
    &=  \frac{1}{\sigma^2}\tilde{\mbf W}\tilde{\mbf Q}\tilde{\mbf W}^{H}+\frac{1}{\sigma^2}\sqrt{\alpha}^{i_{2}-i_{1}} \left(P\lVert \mbf G_{i_{1}} \rVert^2+P\lVert\tilde{\mbf G}\rVert^2+2\lVert \tilde{\mbf W}\tilde{\mbf Q}\tilde{\mbf G}^H \rVert +2P\lVert\mbf G_{i_{1}}\rVert \lVert\mbf S\rVert          \right)\mbf I_{N_R}, \label{parta2}
\end{align}
where $(a)$ follows because $\alpha<\sqrt{\alpha}$ for $0<\alpha<1.$

Therefore, it follows from \eqref{up1} and \eqref{parta2} that
\begin{align}
&\frac{1}{\sigma^2}\mbf G_{i_{2}} \tilde{\mbf Q} \mbf G_{i_{2}}^{H} \nonumber \\  \nonumber \\
    &\preceq\frac{1}{\sigma^2}\tilde{\mbf W}\tilde{\mbf Q}\tilde{\mbf W}^{H}+\frac{1}{\sigma^2}\sqrt{\alpha}^{i_{2}-i_{1}} \left(P\lVert \mbf G_{i_{1}} \rVert^2+P\lVert\tilde{\mbf G}\rVert^2+2\lVert \tilde{\mbf W}\tilde{\mbf Q}\tilde{\mbf G}^H \rVert +2P\lVert\mbf G_{i_{1}}\rVert \lVert\mbf S\rVert          \right)\mbf I_{N_R}.\nonumber
\end{align}
This yields
\begin{align}
&\log\det(\mbf I_{N_{R}}+\frac{1}{\sigma^2}\mbf G_{i_{2}} \tilde{\mbf Q} \mbf G_{i_{2}}^{H}) \nonumber \\ \nonumber \\
&\leq \log\det\left(\mbf I_{N_R}+\frac{1}{\sigma^2}\tilde{\mbf W}\tilde{\mbf Q}\tilde{\mbf W}^{H}+\frac{1}{\sigma^2}\sqrt{\alpha}^{i_{2}-i_{1}} \left(P\lVert \mbf G_{i_{1}} \rVert^2+P\lVert\tilde{\mbf G}\rVert^2+2\lVert \tilde{\mbf W}\tilde{\mbf Q}\tilde{\mbf G}^H \rVert +2P\lVert\mbf G_{i_{1}}\rVert \lVert\mbf S\rVert          \right)\mbf I_{N_R}\right).
\label{up2}  \end{align}

Now by Lemma \ref{bounddeterminantsum} in the Appendix, we know that for any positive-definite Hermitian matrix $\mbf A \in \mbb C^{n\times n} $ with smallest eigenvalue $\lambda_{\min}(\mbf A)$  and for any positive semi-definite Hermitian matrix $\mbf B \in \mbb C^{n \times n},$  the following is satisfied:
\begin{align}
    \log\det(\mbf A+\mbf B)\leq\log\det(\mbf A)+\log\det(\mbf I_{n}+\frac{1}{\lambda_{\min}(\mbf A)}\mbf B).
\nonumber \end{align}
By applying Lemma \ref{bounddeterminantsum} in the Appendix
for 

$$\mbf A=\mbf I_{N_{R}}+\frac{1}{\sigma^2}\tilde{\mbf W}\tilde{\mbf Q} \tilde{\mbf W}^{H}     $$
and 
for 
\begin{align}
    \mbf B&=\frac{1}{\sigma^2}\sqrt{\alpha}^{i_{2}-i_{1}} \left(P\lVert \mbf G_{i_{1}} \rVert^2+P\lVert\tilde{\mbf G}\rVert^2+2\lVert \tilde{\mbf W}\tilde{\mbf Q}\tilde{\mbf G}^H \rVert +2P\lVert\mbf G_{i_{1}}\rVert \lVert\mbf S\rVert          \right)\mbf I_{N_R}, \nonumber
\end{align}
it follows from \eqref{up2} that
\begin{align}
&\log\det(\mbf I_{N_{R}}+\frac{1}{\sigma^2}\mbf G_{i_{2}} \tilde{\mbf Q} \mbf G_{i_{2}}^{H}) \nonumber \\ \nonumber \\
&\leq \log\det\left( \mbf I_{N_{R}}+\frac{1}{\sigma^2}\tilde{\mbf W}\tilde{\mbf Q} \tilde{\mbf W}^{H} \right) \nonumber \\ \nonumber \\
&\quad +\log \det \left(\mbf I_{N_{R}}+\frac{\frac{1}{\sigma^2}\sqrt{\alpha}^{i_{2}-i_{1}} \left(P\lVert \mbf G_{i_{1}} \rVert^2+P\lVert\tilde{\mbf G}\rVert^2+2\lVert \tilde{\mbf W}\tilde{\mbf Q}\tilde{\mbf G}^H \rVert +2P\lVert\mbf G_{i_{1}}\rVert \lVert\mbf S\rVert          \right)}{\lambda_{\min}\left( \mbf I_{N_{R}}+\frac{1}{\sigma^2}\tilde{\mbf W}\tilde{\mbf Q} \tilde{\mbf W}^{H}   \right)}\mbf I_{N_{R}} \right) \nonumber \\ \nonumber \\
&\overset{(a)}{\leq}\log\det\left( \mbf I_{N_{R}}+\frac{1}{\sigma^2}\tilde{\mbf W}\tilde{\mbf Q} \tilde{\mbf W}^{H} \right) \nonumber \\ \nonumber \\
&\quad +\frac{1}{\ln(2)}\mathrm{tr}\left[\frac{\frac{1}{\sigma^2}\sqrt{\alpha}^{i_{2}-i_{1}} \left(P\lVert \mbf G_{i_{1}} \rVert^2+P\lVert\tilde{\mbf G}\rVert^2+2\lVert \tilde{\mbf W}\tilde{\mbf Q}\tilde{\mbf G}^H \rVert +2P\lVert\mbf G_{i_{1}}\rVert \lVert\mbf S\rVert          \right) }{\lambda_{\min}\left( \mbf I_{N_{R}}+\frac{1}{\sigma^2}\tilde{\mbf W}\tilde{\mbf Q} \tilde{\mbf W}^{H}   \right)}\mbf I_{N_{R}} \right] \nonumber \\ \nonumber \\
&\overset{(b)}{\leq} \log\det\left( \mbf I_{N_{R}}+\frac{1}{\sigma^2}\tilde{\mbf W}\tilde{\mbf Q} \tilde{\mbf W}^{H} \right) \nonumber \\ \nonumber \\
&\quad +\frac{1}{\ln(2)}\mathrm{tr}\left[\frac{1}{\sigma^2}\sqrt{\alpha}^{i_{2}-i_{1}} \left(P\lVert \mbf G_{i_{1}} \rVert^2+P\lVert\tilde{\mbf G}\rVert^2+2\lVert \tilde{\mbf W}\tilde{\mbf Q}\tilde{\mbf G}^H \rVert +2P\lVert\mbf G_{i_{1}}\rVert \lVert\mbf S\rVert          \right)\mbf I_{N_{R}}\right]  \nonumber \\ \nonumber \\
&\overset{(c)}{=} \log\det\left( \mbf I_{N_{R}}+\frac{1}{\sigma^2}\tilde{\mbf W}\tilde{\mbf Q} \tilde{\mbf W}^{H} \right)+\frac{N_R}{\ln(2)\sigma^2}\sqrt{\alpha}^{i_{2}-i_{1}} \left(P\lVert \mbf G_{i_{1}} \rVert^2+P\lVert\tilde{\mbf G}\rVert^2+2\lVert \tilde{\mbf W}\tilde{\mbf Q}\tilde{\mbf G}^H \rVert +2P\lVert\mbf G_{i_{1}}\rVert \lVert\mbf S\rVert          \right)
,\nonumber
\end{align}
where $(a)$ follows because $  \ln\det(\mbf I_{n}+\mbf A) \leq \mathrm{tr}(\mbf A)$ for positive semi-definite $\mbf A,$ $(b)$ follows because $\lambda_{\min}\left( \mbf I_{N_{R}}+\frac{1}{\sigma^2}\tilde{\mbf W}\tilde{\mbf Q} \tilde{\mbf W}^{H}   \right)\geq 1$ and  $(c)$ follows because $\mathrm{tr}(c \mbf I_{N_R})=c N_{R}$ for any constant $c.$
This completes the proof of Claim \ref{firstclaim}.
 \end{claimproof}
  \begin{claimproof}
\color{black}Since $\mbf A \preceq \lVert \mbf A \rVert\mbf I_{n}$ for any Hermitian $\mbf A \in \mbb C^{n\times n}$  (by Lemma \ref{normposdefinite} in the Appendix) \color{black} and since the matrix $\mbf G_{i_{1}} \tilde{\mbf Q} \mbf G_{i_{1}}^{H}$ is Hermitian, we have
 \begin{align}
     \log\det(\mbf I_{N_{R}}+\frac{1}{\sigma^2}\mbf G_{i_{1}} \tilde{\mbf Q} \mbf G_{i_{1}}^{H}) &\leq \log\det(\mbf I_{N_{R}}+\frac{1}{\sigma^2}\lVert \mbf G_{i_{1}} \tilde{\mbf Q} \mbf G_{i_{1}}^{H}\rVert\mbf I_{N_{R}} ) \nonumber \\ \nonumber \\
     &\leq  \frac{1}{\ln(2)}\mathrm{tr}\left[\frac{1}{\sigma^2}\lVert \mbf G_{i_{1}} \tilde{\mbf Q} \mbf G_{i_{1}}^{H}\rVert \mbf I_{N_{R}} \right] \nonumber \\ \nonumber \\
     &\leq \frac{1}{\ln(2)}\mathrm{tr}\left[\frac{1}{\sigma^2}\lVert \mbf G_{i_{1}} \rVert^{2} \lVert \tilde{\mbf Q} \rVert \mbf I_{N_{R}}\right]\nonumber \\ \nonumber \\
     &=\frac{N_R}{\ln(2)\sigma^2}\lVert \mbf G_{i_{1}} \rVert^{2} \lVert \tilde{\mbf Q} \rVert \nonumber \\ \nonumber \\
     &\leq \frac{P N_R}{\ln(2)\sigma^2}\lVert \mbf G_{i_{1}} \rVert^{2}.  \nonumber
 \end{align}
 This completes the proof of Claim \ref{secondclaim}.
   \end{claimproof}

 Now that we proved the two claims, we let 
\begin{align}
    &\Lambda\left( \mbf G_{i_{1}},\mbf S,\tilde{\mbf G},\tilde{\mbf W}\right)=\lVert \mbf G_{i_{1}} \rVert^{2} \left(P\lVert \mbf G_{i_{1}} \rVert^2+P\lVert\tilde{\mbf G}\rVert^2+2\lVert \tilde{\mbf W}\tilde{\mbf Q}\tilde{\mbf G}^H \rVert +2P\lVert\mbf G_{i_{1}}\rVert \lVert\mbf S\rVert          \right). \nonumber \\ \nonumber 
\end{align}
It follows using Claim \ref{firstclaim} and Claim \ref{secondclaim} that
\begin{align}
 &\mbb E \left[ \log\det(\mbf I_{N_{R}}+\frac{1}{\sigma^2}\mbf G_{i_{2}} \tilde{\mbf Q} \mbf G_{i_{2}}^{H})\log\det(\mbf I_{N_{R}}+\frac{1}{\sigma^2}\mbf G_{i_{1}} \tilde{\mbf Q} \mbf G_{i_{1}}^{H})\right] \nonumber \\ \nonumber \\
 &\leq \mbb E \left[\log\det\left( \mbf I_{N_{R}}+\frac{1}{\sigma^2}\tilde{\mbf W}\tilde{\mbf Q} \tilde{\mbf W}^{H} \right)
 \log\det\left(\mbf I_{N_{R}}+\frac{1}{\sigma^2}\mbf G_{i_{1}} \tilde{\mbf Q} \mbf G_{i_{1}}^{H}\right)\right]\nonumber \\ \nonumber \\
 &\quad+\mbb E\left[\frac{N_R}{\ln(2)\sigma^2}\sqrt{\alpha}^{i_{2}-i_{1}} \left(P\lVert \mbf G_{i_{1}} \rVert^2+P\lVert\tilde{\mbf G}\rVert^2+2\lVert \tilde{\mbf W}\tilde{\mbf Q}\tilde{\mbf G}^H \rVert +2P\lVert\mbf G_{i_{1}}\rVert \lVert\mbf S\rVert          \right)\frac{P N_R}{\ln(2)\sigma^2}\lVert \mbf G_{i_{1}} \rVert^{2} \right]  \nonumber \\ \nonumber \\
 &=\mbb E \left[\log\det\left( \mbf I_{N_{R}}+\frac{1}{\sigma^2}\tilde{\mbf W}\tilde{\mbf Q} \tilde{\mbf W}^{H} \right)
 \log\det\left(\mbf I_{N_{R}}+\frac{1}{\sigma^2}\mbf G_{i_{1}} \tilde{\mbf Q} \mbf G_{i_{1}}^{H}\right)\right]\nonumber \\ \nonumber \\
 &\quad+\mbb E\left[\frac{P N_R^2}{\ln(2)^2\sigma^4}\sqrt{\alpha}^{i_{2}-i_{1}}\lVert \mbf G_{i_{1}} \rVert^{2} \left(P\lVert \mbf G_{i_{1}} \rVert^2+P\lVert\tilde{\mbf G}\rVert^2+2\lVert \tilde{\mbf W}\tilde{\mbf Q}\tilde{\mbf G}^H \rVert +2P\lVert\mbf G_{i_{1}}\rVert \lVert\mbf S\rVert          \right) \right]  \nonumber \\ \nonumber \\
 &\overset{(a)}{=}\mbb E \left[\log\det\left( \mbf I_{N_{R}}+\frac{1}{\sigma^2}\tilde{\mbf W}\tilde{\mbf Q} \tilde{\mbf W}^{H} \right)\right]\mbb E\left[
 \log\det\left(\mbf I_{N_{R}}+\frac{1}{\sigma^2}\mbf G_{i_{1}} \tilde{\mbf Q} \mbf G_{i_{1}}^{H}\right)\right]+ \frac{P N_R^2}{\ln(2)^2\sigma^4}\sqrt{\alpha}^{i_{2}-i_{1}} \mbb E\left[\Lambda\left( \mbf G_{i_{1}},\mbf S,\tilde{\mbf G},\tilde{\mbf W}\right)\right] \nonumber \\ \nonumber \\
 &\overset{(b)}{=}\mbb E\left[
 \log\det\left(\mbf I_{N_{R}}+\frac{1}{\sigma^2}\tilde{\mbf G}\tilde{\mbf Q} \tilde{\mbf G} ^{H}\right)\right]^2+\frac{P N_R^2}{\ln(2)^2\sigma^4}\sqrt{\alpha}^{i_{2}-i_{1}}  \mbb E\left[\Lambda\left( \mbf G_{i_{1}},\mbf S,\tilde{\mbf G},\tilde{\mbf W}\right)\right], 
 \nonumber
\end{align}
where $(a)$ follows because $\tilde{\mbf W}$ and $\mbf G_{i_{1}}$ are independent and $(b)$ follows because $\tilde{\mbf G}$ has the same distribution as $\tilde{\mbf W}$ and $\mbf G_{i_{1}}$ since $   \vect\left(\tilde{\mbf W}\right)\sim \mc N_{\mbb C}\left( \bs{0}_{N_RN_T},\mbf K\right)$
and since from Lemma \ref{distributiongain}, we know that $   \vect\left(\mbf G_{i_{1}}\right)\sim \mc N_{\mbb C}\left( \bs{0}_{N_RN_T},\mbf K\right).$ 

Now, from Lemma \ref{meanbounded} in the Appendix, we know that $ \mbb E\left[\Lambda\left( \mbf G_{i_{1}},\mbf S,\tilde{\mbf G},\tilde{\mbf W}\right)\right]$ is bounded from above by some $c>0$. Therefore it follows that for $i_1<i_2$
\begin{align}
&\mbb E \left[ \log\det(\mbf I_{N_{R}}+\frac{1}{\sigma^2}\mbf G_{i_{2}} \tilde{\mbf Q} \mbf G_{i_{2}}^{H})\log\det(\mbf I_{N_{R}}+\frac{1}{\sigma^2}\mbf G_{i_{1}} \tilde{\mbf Q} \mbf G_{i_{1}}^{H})\right] \nonumber \\
&\leq \mbb E\left[
 \log\det\left(\mbf I_{N_{R}}+\frac{1}{\sigma^2}\tilde{\mbf G}\tilde{\mbf Q} \tilde{\mbf G} ^{H}\right)\right]^2+\frac{PN_{R}^2}{\ln(2)^2\sigma^4} c\sqrt{\alpha}^{i_{2}-i_{1}}
 \nonumber \\
 &=\mbb E\left[
 \log\det\left(\mbf I_{N_{R}}+\frac{1}{\sigma^2}\tilde{\mbf G}\tilde{\mbf Q} \tilde{\mbf G} ^{H}\right)\right]^2+c'\sqrt{\alpha}^{i_{2}-i_{1}}, \nonumber
\end{align}
for some $c>0,$ where $c'= \frac{PN_{R}^2 c}{\ln(2)^2\sigma^4}>0.$ 
This completes the proof of Lemma \ref{meanproducti1i2}.
 \end{proof}
Now that we proved Lemma \ref{meanproducti1i2}, we will use that lemma to prove that \begin{align} \frac{1}{n^2}\sum_{i=1}^{n}\sum_{k=1}^{i-1} m(i,k)+\frac{1}{n^2}\sum_{i=1}^{n}\sum_{k=i+1}^{n} m(i,k)
    &\leq \frac{2c'}{n(1-\sqrt{\alpha})}. \nonumber
\end{align}

We recall that for any $i,k\in\{1,\hdots n\}$ with $i\neq k,$
 \begin{align}
m(i,k) &=\mbb E \left[\log\det(\mbf I_{N_{R}}+\frac{1}{\sigma^2}\mbf G_i \tilde{\mbf Q} \mbf G_i^{H})\log\det(\mbf I_{N_{R}}+\frac{1}{\sigma^2}\mbf G_k \tilde{\mbf Q} \mbf G_k^{H})\right]-\mbb E \left[ \log\det\left(\mbf I_{N_{R}}+\frac{1}{\sigma^2}\tilde{\mbf G} \tilde{\mbf Q} \tilde{\mbf G}^H\right)\right]^2. 
\nonumber \end{align}
\underline{If $k<i:$}  Lemma \ref{meanproducti1i2} implies that
\begin{align}
        m(i,k) \leq c'\sqrt{\alpha}^{i-k}.
\nonumber \end{align} \underline{If $i<k:$} Lemma \ref{meanproducti1i2} implies that
\begin{align}
        m(i,k) \leq c'\sqrt{\alpha}^{k-i}.
\nonumber \end{align}

Therefore, we have

\begin{align}
    &\frac{1}{n^2}\sum_{i=1}^{n}\sum_{k=1}^{i-1} m(i,k) \nonumber \\
    &\leq \frac{c'}{n^2} \sum_{i=1}^{n}\sum_{k=1}^{i-1}\sqrt{\alpha}^{i-k} \nonumber \\ 
    &\leq \frac{c'}{n(1-\sqrt{\alpha})},\label{boundfirstpart}
\end{align}
because by Lemma \ref{sumterms1} in the Appendix, we have  for any $0<\alpha<1$
\begin{align}
    \sum_{i=1}^{n} \sum_{k=1}^{i-1} \alpha^{i-k}\leq \frac{n}{1-\alpha}.\nonumber
 \end{align}
Furthermore, it holds that 
\begin{align}
\frac{1}{n^2}\sum_{i=1}^{n}\sum_{k=i+1}^{n} m(i,k)&\leq \frac{c'}{n^2}\sum_{i=1}^{n}\sum_{k=i+1}^{n} \sqrt{\alpha}^{k-i} \nonumber \\  \nonumber \\
&\leq \frac{c'}{n(1-\sqrt{\alpha})} \label{boundsecondpart}
\end{align}
\color{black}
because by Lemma \ref{sumtermsiplus1} in the Appendix, we have for any $0<\alpha<1$
\begin{align}
    \sum_{i=1}^{n} \sum_{k=i+1}^{n} \alpha^{k-i}\leq\frac{n}{1-\alpha}. \nonumber
    \end{align}

From \eqref{boundfirstpart} and \eqref{boundsecondpart}, we deduce that

\begin{align} \frac{1}{n^2}\sum_{i=1}^{n}\sum_{k=1}^{i-1} m(i,k)+\frac{1}{n^2}\sum_{i=1}^{n}\sum_{k=i+1}^{n} m(i,k)
    &\leq \frac{2c'}{n(1-\sqrt{\alpha})}. \nonumber
\end{align}
\subsection{Upper-bound for $\frac{1}{n^2}\sum_{i=1}^{n}\mbb E\left[ i(\bs{T}_i;\bs{Z}_i,\mbf G_i)^2\right]$ for $0\leq \alpha <1$}
We are going to prove that
\begin{align}
    \frac{1}{n^2}\sum_{i=1}^{n}\mbb E\left[ i(\bs{T}_i;\bs{Z}_i,\mbf G_i)^2\right]\leq \frac{c''}{n} \nonumber
\end{align}
for some $c''>0.$
It suffices to show that $\mbb E\left[ i(\bs{T}_i;\bs{Z}_i,\mbf G_i)^2\right]$ is bounded from above for $i=1,\hdots,n.$ 
Recall that
\begin{align}
\bs{Z}_{i}=\mbf G_i\bs{T}_{i}+\bs{\xi}_{i}, \quad i=1\hdots n \nonumber
\end{align}
and that for $i=1\hdots n$
$$\bs{\xi}_{i} \sim \mathcal{N}_{\mathbb{C}}\left(\mbf 0_{N_{R}},\sigma^2\mbf I_{N_{R}}\right).$$
By Lemma \ref{infdensityi} in the Appendix, we know that for $i=1,\hdots, n$
\begin{align}
    &i(\bs{T}_i;\bs{Z}_i,\mbf G_i) \nonumber \\ 
    &=\log\det(\mbf I_{N_{R}}+\frac{1}{\sigma^2}\mbf G_i \tilde{\mbf Q} \mbf G_i^{H})-\frac{1}{\ln(2)\sigma^2}\left(\bs{Z}_i-\mbf G_i\bs{T}_i \right)^{H}\left(\bs{Z}_i-\mbf G_i \bs{T}_i\right)+\frac{1}{\ln(2)\sigma^2}\bs{Z}_i^{H} \left(\mbf I_{N_{R}}+\frac{1}{\sigma^2}\mbf G \tilde{\mbf Q} \mbf G^{H}  \right)^{-1}\bs{Z}_i. 
\nonumber \end{align}
We have
\begin{align} 
&\lvert i(\bs{T}_i;\bs{Z}_i,\mbf G_i)\rvert \nonumber \\ \nonumber \\
    &= \Bigg\lvert \log\det(\mbf I_{N_{R}}+\frac{1}{\sigma^2}\mbf G_i \tilde{\mbf Q} \mbf G_i^{H})-\frac{1}{\ln(2)\sigma^2}\left(\bs{Z}_i-\mbf G_i\bs{T}_i \right)^{H}\left(\bs{Z}_i-\mbf G_i \bs{T}_i\right)+\frac{1}{\ln(2)\sigma^2}\bs{Z}_i^{H} \left(\mbf I_{N_{R}}+\frac{1}{\sigma^2}\mbf G \tilde{\mbf Q} \mbf G^{H}  \right)^{-1}\bs{Z}_i \Bigg\rvert \nonumber \\ \nonumber \\
    &\leq \Bigg\vert\log\det(\mbf I_{N_{R}}+\frac{1}{\sigma^2}\mbf G_i \tilde{\mbf Q} \mbf G_i^{H})+\frac{1}{\ln(2)\sigma^2}\bs{Z}_i^{H} \left(\mbf I_{N_{R}}+\frac{1}{\sigma^2}\mbf G_i \tilde{\mbf Q} \mbf G_i^{H}  \right)^{-1}\bs{Z}_i \Bigg\rvert+\frac{1}{\ln(2)\sigma^2}\Bigg\lvert\left(\bs{Z}_i-\mbf G_i\bs{T}_i \right)^{H}\left(\bs{Z}_i-\mbf G_i \bs{T}_i\right)\Bigg\rvert \nonumber \\ \nonumber \\
    &=\Bigg\lvert\log\det(\mbf I_{N_{R}}+\frac{1}{\sigma^2}\mbf G_i \tilde{\mbf Q} \mbf G_i^{H})+\frac{1}{\ln(2)\sigma^2}\bs{Z}_i^{H} \left(\mbf I_{N_{R}}+\frac{1}{\sigma^2}\mbf G_i \tilde{\mbf Q} \mbf G_i^{H}  \right)^{-1}\bs{Z}_i \Bigg\rvert+\frac{1}{\ln(2)\sigma^2}\lvert\bs{\xi}_i^{H}\bs{\xi}_i\rvert \nonumber \\ \nonumber\\
    &=\Bigg\lvert\log\det(\mbf I_{N_{R}}+\frac{1}{\sigma^2}\mbf G_i \tilde{\mbf Q} \mbf G_i^{H})+\frac{1}{\ln(2)\sigma^2}\bs{Z}_i^{H} \left(\mbf I_{N_{R}}+\frac{1}{\sigma^2}\mbf G_i \tilde{\mbf Q} \mbf G_i^{H}  \right)^{-1}\bs{Z}_i \Bigg\rvert+\frac{1}{\ln(2)\sigma^2}\lVert\bs{\xi}_i\rVert^{2}.
\nonumber \end{align}
Since $i(\bs{T}_i;\bs{Z}_i,\mbf G_i) \in \mbb R, $ we have
 \begin{align}
    &i(\bs{T}_i;\bs{Z}_i,\mbf G_i)^{2} \nonumber \\ \nonumber \\
    &=\lvert i(\bs{T}_i;\bs{Z}_i,\mbf G_i)\rvert^{2} \nonumber \\ \nonumber \\
    &\leq \left(\Bigg\lvert\log\det(\mbf I_{N_{R}}+\frac{1}{\sigma^2}\mbf G_i \tilde{\mbf Q} \mbf G_i^{H})+\frac{1}{\ln(2)\sigma^2}\bs{Z}_i^{H} \left(\mbf I_{N_{R}}+\frac{1}{\sigma^2}\mbf G_i \tilde{\mbf Q} \mbf G_i^{H}  \right)^{-1}\bs{Z}_i \Bigg\rvert+\frac{1}{\ln(2)\sigma^2}\lVert\bs{\xi}_i\rVert^{2} \right)^2 \nonumber \\ \nonumber \\
    &\overset{(a)}{\leq} 2 \left(\Bigg\lvert\log\det(\mbf I_{N_{R}}+\frac{1}{\sigma^2}\mbf G_i \tilde{\mbf Q} \mbf G_i^{H})+\frac{1}{\ln(2)\sigma^2}\bs{Z}_i^{H} \left(\mbf I_{N_{R}}+\frac{1}{\sigma^2}\mbf G_i \tilde{\mbf Q} \mbf G_i^{H}  \right)^{-1}\bs{Z}_i \Bigg\rvert\right)^2+\frac{2}{\ln(2)^2\sigma^4}\lVert\bs{\xi}_i\rVert^{4} \nonumber \\ \nonumber \\
    &\overset{(b)}{\leq} 4\left[\log\det(\mbf I_{N_{R}}+\frac{1}{\sigma^2}\mbf G_i \tilde{\mbf Q} \mbf G_i^{H})\right]^2+\frac{4}{\ln(2)^2\sigma^4}\left(\bs{Z}_i^{H} (\mbf I_{N_{R}}+\frac{1}{\sigma^2}\mbf G_i \tilde{\mbf Q} \mbf G_i^{H} )^{-1}\bs{Z}_i\right)^2+\frac{2}{\ln(2)^2\sigma^4}\lVert\bs{\xi}_i\rVert^{4} \nonumber\\ \nonumber \\
    &\leq  4\left[\log\det(\mbf I_{N_{R}}+\frac{1}{\sigma^2}\mbf G_i \tilde{\mbf Q} \mbf G_i^{H})\right]^2+\frac{4}{\ln(2)^2\sigma^4}\lVert (\mbf I_{N_{R}}+\frac{1}{\sigma^2}\mbf G_i \tilde{\mbf Q} \mbf G_i^{H})^{-1} \rVert^2 \lVert \bs{Z}_i \rVert^{4} +\frac{2}{\ln(2)^2\sigma^4}\lVert\bs{\xi}_i\rVert^{4}\nonumber \\ \nonumber \\
    &\overset{(c)}{\leq}  4\left[\log\det(\mbf I_{N_{R}}+\frac{1}{\sigma^2}\mbf G_i \tilde{\mbf Q} \mbf G_i^{H})\right]^2+\frac{4}{\ln(2)^2\sigma^4} \lVert \bs{Z}_i \rVert^{4}+\frac{2}{\ln(2)^2\sigma^4}\lVert\bs{\xi}_i\rVert^{4} \nonumber \\ \nonumber \\
    &= 4\left[\log\det(\mbf I_{N_{R}}+\frac{1}{\sigma^2}\mbf G_i \tilde{\mbf Q} \mbf G_i^{H})\right]^2+\frac{4}{\ln(2)^2\sigma^4} \lVert \mbf G_i \bs{T}_i+\bs{\xi}_i\rVert^{4}+\frac{2}{\ln(2)^2\sigma^4}\lVert\bs{\xi}_i\rVert^{4} \nonumber \\ \nonumber \\
    &\leq4\left[\log\det(\mbf I_{N_{R}}+\frac{1}{\sigma^2}\mbf G_i \tilde{\mbf Q} \mbf G_i^{H})\right]^2+\frac{4}{\ln(2)^2\sigma^4} \left(\lVert \mbf G_i\rVert \lVert\bs{T}_i\rVert+\lVert\bs{\xi}_i\rVert\right)^{4}+\frac{2}{\ln(2)^2\sigma^4}\lVert\bs{\xi}_i\rVert^{4} \nonumber \\ \nonumber \\
    &\overset{(d)}{\leq} 4\left[\log\det(\mbf I_{N_{R}}+\frac{1}{\sigma^2}\lVert\mbf G_i \tilde{\mbf Q} \mbf G_i^{H}\rVert \mbf I_{N_{R}})\right]^2+\frac{4}{\ln(2)^2\sigma^4} \left(2 \lVert \mbf G_i \rVert^2\lVert \bs{T}_i \rVert^2+2\lVert \bs{\xi}_i \rVert^2  \right)^{2}+\frac{2}{\ln(2)^2\sigma^4}\lVert\bs{\xi}_i\rVert^{4} \nonumber \\ \nonumber \\
    &\overset{(e)}{\leq} \frac{4}{\ln(2)^2} \left[\mathrm{tr}\left( \frac{1}{\sigma^2}\lVert\mbf G_i \tilde{\mbf Q} \mbf G_i^{H}\rVert \mbf I_{N_{R}}\right)\right]^{2}+\frac{32}{\ln(2)^2\sigma^4} \left(\lVert \mbf G_i \rVert^4 \lVert \bs{T}_i \Vert^{4}+\lVert \bs{\xi}_{i}\rVert^4  \right)+\frac{2}{\ln(2)^2\sigma^4}\lVert\bs{\xi}_i\rVert^{4} \nonumber \\ \nonumber \\
    &\leq \frac{4}{\ln(2)^2\sigma^4}N_{R}^{2}\lVert \mbf G_i \rVert^{4}\lVert \tilde{\mbf Q} \rVert^2+\frac{32}{\ln(2)^2\sigma^4} \left(\lVert \mbf G_i \rVert^4 \lVert \bs{T}_i \Vert^{4}+\lVert \bs{\xi}_{i}\rVert^4  \right)+\frac{2}{\ln(2)^2\sigma^4}\lVert\bs{\xi}_i\rVert^{4} \nonumber \\ \nonumber \\
    &\overset{(f)}{\leq} \frac{4}{\ln(2)^2\sigma^4}N_{R}^{2}P^2\lVert \mbf G_i \rVert^{4}+\frac{32}{\ln(2)^2\sigma^4} \left(\lVert \mbf G_i \rVert^4 \lVert \bs{T}_i \Vert^{4}+\lVert \bs{\xi}_{i}\rVert^4  \right)+\frac{2}{\ln(2)^2\sigma^4}\lVert\bs{\xi}_i\rVert^{4},
\nonumber \end{align}
where $(a)(b)$ follow because for $K_{1},K_{2}\geq 0,$ $(K_{1}+K_{2})^2\leq 2K_{1}^2+2K_{2}^2,$ $(c)$ follows because $\lVert (\mbf I_{N_{R}}+\frac{1}{\sigma^2}\mbf G_i \tilde{\mbf Q} \mbf G_i^{H})^{-1} \rVert=\frac{1}{\lambda_{\min}(\mbf I_{N_{R}}+\frac{1}{\sigma^2}\mbf G_i \tilde{\mbf Q} \mbf G_i^{H})}\leq 1$, $(d)$ follows because $\mbf A \preceq \lVert \mbf A \rVert\mbf I_{n}$ for any Hermitian $\mbf A \in \mbb C^{n\times n}$ (by Lemma \ref{normposdefinite} in the Appendix),  $(e)$ follows because $\ln\det(\mbf I_{n}+\mbf A) \leq \mathrm{tr}(\mbf A)$ for $\mbf A$ positive semi-definite and because for $K_{1},K_{2}\geq 0,$ $(K_{1}+K_{2})^{2}\leq 2K_{1}^2+2K_{2}^2$ and $(f)$ follows because $\lVert \tilde{\mbf Q} \rVert=\lambda_{\max}(\tilde{\mbf Q})\leq \text{tr}(\tilde{\mbf Q})\leq P.$
This implies using the fact that $\mbf G_i$ and $\bs{T}_i$ are independent that
\begin{align}
  \mbb E \left[ i(\bs{T}_i;\bs{Z}_i,\mbf G_i)^2\right] 
  &\leq \frac{4P^2}{\ln(2)^2\sigma^4}N_{R}^{2}\mbb E \left[\lVert \mbf G_i \rVert^{4}\right]+\frac{32}{\ln(2)^2\sigma^4} \left(\mbb E \left[\lVert \mbf G_i \rVert^4\right] \mbb E\left[\lVert \bs{T}_i \Vert^{4}\right]+\mbb E\left[\lVert \bs{\xi}_{i}\rVert^4\right]  \right)+\frac{2}{\ln(2)^2\sigma^4}\mbb E\left[\lVert\bs{\xi}_i\rVert^{4}\right] \nonumber \\
  &\leq \frac{4P^2}{\ln(2)^2\sigma^4}N_{R}^{2}c_1+\frac{16}{\ln(2)^2\sigma^4} \left(c_1 c_2+c_3  \right)+\frac{2}{\ln(2)^2\sigma^4}c_3\nonumber \\
&=c'', \nonumber 
\end{align}
 for some $c_1,c_2,c_3>0,$ where we used that $\mbb E \left[\lVert \mbf G_i \rVert^4\right]<\infty$, $\mbb E\left[\lVert \bs{T}_i \Vert^{4}\right]<\infty$ and $\mbb E\left[\lVert \bs{\xi}_{i}\rVert^4\right]<\infty$ (by Lemma \ref{boundedmatrixnorm} in the Appendix) and where $c''>0.$
 
As a result, we have
\begin{align}
    \frac{1}{n^2}\sum_{i=1}^{n}\mbb E\left[ i(\bs{T}_i;\bs{Z}_i,\mbf G_i)^2\right]\leq \frac{c''}{n}.\nonumber
\end{align}
To summarize, we have proved that for $0<\alpha<1$
$$\frac{1}{n^2}\sum_{i=1}^{n}\sum_{k=1}^{i-1} m(i,k)+\frac{1}{n^2}\sum_{i=1}^{n}\sum_{k=i+1}^{n} m(i,k)
    \leq \frac{2c'}{n(1-\sqrt{\alpha})}$$
    and that for $0\leq \alpha<1$\\~\\
   $$\frac{1}{n^2}\sum_{i=1}^{n}\mbb E\left[ i(\bs{T}_i;\bs{Z}_i,\mbf G_i)^2\right]\leq \frac{c''}{n}.$$

Now, from \eqref{sumterms}, we know that
\begin{align}
    &\mathrm{var}\left(\frac{i(\bs{T}^n;\bs{Z}^n,\mbf G^n)}{n} \right)\nonumber \\ &\leq 
        \begin{cases}
    \frac{1}{n^2}\sum_{i=1}^{n}\mbb E\left[ i(\bs{T}_i;\bs{Z}_i,\mbf G_i)^2\right], & \text{for } \alpha=0 \\~\\
        \frac{1}{n^2}\sum_{i=1}^{n}\sum_{k=1}^{i-1} m(i,k)+\frac{1}{n^2}\sum_{i=1}^{n}\sum_{k=i+1}^{n} m(i,k)+\frac{1}{n^2}\sum_{i=1}^{n}\mbb E\left[ i(\bs{T}_i;\bs{Z}_i,\mbf G_i)^2\right] & \text{for } 0<\alpha<1.
\end{cases}
\nonumber
\end{align}

To conclude, it follows that
\begin{align}
    \mathrm{var}\left(\frac{i(\bs{T}^n;\bs{Z}^n,\mbf G^n)}{n} \right)&\leq \begin{cases}
      \frac{c''}{n}, & \text{for } \alpha=0 \\
        \frac{2c'}{n(1-\sqrt{\alpha})}+ \frac{c''}{n}, & \text{for } 0<\alpha<1
\end{cases}
\nonumber \\
    &=\kappa(n,\alpha),
\nonumber \end{align}
where $\underset{n\rightarrow \infty}{\lim} \kappa(n,\alpha)=0.$  This completes the proof of Lemma \ref{boundvariance}.
\section{Conclusion}
\label{conclusion}
In this paper, we studied the problem of message transmission over time-varying MIMO first-order Gauss-Markov Rayleigh fading channels with average power constraint and with CSIR, as an example of infinite-state Markov fading channels. The Gauss-Markov model is widely used to describe the time-varying aspect of the channel. The novelty of our work lies in establishing a single-letter characterization of the channel capacity. The capacity formula that we proved coincides with the one of i.i.d. MIMO Rayleigh fading channels. Therefore, the memory factor has no influence on the capacity. The CSIR assumption is critical in deriving the single-letter expression for the capacity. In the absence of that assumption, no single-letter formula for the capacity exists even for SISO finite-state Markov fading channels.
As a future work, it would be interesting to study the capacity of time-varying MIMO Rayleigh fading channels when a higher-order Gauss-Markov model is used to describe the channel variations over the time.
\section{ACKNOWLEDGEMENTS}
H. Boche was supported in part by the German Federal Ministry of Education and Research (BMBF) within the national initiative on 6G Communication Systems through the research hub 6G-life under Grant 16KISK002, within the national initiative on Post Shannon Communication
(NewCom) under Grant 16KIS1003K. He was further supported by the German Research Foundation (DFG) within Germany’s Excellence Strategy EXC-2092–390781972.

M. Wiese was supported by the Deutsche Forschungsgemeinschaft (DFG, German
Research Foundation) within the Gottfried Wilhelm Leibniz Prize under Grant BO 1734/20-1, and within Germany’s Excellence Strategy EXC-2111-390814868 and EXC-2092 CASA-390781972.

C.\ Deppe was supported in part by the German Federal Ministry of Education and Research (BMBF) under Grant 16KIS1005 and in part by the German Federal Ministry of Education and Research (BMBF) within the national initiative on 6G Communication Systems through the research hub 6G-life under Grant 16KISK002.

R. Ezzine was supported by the German Federal Ministry of Education and Research (BMBF) under Grant 16KIS1003K.
\appendix
\subsection{Auxiliary Lemmas}
\begin{lemma}
\label{normposdefinite}
For any Hermitian matrix $\mbf A \in \mbb C^{n\times n}$, the matrix
$$\lVert \mbf A \rVert \mbf I_{n} - \mbf A $$
is positive semi-definite.
\end{lemma}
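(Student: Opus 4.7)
The plan is to reduce the claim to its spectral decomposition. Since $\mbf A$ is Hermitian, the spectral theorem gives a unitary $\mbf U\in\mbb C^{n\times n}$ and a real diagonal matrix $\mbf D=\mathrm{diag}(\lambda_1,\ldots,\lambda_n)$, whose diagonal entries are the eigenvalues of $\mbf A$, such that $\mbf A=\mbf U\mbf D\mbf U^H$. The operator norm with respect to the Euclidean norm satisfies $\lVert\mbf A\rVert=\max_{i}\lvert\lambda_i\rvert$ (this is the standard identification of the operator norm of a Hermitian matrix with its spectral radius, obtainable directly from the Rayleigh quotient characterization).

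From there, since $\mbf U$ is unitary, I would write
\[
\lVert\mbf A\rVert\mbf I_n-\mbf A=\mbf U\bigl(\lVert\mbf A\rVert\mbf I_n-\mbf D\bigr)\mbf U^H,
\]
and note that the middle factor is a diagonal matrix whose $i$-th diagonal entry equals $\lVert\mbf A\rVert-\lambda_i\geq \lvert\lambda_i\rvert-\lambda_i\geq 0$. A diagonal matrix with non-negative diagonal entries is positive semi-definite, and conjugating by a unitary matrix preserves positive semi-definiteness (since for any vector $\bs{x}$, $\bs{x}^H\mbf U\bs{\Lambda}\mbf U^H\bs{x}=(\mbf U^H\bs{x})^H\bs{\Lambda}(\mbf U^H\bs{x})\geq 0$). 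Therefore $\lVert\mbf A\rVert\mbf I_n-\mbf A\succeq\mbf 0$, which is the desired claim.

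There is no real obstacle here; the lemma is essentially an immediate consequence of the spectral theorem together with the identification of the operator norm with the largest absolute eigenvalue for Hermitian matrices. The only detail worth making explicit is the chain $\lVert\mbf A\rVert=\max_i\lvert\lambda_i\rvert\geq\lambda_i$ for every $i$, which is what converts the norm bound into a diagonal positive semi-definiteness statement.
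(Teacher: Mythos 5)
Your proof is correct, but it follows a different route from the paper. The paper avoids the spectral theorem altogether: it argues directly on the quadratic form, noting that $\bs{x}^{H}\mbf A\bs{x}$ is real for Hermitian $\mbf A$ and bounding $\bs{x}^{H}\mbf A\bs{x}\leq\lvert\bs{x}^{H}\mbf A\bs{x}\rvert\leq\lVert\mbf A\rVert\,\lVert\bs{x}\rVert^{2}$, which immediately gives $\bs{x}^{H}\left(\lVert\mbf A\rVert\mbf I_{n}-\mbf A\right)\bs{x}\geq 0$. You instead diagonalize, $\mbf A=\mbf U\mbf D\mbf U^{H}$, invoke the identification $\lVert\mbf A\rVert=\max_{i}\lvert\lambda_{i}\rvert$ for Hermitian matrices, and observe that the conjugated diagonal matrix $\lVert\mbf A\rVert\mbf I_{n}-\mbf D$ has non-negative entries; since unitary conjugation preserves positive semi-definiteness, the claim follows. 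Both arguments are complete. The paper's version is more elementary in that it needs only the definition of the operator norm (plus Cauchy--Schwarz) and no eigenvalue machinery, whereas yours requires the spectral theorem and the norm-equals-spectral-radius fact, but in exchange makes the eigenstructure of $\lVert\mbf A\rVert\mbf I_{n}-\mbf A$ explicit, which can be useful if one wants more than the bare semi-definiteness statement.
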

\color{black}
\begin{proof}
Since $\mbf A$ is Hermitian, we know that for any $\bs{x}\in \mbb C^{n},$ $\bs{x}^{H}\mbf A \bs{x}$ is real.
Therefore, for any $\bs{x}\in \mbb C^{n} \setminus \{\bs{0}\},$
\begin{align}
 \bs{x}^{H}\mbf A\bs{x} &\leq  \lvert \bs{x}^{H}\mbf A\bs{x}\rvert \nonumber \\  \nonumber \\
&\leq \lVert \mbf A \rVert \lVert  \bs{x} \rVert^{2}. \nonumber
\end{align}
It follows that
\begin{align}
\bs{x}^{H}\left( \lVert \mbf A \rVert \mbf I_{n}-\mbf A \right) \bs{x} 
&=\bs{x}^{H}\lVert\mbf A\rVert\mbf I_{n}\bs{x}-\bs{x}^{H}\mbf A\bs{x} \nonumber \\  \nonumber \\
&=\lVert \mbf A \rVert \lVert  \bs{x} \rVert^{2}-\bs{x}^{H}\mbf A\bs{x}\nonumber\\  \nonumber \\
&\geq 0.
\nonumber \end{align}
\end{proof}
\color{black}
\begin{lemma}
\label{bounddeterminantsum}
Let $\mbf A \in \mbb C^{n\times n}$ be any positive-definite Hermitian matrix with $\lambda_{\min}(\mbf A)$ being its smallest eigenvalue and let $\mbf B \in \mbb C^{n \times n}$ be any positive semi-definite matrix, then
\begin{align}
    \log\det(\mbf A+\mbf B)\leq\log\det(\mbf A)+\log\det(\mbf I_{n}+\frac{1}{\lambda_{\min}(\mbf A)}\mbf B).
\nonumber \end{align}
\end{lemma}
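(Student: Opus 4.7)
The plan is to factor the matrix $\mbf A+\mbf B$ using the Hermitian positive-definite square root of $\mbf A$ and then reduce the claim to a purely eigenvalue comparison. Since $\mbf A$ is positive definite, $\mbf A^{1/2}$ exists and is invertible, so I would write
\[
\mbf A+\mbf B=\mbf A^{1/2}\!\left(\mbf I_{n}+\mbf A^{-1/2}\mbf B \mbf A^{-1/2}\right)\!\mbf A^{1/2}.
\]
Taking $\log\det$ and using multiplicativity together with $\det(\mbf A^{1/2})^2=\det(\mbf A)$ gives
\[
\log\det(\mbf A+\mbf B)=\log\det(\mbf A)+\log\det\!\left(\mbf I_{n}+\mbf A^{-1/2}\mbf B \mbf A^{-1/2}\right).
\]
So the lemma reduces to showing $\log\det(\mbf I_n+\mbf A^{-1/2}\mbf B\mbf A^{-1/2})\leq \log\det(\mbf I_n+\tfrac{1}{\lambda_{\min}(\mbf A)}\mbf B)$.

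Next, I would compare the eigenvalues of these two Hermitian positive semi-definite matrices. The standard trick is that for any matrix $\mbf Z$, the products $\mbf Z\mbf Z^H$ and $\mbf Z^H\mbf Z$ share the same non-zero eigenvalues with the same multiplicities. Choosing $\mbf Z=\mbf A^{-1/2}\mbf B^{1/2}$ gives
\[
\mbf A^{-1/2}\mbf B \mbf A^{-1/2}=\mbf Z\mbf Z^H,\qquad \mbf B^{1/2}\mbf A^{-1}\mbf B^{1/2}=\mbf Z^H\mbf Z,
\]
so these two matrices have identical spectra. On the other hand, since $\mbf A\succeq \lambda_{\min}(\mbf A)\,\mbf I_n$ (because $\mbf A$ is Hermitian), inverting preserves the order on the positive-definite cone and yields $\mbf A^{-1}\preceq \tfrac{1}{\lambda_{\min}(\mbf A)}\mbf I_n$. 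Conjugating by $\mbf B^{1/2}$ preserves the semi-definite order, so
\[
\mbf B^{1/2}\mbf A^{-1}\mbf B^{1/2}\preceq \frac{1}{\lambda_{\min}(\mbf A)}\mbf B.
\]

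To finish, I would invoke Weyl's monotonicity theorem: if two Hermitian matrices satisfy $\mbf X\preceq \mbf Y$, then $\lambda_i(\mbf X)\leq \lambda_i(\mbf Y)$ when eigenvalues are listed in the same (say decreasing) order. Applied to $\mbf X=\mbf B^{1/2}\mbf A^{-1}\mbf B^{1/2}$ and $\mbf Y=\tfrac{1}{\lambda_{\min}(\mbf A)}\mbf B$, and using the spectrum identity above, this gives $\lambda_i(\mbf A^{-1/2}\mbf B\mbf A^{-1/2})\leq \lambda_i(\tfrac{1}{\lambda_{\min}(\mbf A)}\mbf B)$ for all $i$. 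Since both matrices are positive semi-definite, I can expand the determinants as $\det(\mbf I_n+\mbf M)=\prod_i(1+\lambda_i(\mbf M))$, and the term-wise inequality yields the desired bound on $\log\det$.

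The main obstacle is not any particular step—each is standard—but rather recognizing the right factorization so that the eigenvalue comparison becomes available. The key subtlety is that one cannot directly claim $\mbf A^{-1/2}\mbf B \mbf A^{-1/2}\preceq \tfrac{1}{\lambda_{\min}(\mbf A)}\mbf B$ in the Löwner order (the two matrices are not generally comparable), which is why the detour through the spectrum-preserving identification with $\mbf B^{1/2}\mbf A^{-1}\mbf B^{1/2}$ is essential. A minor edge case worth noting is that $\mbf B$ may be singular, in which case $\mbf B^{1/2}$ is still well defined as the Hermitian positive semi-definite square root and the argument goes through unchanged.
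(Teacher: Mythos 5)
Your proposal is correct and, at its core, is the same argument as the paper's: both reduce the claim to the key comparison $\mbf B^{\frac{1}{2}}\mbf A^{-1}\mbf B^{\frac{1}{2}}\preceq \frac{1}{\lambda_{\min}(\mbf A)}\mbf B$ after a spectrum-preserving rearrangement, which the paper carries out with $\det(\mbf I_n+\mbf X\mbf Y)=\det(\mbf I_n+\mbf Y\mbf X)$ plus monotonicity of the determinant on the positive semi-definite order, while you use the symmetric factorization through $\mbf A^{\frac{1}{2}}$, the equality of spectra of $\mbf Z\mbf Z^{H}$ and $\mbf Z^{H}\mbf Z$, and Weyl's monotonicity. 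The only nitpick is wording: inversion \emph{reverses} the Löwner order on positive-definite matrices (your stated conclusion $\mbf A^{-1}\preceq \frac{1}{\lambda_{\min}(\mbf A)}\mbf I_n$ is nonetheless the right one).
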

\color{black}
\begin{proof}
\begin{align}
\det(\mbf A+\mbf B)&=\det(\mbf A)\det(\mbf I_{n}+\mbf A^{-1}\mbf B) \nonumber \\
&=\det(\mbf A)\det(\mbf I_{n}+\mbf B\mbf A^{-1}) \nonumber\\
&=\det(\mbf A)\det(\mbf I_{n}+\mbf B^{\frac{1}{2}}\mbf B^{\frac{1}{2}}\mbf A^{-1}) \nonumber\\
&= \det(\mbf A)\det(\mbf I_{n}+\mbf B^{\frac{1}{2}}\mbf A^{-1}\mbf B^{\frac{1}{2}}) \nonumber \\
&\overset{(a)}{\leq} \det(\mbf A)\det(\mbf I_{n}+\mbf B^{\frac{1}{2}}\frac{1}{\lambda_{\min}(\mbf A)}\mbf I_{n}\mbf B^{\frac{1}{2}}) \nonumber \\
&= \det(\mbf A)\det(\mbf I_{n}+ \frac{1}{\lambda_{\min}(\mbf A)}\mbf I_{n}\mbf B) \nonumber \\
&=\det(\mbf A)\det(\mbf I_{n}+\frac{1}{\lambda_{\min}(\mbf A)}\mbf B),
\nonumber \end{align}
\color{black}
where $(a)$ follows from the following properties\cite{normposdef}:
\begin{enumerate}
\item For any positive semi-definite Hermitian matrices $\mbf M_1$ and $\mbf M_2,$ if $\mbf M_1- \mbf M_2$ is Hermitian positive semi-definite then
$$\det\left(\mbf M_1\right)\geq \det(\mbf M_2). $$
\item For any Hermitian matrix $\mbf M \in \mbb C^{n\times n},$ with minimum eigenvalue $\lambda_{\min}(\mbf M),$ it holds that
    $$\mbf M-\lambda_{\min}(\mbf M)\mbf I$$ is positive semi-definite, 

\item For any positive definite Hermitian matrices $\mbf M$ and $\tilde{\mbf M},$  $\mbf M- \tilde{\mbf M}$ is positive semi-definite if and only if
$\tilde{\mbf M}^{-1}-\mbf M^{-1}$  is positive semi-definite.
\end{enumerate}
\color{black}
Therefore, it follows that
\begin{align}
      \log\det(\mbf A+\mbf B)\leq\log\det(\mbf A)+\log\det(\mbf I_{n}+\frac{1}{\lambda_{\min}(\mbf A)}\mbf B).\nonumber
\end{align}
\end{proof}
\color{black}

\color{black}
\begin{lemma}
$\mbb E \left[\Lambda\left( \mbf G_{i_{1}},\mbf S,\tilde{\mbf G},\tilde{\mbf W}\right)\right]\leq c $ for some $c>0.$
\label{meanbounded}
\end{lemma}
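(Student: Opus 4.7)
The plan is to decompose $\Lambda$ into its four additive summands and bound each expectation separately. Linearity gives
\[
\mbb E[\Lambda] = P\,\mbb E[\lVert \mbf G_{i_{1}}\rVert^{4}] + P\,\mbb E[\lVert \mbf G_{i_{1}}\rVert^{2}\lVert \tilde{\mbf G}\rVert^{2}] + 2\,\mbb E[\lVert \mbf G_{i_{1}}\rVert^{2}\lVert \tilde{\mbf W}\tilde{\mbf Q}\tilde{\mbf G}^{H}\rVert] + 2P\,\mbb E[\lVert \mbf G_{i_{1}}\rVert^{3}\lVert \mbf S\rVert],
\]
and I would handle the cross terms by submultiplicativity of the operator norm together with Cauchy--Schwarz. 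In particular $\lVert \tilde{\mbf W}\tilde{\mbf Q}\tilde{\mbf G}^{H}\rVert \leq \lVert \tilde{\mbf W}\rVert\,\lVert \tilde{\mbf Q}\rVert\,\lVert \tilde{\mbf G}\rVert \leq P\,\lVert \tilde{\mbf W}\rVert\,\lVert \tilde{\mbf G}\rVert$, since $\lVert \tilde{\mbf Q}\rVert = \lambda_{\max}(\tilde{\mbf Q}) \leq \mathrm{tr}(\tilde{\mbf Q}) \leq P$. Two applications of Cauchy--Schwarz then reduce each cross term to a product of expectations of the form $\mbb E[\lVert \cdot\rVert^{p}]$ with $p\in\{4,8\}$, involving only $\mbf G_{i_{1}}, \tilde{\mbf G}, \tilde{\mbf W}$ and $\mbf S$.

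The next step is to argue that each such moment is finite and, crucially, uniformly bounded in the indices $i_{1}, i_{2}$. By Lemma \ref{distributiongain}, $\vect(\mbf G_{i_{1}}) \sim \mc N_{\mbb C}(\bs{0}_{N_{R}N_{T}}, \mbf K)$ for every index, and by construction in the proof of Lemma \ref{meanproducti1i2} we also have $\vect(\tilde{\mbf G}), \vect(\tilde{\mbf W}) \sim \mc N_{\mbb C}(\bs{0}_{N_{R}N_{T}}, \mbf K)$. For $\mbf S$, summing the variances of the independent Gaussian terms $\sqrt{1-\alpha}\sqrt{\alpha}^{i_{2}-j}\mbf W_{j}$ gives $\vect(\mbf S) \sim \mc N_{\mbb C}(\bs{0}_{N_{R}N_{T}}, (1-\alpha^{i_{2}-i_{1}})\mbf K)$, so one may write $\mbf S = \sqrt{1-\alpha^{i_{2}-i_{1}}}\,\mbf S'$ with $\vect(\mbf S') \sim \mc N_{\mbb C}(\bs{0}_{N_{R}N_{T}}, \mbf K)$ and conclude $\mbb E[\lVert \mbf S\rVert^{p}] \leq \mbb E[\lVert \mbf S'\rVert^{p}]$ for all $p \geq 0$, uniformly in $i_{1}, i_{2}$. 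Thus every moment in sight depends only on $\mbf K$, $N_{R}$, $N_{T}$ and the exponent $p$.

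To conclude, I would invoke Lemma \ref{boundedmatrixnorm} from the Appendix, which guarantees that every fixed power of the operator norm of such a complex Gaussian matrix has finite expectation. Collecting the resulting constants yields a finite bound $c = c(\mbf K, N_{R}, N_{T}, P)$ that is independent of $i_{1}, i_{2}, n$ and $\alpha$, as claimed. There is no deep obstacle here: the only point requiring care is preserving uniformity of the constant $c$ across indices, which is precisely what the common covariance $\mbf K$ of $\mbf G_{i_{1}}, \tilde{\mbf G}, \tilde{\mbf W}$ and the dominating covariance bound $\mathrm{Cov}(\vect(\mbf S)) \preceq \mbf K$ for $\mbf S$ ensure.
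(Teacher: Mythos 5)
Your proposal is correct in substance, but it departs from the paper's argument at two points, so a comparison is in order. After the same four-term expansion of $\mbb E\left[\Lambda\right]$, the paper exploits the independence of $\mbf G_{i_{1}}$ from $(\mbf S,\tilde{\mbf G},\tilde{\mbf W})$ to factor each cross expectation (e.g.\ into $\mbb E\left[\lVert\tilde{\mbf G}\rVert^{2}\right]\mbb E\left[\lVert\tilde{\mbf W}\tilde{\mbf Q}\tilde{\mbf G}^{H}\rVert\right]$), handles the remaining \emph{dependent} pair $(\tilde{\mbf W},\tilde{\mbf G})$ by substituting $\tilde{\mbf W}=\mbf S+\sqrt{\alpha}^{\,i_{2}-i_{1}}\tilde{\mbf G}$ and using that $\mbf S$ and $\tilde{\mbf G}$ are independent, and finally bounds $\mbb E\left[\lVert\mbf S\rVert\right]$ by the triangle inequality and the geometric series $\sum_{j=i_{1}+1}^{i_{2}}\sqrt{\alpha}^{\,i_{2}-j}\le\frac{1}{1-\sqrt{\alpha}}$, so that only first through fourth moments of $\lVert\tilde{\mbf G}\rVert$ appear, exactly what Lemma \ref{boundedmatrixnorm} supplies. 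You instead avoid independence altogether via Cauchy--Schwarz, and you identify the exact law $\vect(\mbf S)\sim\mc N_{\mbb C}\bigl(\bs{0}_{N_{R}N_{T}},(1-\alpha^{i_{2}-i_{1}})\mbf K\bigr)$, writing $\mbf S=\sqrt{1-\alpha^{i_{2}-i_{1}}}\,\mbf S'$ with $\vect(\mbf S')\sim\mc N_{\mbb C}(\bs{0}_{N_{R}N_{T}},\mbf K)$; this is arguably cleaner and even gives a constant uniform in $\alpha$, which the paper's $\frac{\sqrt{1-\alpha}}{1-\sqrt{\alpha}}$-type bound does not (uniformity in $\alpha$ is not needed for the way the lemma is used, but it costs nothing).

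One caveat: as you describe the Cauchy--Schwarz reduction (exponents ``$p\in\{4,8\}$'', and a naive single split of $\lVert\mbf G_{i_{1}}\rVert^{3}\lVert\mbf S\rVert$ would produce a sixth moment), you need moments of the operator norm beyond the fourth, whereas Lemma \ref{boundedmatrixnorm} as stated only asserts $\mbb E\left[\lVert\mbf G\rVert^{\ell}\right]<\infty$ for $\ell\in\{1,2,3,4\}$, not for ``every fixed power''. This is easily repaired: either choose the splits so that only exponents at most $4$ occur, e.g.\ $\mbb E\left[\lVert\mbf G_{i_{1}}\rVert^{2}\cdot\lVert\mbf G_{i_{1}}\rVert\lVert\mbf S\rVert\right]\le\mbb E\left[\lVert\mbf G_{i_{1}}\rVert^{4}\right]^{1/2}\mbb E\left[\lVert\mbf G_{i_{1}}\rVert^{2}\lVert\mbf S\rVert^{2}\right]^{1/2}$ followed by one further Cauchy--Schwarz, or simply use the independence of $\mbf G_{i_{1}}$ from $(\mbf S,\tilde{\mbf G},\tilde{\mbf W})$ to factor as the paper does. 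With that adjustment your argument goes through.
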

\begin{proof}
We recall that 
\begin{align}
    \Lambda\left( \mbf G_{i_{1}},\mbf S,\tilde{\mbf G},\tilde{\mbf W}\right) =\lVert \mbf G_{i_{1}} \rVert^{2} \left(P\lVert \mbf G_{i_{1}} \rVert^2+P\lVert\tilde{\mbf G}\rVert^2+2\lVert \tilde{\mbf W}\tilde{\mbf Q}\tilde{\mbf G}^H \rVert +2P\lVert\mbf G_{i_{1}}\rVert \lVert\mbf S\rVert          \right), \nonumber
\end{align}
where $$\tilde{\mbf W}=\mbf S+\sqrt{\alpha}^{i_{2}-i_{1}} \tilde{\mbf G}$$
with $i_1<i_2,$ where
 $\mbf S=\sqrt{1-\alpha}\sum_{j=i_{1}+1}^{i_{2}} \sqrt{\alpha}^{i_{2}-j} \mbf W_j,$
and where $\tilde{\mbf G}$ is a random matrix  independent of $\mbf G_1$ and $\mbf W_i, \ i=2,\hdots n$ such that $\vect(\tilde{\mbf G})\sim \mc N_{\mbb C}\left(\bs{0}_{N_RN_T},\mbf K \right).$

We have
\begin{align}
    &\mbb E \left[\Lambda\left( \mbf G_{i_{1}},\mbf S,\tilde{\mbf G},\tilde{\mbf W}\right)\right] \nonumber\\ \nonumber \\
    &=\mbb E \left[\lVert \mbf G_{i_{1}} \rVert^{2} \left(P\lVert \mbf G_{i_{1}} \rVert^2+P\lVert\tilde{\mbf G}\rVert^2+2\lVert \tilde{\mbf W}\tilde{\mbf Q}\tilde{\mbf G}^H \rVert +2P\lVert\mbf G_{i_{1}}\rVert \lVert\mbf S\rVert          \right)\right] \nonumber \\ \nonumber \\
    &= P\mbb E \left[ \lVert \mbf G_{i_{1}} \rVert^{4}\right]+P\mbb E\left[\lVert \mbf G_{i_{1}} \rVert^{2}\lVert \tilde{\mbf G} \rVert^2 \right]+2\mbb E \left[\lVert \mbf G_{i_{1}} \rVert^{2}\lVert \tilde{\mbf W}\tilde{\mbf Q}\tilde{\mbf G}^H \rVert \right]+2P\mbb E \left[\lVert \mbf G_{i_{1}} \rVert^{3}\lVert \mbf S \rVert \right] \nonumber \\ \nonumber \\
    &= P\mbb E \left[ \lVert \mbf G_{i_{1}} \rVert^{4}\right]+P\mbb E\left[\lVert \mbf G_{i_{1}} \rVert^{2}\right]\mbb E \left[\lVert \tilde{\mbf G} \rVert^2 \right]+2\mbb E \left[\lVert \mbf G_{i_{1}} \rVert^{2}\right]\mbb E\left[\lVert \tilde{\mbf W}\tilde{\mbf Q}\tilde{\mbf G}^H \rVert \right]+2P\mbb E \left[\lVert \mbf G_{i_{1}} \rVert^{3}\right]\mbb E\left[\lVert \mbf S \rVert \right] \nonumber \\ \nonumber\\
    &= P\mbb E \left[ \lVert \tilde{\mbf G} \rVert^{4}\right]+P\mbb E \left[\lVert \tilde{\mbf G} \rVert^2 \right]^2+2\mbb E \left[\lVert \tilde{\mbf G} \rVert^{2}\right]\mbb E\left[\lVert \tilde{\mbf W}\tilde{\mbf Q}\tilde{\mbf G}^H \rVert \right]+2P\mbb E \left[\lVert \tilde{\mbf G} \rVert^{3}\right]\mbb E\left[\lVert \mbf S \rVert \right],
    \nonumber
\end{align}
where we used that $\mbf G_{i_{1}}$ is independent of $(\tilde{\mbf W},\tilde{\mbf G})$ and that $\tilde{\mbf G}$ has the same distribution as $\mbf G_{i_{1}}.$

By Lemma \ref{boundedmatrixnorm}, we know that $\mbb E\left[ \lVert \tilde{\mbf G} \rVert^{\ell}\right]<\infty$ for $\ell\in\{2,3,4\}.$ Therefore, to complete the proof, we have to show that $\mbb E\left[\lVert \tilde{\mbf W}\tilde{\mbf Q}\tilde{\mbf G}^H \rVert \right]$ and $\mbb E\left[\lVert \mbf S\rVert\right]$ are both bounded from above. 

It holds that
\begin{align}
 &\mbb E\left[\lVert \tilde{\mbf W}\tilde{\mbf Q}\tilde{\mbf G}^H \rVert \right] \nonumber \\ \nonumber \\
 &=\mbb E\left[\Bigg\lVert \left(\mbf S+\sqrt{\alpha}^{i_{2}-i_{1}} \tilde{\mbf G}\right)\tilde{\mbf Q}\tilde{\mbf G}^H \Bigg\rVert \right] \nonumber \\ \nonumber \\
 &=\mbb E\left[\Bigg\lVert \mbf S\tilde{\mbf Q}\tilde{\mbf G}^H+\sqrt{\alpha}^{i_{2}-i_{1}} \tilde{\mbf G}\tilde{\mbf Q}\tilde{\mbf G}^H \Bigg\rVert \right] \nonumber \\ \nonumber \\
 &\leq \mbb E\left[\lVert \mbf S\tilde{\mbf Q}\tilde{\mbf G}^H\rVert+\sqrt{\alpha}^{i_{2}-i_{1}} \lVert\tilde{\mbf G}\tilde{\mbf Q}\tilde{\mbf G}^H \rVert \right] \nonumber \\ \nonumber \\
 &\leq \lVert \tilde{\mbf Q}\rVert \mbb E\left[\lVert \mbf S\rVert \lVert\tilde{\mbf G}\rVert+ \lVert\tilde{\mbf G}\rVert^2 \right] \nonumber \\ \nonumber \\ 
 &\leq P \mbb E\left[\lVert \mbf S\rVert \lVert\tilde{\mbf G}\rVert+ \lVert\tilde{\mbf G}\rVert^2 \right] \nonumber \\ \nonumber \\
 &=P\left( \mbb E\left[\lVert \mbf S\rVert\right]\mbb E\left[\lVert \tilde{\mbf G}\rVert\right]+\mbb E\left[\lVert \tilde{\mbf G}\rVert^2 \right]     \right), \nonumber
\end{align}
where we used that $\tilde{\mbf G}$ and $\mbf S$ are independent in the last step, since $\tilde{\mbf G}$ and $\mbf W_{i_{1}+1},\hdots \mbf W_{i_{2}}$ are independent.

Now, from Lemma \ref{boundedmatrixnorm}, we know that $\mbb E\left[\lVert \tilde{\mbf G}\rVert \right]<\infty.$ Therefore, to complete the proof, it suffices to show that $\mbb E\left[ \lVert \mbf S \rVert \right]$ is bounded from above..

We have
\begin{align}
&\mbb E \left[ \lVert \mbf S \rVert \right] \nonumber \\
&=\mbb E \left[ \Bigg\lVert \sqrt{1-\alpha}\sum_{j=i_{1}+1}^{i_{2}} \sqrt{\alpha}^{i_{2}-j} \mbf W_j \Bigg\rVert \right] \nonumber \\
&\leq \mbb E\left[ \sqrt{1-\alpha}\sum_{j=i_{1}+1}^{i_{2}}\sqrt{\alpha}^{i_{2}-j}\lVert \mbf W_{j}\rVert  \right]\nonumber \\
&=\sqrt{1-\alpha}\sum_{j=i_{1}+1}^{i_{2}}\sqrt{\alpha}^{i_{2}-j}\mbb E\left[\lVert \mbf W_{j}\rVert\right] \nonumber \\
&=\sqrt{1-\alpha}\mbb E\left[\lVert \tilde{\mbf G} \rVert\right]\sum_{j=i_{1}+1}^{i_{2}}\sqrt{\alpha}^{i_{2}-j} \nonumber \\
&=\frac{\sqrt{1-\alpha}}{1-\sqrt{\alpha}}\left(1-\sqrt{\alpha}^{i_{2}-i_{1}}\right)\mbb E\left[\lVert \tilde{\mbf G} \rVert\right] \nonumber \\
&\leq \frac{\sqrt{1-\alpha}}{1-\sqrt{\alpha}}\mbb E\left[\lVert \tilde{\mbf G} \rVert\right], \nonumber
\end{align}
where we used that
\begin{align}
\sum_{j=i_{1}+1}^{i_{2}}\sqrt{\alpha}^{i_{2}-j}&=\sqrt{\alpha}^{i_2}\sum_{j=i_{1}+1}^{i_{2}}\left(\frac{1}{\sqrt{\alpha}}\right)^{j} \nonumber \\
&=\sqrt{\alpha}^{i_{2}}\left(\frac{1}{\sqrt{\alpha}}\right)^{i_{1}+1}\frac{1-\left(\frac{1}{\sqrt{\alpha}}\right)^{i_{2}-i_{1}}}{1-\frac{1}{\sqrt{\alpha}}} \nonumber \\
&= \frac{\sqrt{\alpha}^{i_{2}-i_{1}}-1}{\sqrt{\alpha}-1}
\nonumber \\
&=\frac{1-\sqrt{\alpha}^{i_{2}-i_{1}}}{1-\sqrt{\alpha}}
\nonumber 
\end{align}
and that $\tilde{\mbf G}$ has the same distribution as each of the $\mbf W_i.$ 
Therefore, $\mbb E \left[ \lVert \mbf S \rVert \right]$ is bounded from above.
This proves that $\mbb E \left[\Lambda\left( \mbf G_{i_{1}},\mbf S,\tilde{\mbf G},\tilde{\mbf W}\right)\right]\leq c$ for some $c>0.$
\end{proof}

\color{black} 
\color{black}
 \color{black}
\begin{lemma}
\label{boundedmatrixnorm}
For any $\mbf K \in \mbb C^{N_RN_T\times N_RN_T},$  
let $\mbf G \in \mbb C^{N_{R} \times N_{T}}$
a random matrix such that 
\begin{align}
    \vect\left(\mbf G\right)\sim \mc N_{\mbb C}\left(\bs{0}_{N_{R}N_{T}},\mbf K\right).
\nonumber \end{align}
Then for $\ell\in \{1,2,3,4\},$ it holds that  $\mbb E \left[ \lVert \mbf G \rVert^{\ell}\right]<\infty.$ 
\end{lemma}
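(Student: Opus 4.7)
The plan is to reduce the problem from the operator norm to the Frobenius norm, which is much easier to handle because it is a simple quadratic form in the Gaussian entries. Specifically, I would first observe that $\lVert \mbf G \rVert \leq \lVert \mbf G \rVert_F$, where the Frobenius norm satisfies
\begin{align}
\lVert \mbf G \rVert_F^2 = \sum_{i=1}^{N_R}\sum_{j=1}^{N_T} |\mbf G_{i,j}|^2 = \lVert \vect(\mbf G) \rVert^2. \nonumber
\end{align}
Hence it suffices to prove $\mbb E[\lVert \vect(\mbf G) \rVert^{\ell}] < \infty$ for $\ell \in \{1,2,3,4\}$.

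Next, by Jensen's inequality (applied to the concave function $x \mapsto \sqrt{x}$) and the Cauchy--Schwarz inequality, it is enough to control the even moments:
\begin{align}
\mbb E\!\left[\lVert \vect(\mbf G)\rVert\right] \leq \sqrt{\mbb E\!\left[\lVert\vect(\mbf G)\rVert^{2}\right]},\qquad \mbb E\!\left[\lVert \vect(\mbf G)\rVert^{3}\right] \leq \sqrt{\mbb E\!\left[\lVert\vect(\mbf G)\rVert^{2}\right]\, \mbb E\!\left[\lVert\vect(\mbf G)\rVert^{6}\right]}. \nonumber
\end{align}
So I would reduce the problem further to showing that $\mbb E[\lVert \vect(\mbf G)\rVert^{2k}] < \infty$ for $k \in \{1,2,3\}$ (actually any finite $k$ works).

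To handle the even moments, I would write $\vect(\mbf G) = \mbf K^{1/2} \bs Y$ where $\bs Y \sim \mc N_{\mbb C}(\bs 0_{N_RN_T}, \mbf I_{N_RN_T})$. Then $\lVert \vect(\mbf G)\rVert^{2} \leq \lVert \mbf K^{1/2}\rVert^{2}\, \lVert \bs Y\rVert^{2}$, and $\lVert \bs Y\rVert^{2} = \sum_{j=1}^{N_RN_T} |\bs Y_j|^{2}$, where the summands are i.i.d.\ and $2|\bs Y_j|^{2}$ is chi-square distributed with $2$ degrees of freedom (exactly as used in the proof of Lemma~\ref{probnotinconstraint}). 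Since the moment generating function $\mbb E[\exp(t|\bs Y_j|^{2})] = (1-t)^{-1}$ is finite for $t<1$, all moments of $|\bs Y_j|^{2}$ are finite, and hence all moments of $\lVert \bs Y\rVert^{2}$ are finite by a standard multinomial expansion. This directly yields the desired bound on $\mbb E[\lVert \vect(\mbf G)\rVert^{2k}]$.

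The argument contains essentially no obstacle: the only potentially delicate point is remembering that the operator norm is dominated by the Frobenius norm (a routine inequality following from $\sigma_{\max}(\mbf G)^{2} \leq \sum_i \sigma_i(\mbf G)^{2} = \lVert \mbf G\rVert_F^{2}$), after which everything reduces to finiteness of Gaussian moments. No new machinery is required beyond tools already used earlier in the paper.
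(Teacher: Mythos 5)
Your argument is correct in substance and follows essentially the same route as the paper: both proofs hinge on dominating the operator norm by the Frobenius norm (equivalently, $\lambda_{\max}(\mbf G\mbf G^{H})\leq \mathrm{tr}(\mbf G\mbf G^{H})$), after which everything reduces to finiteness of moments of the Gaussian entries of $\vect(\mbf G)$. The paper then expands $\bigl(\sum_i |G^{(i)}|^{2}\bigr)^{2}$ directly and applies Cauchy--Schwarz between entries, using only finiteness of second and fourth Gaussian moments, whereas you whiten via $\vect(\mbf G)=\mbf K^{1/2}\bs Y$ and invoke the chi-square moment generating function; your route is a valid alternative and yields the slightly stronger conclusion that all moments of $\lVert \mbf G\rVert$ are finite. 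One displayed inequality is misstated, though: Cauchy--Schwarz applied to $X^{3}=X\cdot X^{2}$ gives $\mbb E[X^{3}]\leq \sqrt{\mbb E[X^{2}]\,\mbb E[X^{4}]}$ (this is what the paper uses for $\ell=3$), while your bound $\mbb E[X^{3}]\leq\sqrt{\mbb E[X^{2}]\,\mbb E[X^{6}]}$ is false in general (take $X$ equal to a constant in $(0,1)$). The slip is harmless for your proof, since you establish finiteness of all even moments and can conclude via the corrected Cauchy--Schwarz bound or via Lyapunov's inequality $\mbb E[X^{3}]\leq \bigl(\mbb E[X^{4}]\bigr)^{3/4}$.
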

\begin{proof}
Let $G^{(1)},\hdots,G^{(N_{R}N_{T})}$ be the entries of $\text{vec}(\mbf G).$

For $\ell=2,$ we have
\begin{align}
    \lVert \mbf G \rVert^2&=\lambda_{\max}(\mbf G \mbf G^H) \nonumber \\
    &\leq \text{tr}(\mbf G \mbf G^H) \nonumber \\
    &=\sum_{i=1}^{N_{R}N_{T}} \lvert  G^{(i)} \rvert^{2}.\label{upperboundnormsquare}
\end{align}
Therefore
\begin{align}
    \mbb E \left[ \lVert \mbf G \rVert^2 \right]&\leq \sum_{i=1}^{N_{R}N_{T}} \mbb E \left[\lvert  G^{(i)} \rvert^{2}\right] \nonumber \\
   &<\infty,\label{secondmomentbounded}
\end{align}
where we used that the second moment of each $G^{(i)}, i=1,\hdots,N_RN_T,$ is finite.

For $\ell=4,$ it follows using \eqref{upperboundnormsquare} that
\begin{align}
\lVert \mbf G \rVert^4&\leq \left( \sum_{i=1}^{N_{R}N_{T}} \lvert  G^{(i)} \rvert^{2}\right)^2 \nonumber \\
&=\sum_{i=1}^{N_R N_T}\sum_{j=1,j\neq i}^{N_R N_T} \lvert G^{(i)} \rvert^{2}\lvert G^{(j)} \rvert^{2} +\sum_{i=1}^{N_R N_T} \lvert G^{(i)}\rvert^4. \nonumber
\end{align}
This yields
\begin{align}
   \mbb E \left[\lVert \mbf G\rVert^{4} \right] &\leq \sum_{i=1}^{N_R N_T}\sum_{j=1,j\neq i}^{N_R N_T} \mbb E \left[\lvert G^{(i)} \rvert^{2}\lvert G^{(j)} \rvert^{2}\right] +\sum_{i=1}^{N_R N_T} \mbb E \left[ \lvert G^{(i)}\rvert^4 \right] \nonumber \\
 &\overset{(a)}{\leq} 
 \sum_{i=1}^{N_R N_T}\sum_{j=1,j\neq i}^{N_R N_T} \sqrt{\mbb E \left[\lvert G^{(i)} \rvert^{4}\right]\mbb E\left[\lvert G^{(j)} \rvert^{4}\right]} +\sum_{i=1}^{N_R N_T} \mbb E \left[ \lvert G^{(i)}\rvert^4 \right] ´\nonumber \\
 &\overset{(b)}{<}\infty,\label{boundedfourthermoment}
\end{align}
where $(a)$ follows from Cauchy Schwarz's inequality and $(b)$ follows because the fourth moment of each $G^{(i)}, i=1,\hdots,N_RN_T,$ is finite.

For $\ell=1,$ we know that
\begin{align*}
    \mbb E\left[\lVert \mbf G\rVert \right]&\leq \sqrt{\mbb E\left[\lVert \mbf G\rVert^2 \right]} \nonumber \\
    &<\infty,
\end{align*}
where we used \eqref{secondmomentbounded} in the last step.

For $\ell=3,$ it holds that
\begin{align*}
    \mbb E\left[\lVert \mbf G\rVert^3 \right] &=\mbb E\left[\lVert \mbf G\rVert\lVert \mbf G\rVert^2\right] \nonumber \\
    &\overset{(a)}{\leq} \sqrt{\mbb E\left[\lVert\mbf G\rVert^2 \right]\mbb E\left[\lVert \mbf G \rVert^4 \right]} \nonumber \\
    &\overset{(b)}{<}\infty,
\end{align*}
where $(a)$ follows from Cauchy Schwarz's inequality and $(b)$ follows from \eqref{secondmomentbounded} and \eqref{boundedfourthermoment}.
\end{proof}
\color{black}
\begin{lemma}
\label{sumterms1}
For any $0<\alpha<1$,  it holds that
\begin{align}
    \sum_{i=1}^{n} \sum_{k=1}^{i-1} \alpha^{i-k}\leq \frac{n}{1-\alpha}.
\nonumber \end{align}
\end{lemma}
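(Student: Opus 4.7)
The plan is to reduce the double sum to $n$ copies of a truncated geometric series and bound that series by its infinite counterpart. There is no genuine obstacle here; this is a one-line manipulation, and I will simply describe the order in which I would carry it out.

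First, I would fix $i$ and perform the index substitution $j = i-k$ in the inner sum. As $k$ ranges over $\{1, \ldots, i-1\}$, the new index $j$ ranges over $\{1, \ldots, i-1\}$, so
\begin{align}
\sum_{k=1}^{i-1}\alpha^{i-k} = \sum_{j=1}^{i-1}\alpha^{j}. \nonumber
\end{align}

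Next, I would recognize the right-hand side as a partial geometric series with common ratio $\alpha \in (0,1)$, and bound it above by the corresponding infinite series:
\begin{align}
\sum_{j=1}^{i-1}\alpha^{j} \;\leq\; \sum_{j=1}^{\infty}\alpha^{j} \;=\; \frac{\alpha}{1-\alpha} \;\leq\; \frac{1}{1-\alpha}. \nonumber
\end{align}
This upper bound is independent of $i$, which is the key point.

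Finally, I would sum the uniform bound over $i=1,\hdots,n$:
\begin{align}
\sum_{i=1}^{n}\sum_{k=1}^{i-1}\alpha^{i-k} \;\leq\; \sum_{i=1}^{n}\frac{1}{1-\alpha} \;=\; \frac{n}{1-\alpha}, \nonumber
\end{align}
which is the desired inequality. The only subtlety worth flagging is the degenerate case $i=1$, where the inner sum is empty and contributes $0$; this is consistent with the bound $\sum_{j=1}^{0}\alpha^{j}=0 \leq \frac{1}{1-\alpha}$, so the argument goes through uniformly for all $i\in\{1,\hdots,n\}$.
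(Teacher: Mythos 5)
Your proof is correct and follows essentially the same route as the paper: bound the inner geometric sum by $\frac{\alpha}{1-\alpha}\leq\frac{1}{1-\alpha}$ (the paper computes the partial sum exactly and then drops the negative terms, which amounts to the same thing) and multiply by $n$. No issues.
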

\color{black}
\begin{proof}
We have
\begin{align}
    &\sum_{i=1}^{n} \sum_{k=1}^{i-1} \alpha^{i-k} \nonumber \\
    &=\sum_{i=1}^{n} \alpha^{i} \sum_{k=1}^{i-1}\left(\frac{1}{\alpha}\right)^{k} \nonumber \\
    &=\sum_{i=1}^{n}\alpha^{i}\frac{1}{\alpha}\frac{1-\left(\frac{1}{\alpha}\right)^{i-1}}{1-\frac{1}{\alpha}} \nonumber\\
    &=\sum_{i=1}^{n}\alpha^{i}\frac{1-\left(\frac{1}{\alpha}\right)^{i-1}}{\alpha-1} \nonumber \\
    &=\sum_{i=1}^{n}\frac{\alpha^i-\alpha}{\alpha-1} \nonumber \\
   &=\sum_{i=1}^{n}\frac{\alpha-\alpha^i}{1-\alpha} \nonumber \\
   &=\frac{n\alpha}{1-\alpha}-\sum_{i=1}^{n}\frac{\alpha^i}{1-\alpha} \nonumber \\
   &\leq \frac{n\alpha}{1-\alpha} \nonumber \\
   &\leq \frac{n}{1-\alpha}. \nonumber
\end{align}
\end{proof}
\begin{lemma}
\label{sumtermsiplus1}
For any $0<\alpha<1$ it holds that
\begin{align}
    \sum_{i=1}^{n} \sum_{k=i+1}^{n} \alpha^{k-i}\leq\frac{n}{1-\alpha}.
\nonumber \end{align}
\end{lemma}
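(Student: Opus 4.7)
The plan is to proceed in direct parallel with the proof of Lemma \ref{sumterms1}, which is essentially the same double sum with the roles of the two indices swapped. The fastest route is to change the order of summation: the pair $(i,k)$ with $1 \leq i < k \leq n$ is summed in both Lemma \ref{sumterms1} and the present statement, and in each case the summand depends only on the difference $k-i$. Formally,
\begin{align}
\sum_{i=1}^{n}\sum_{k=i+1}^{n}\alpha^{k-i}
= \sum_{1\leq i<k\leq n}\alpha^{k-i}
= \sum_{k=1}^{n}\sum_{i=1}^{k-1}\alpha^{k-i},
\nonumber\end{align}
and after renaming the dummy indices the right-hand side is exactly the quantity bounded in Lemma \ref{sumterms1}. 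Hence the bound $\tfrac{n}{1-\alpha}$ carries over immediately.

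If one prefers a self-contained computation, the same answer drops out directly from the geometric series: for each fixed $i$, substitute $j=k-i$ so that
\begin{align}
\sum_{k=i+1}^{n}\alpha^{k-i}
= \sum_{j=1}^{n-i}\alpha^{j}
= \alpha\,\frac{1-\alpha^{n-i}}{1-\alpha}
\leq \frac{\alpha}{1-\alpha},
\nonumber\end{align}
using $0<\alpha<1$ to drop the nonnegative term $\alpha^{n-i+1}/(1-\alpha)$. Summing over $i$ from $1$ to $n$ then yields
\begin{align}
\sum_{i=1}^{n}\sum_{k=i+1}^{n}\alpha^{k-i}
\leq \frac{n\alpha}{1-\alpha}
\leq \frac{n}{1-\alpha},
\nonumber\end{align}
which is the claimed inequality.

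There is no genuine obstacle here; this is a routine geometric series bound that mirrors Lemma \ref{sumterms1} exactly, and is needed only to justify the step \eqref{boundsecondpart} in the variance bound. The only thing to be a bit careful about is to use $\alpha<\sqrt{\alpha}$ appropriately when this lemma is finally applied with $\sqrt{\alpha}$ in place of $\alpha$, but that is already handled in the main proof.
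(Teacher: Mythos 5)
Your proposal is correct, and your second, self-contained computation is essentially the paper's own proof: for each fixed $i$ the inner sum is evaluated as $\alpha\frac{1-\alpha^{n-i}}{1-\alpha}$, which after summing over $i$ gives $\frac{n\alpha}{1-\alpha}\leq\frac{n}{1-\alpha}$. Your first observation, that the statement also follows from Lemma \ref{sumterms1} by swapping the order of summation, is a valid (and slightly slicker) alternative, but no comparison is needed since your main argument matches the paper's.
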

\color{black}
\begin{proof}
We have
\begin{align}
    &\sum_{i=1}^{n} \sum_{k=i+1}^{n} \alpha^{k-i} \nonumber \\
    &=\sum_{i=1}^{n} \left(\frac{1}{\alpha}\right)^{i} \sum_{k=i+1}^{n} \alpha^{k} \nonumber \\
    &=\sum_{i=1}^{n} \left(\frac{1}{\alpha}\right)^{i} \alpha^{i+1} \frac{\left(1-\alpha^{n-i}\right)}{1-\alpha} \nonumber\\
    &=\sum_{i=1}^{n}\frac{\alpha\left(1-\alpha^{n-i}\right)}{1-\alpha} \nonumber \\
    &=\frac{n\alpha}{1-\alpha}-\frac{\alpha^{n+1}}{1-\alpha}\sum_{i=1}^{n}\left(\frac{1}{\alpha}\right)^{i} \nonumber \\
     &\leq \frac{n\alpha}{1-\alpha} \nonumber \\
   &\leq \frac{n}{1-\alpha}. \nonumber
\nonumber \end{align}
\end{proof}
\color{black}
\begin{lemma} 
\label{infdensityi}
$\forall i \in \{1,\hdots,n\}$
\begin{align}
&i(\bs{T}_i;\bs{Z}_i,\mbf G_i) \nonumber \\  \nonumber \\
    &=\log\det(\mbf I_{N_{R}}+\frac{1}{\sigma^2}\mbf G_i \tilde{\mbf Q} \mbf G_i^{H})-\frac{1}{\ln(2)\sigma^2}\left(\bs{Z}_i-\mbf G_i\bs{T}_i \right)^{H}\left(\bs{Z}_i-\mbf G_i \bs{T}_i\right)+\frac{1}{\ln(2)\sigma^2}\bs{Z}_i^{H} \left(\mbf I_{N_{R}}+\frac{1}{\sigma^2}\mbf G_i \tilde{\mbf Q} \mbf G_i^{H}  \right)^{-1}\bs{Z}_i, \nonumber
    \end{align}
    where $\bs{T}_i\sim \mc N_{\mbb C}\left(\bs{0}_{N_T},\tilde{\mbf Q}\right), i=1\hdots n.$
\end{lemma}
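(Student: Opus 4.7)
The plan is to compute the information density directly from the definition by writing the involved conditional densities explicitly and simplifying. The key observation is that conditioned on $\mbf G_i$, the channel reduces to a standard MIMO Gaussian channel with a known (realization of a) fading matrix, so all relevant distributions are (conditionally) complex Gaussian and admit closed-form densities.

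First, I would use the fact that $\bs{T}_i$ and $\mbf G_i$ are independent to rewrite
\[
i(\bs{T}_i;\bs{Z}_i,\mbf G_i) \;=\; \log\frac{p_{\bs{T}_i,\bs{Z}_i,\mbf G_i}}{p_{\bs{T}_i}\,p_{\bs{Z}_i,\mbf G_i}} \;=\; \log\frac{p_{\bs{Z}_i,\mbf G_i\mid \bs{T}_i}}{p_{\bs{Z}_i,\mbf G_i}} \;=\; \log\frac{p_{\bs{Z}_i\mid \mbf G_i,\bs{T}_i}}{p_{\bs{Z}_i\mid \mbf G_i}},
\]
where the last equality follows by factoring the joint densities through $\mbf G_i$ and using that $\mbf G_i$ is independent of $\bs{T}_i$.

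Next, I would identify the two conditional distributions. From the channel equation $\bs{Z}_i=\mbf G_i\bs{T}_i+\bs{\xi}_i$ with $\bs{\xi}_i\sim\mc N_{\mbb C}(\bs 0,\sigma^2\mbf I_{N_R})$ independent of $(\bs{T}_i,\mbf G_i)$, we have conditionally
\[
\bs{Z}_i\mid \mbf G_i,\bs{T}_i \;\sim\; \mc N_{\mbb C}\!\left(\mbf G_i\bs{T}_i,\;\sigma^2\mbf I_{N_R}\right).
\]
Marginalizing out $\bs{T}_i\sim\mc N_{\mbb C}(\bs 0,\tilde{\mbf Q})$, which is independent of $(\mbf G_i,\bs{\xi}_i)$, gives
\[
\bs{Z}_i\mid \mbf G_i \;\sim\; \mc N_{\mbb C}\!\left(\bs 0,\;\sigma^2\mbf I_{N_R}+\mbf G_i\tilde{\mbf Q}\mbf G_i^H\right),
\]
since $\mathrm{cov}(\bs{Z}_i\mid \mbf G_i)=\mbf G_i\tilde{\mbf Q}\mbf G_i^H+\sigma^2\mbf I_{N_R}$.

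Finally, I would plug in the closed-form complex Gaussian density $p(\bs z)=\pi^{-N_R}\det(\mbf\Sigma)^{-1}\exp(-(\bs z-\bs\mu)^H\mbf\Sigma^{-1}(\bs z-\bs\mu))$ for each of the two distributions and take the log ratio. The $\pi^{-N_R}$ factors cancel, the ratio of determinants simplifies via
\[
\frac{\det(\sigma^2\mbf I_{N_R}+\mbf G_i\tilde{\mbf Q}\mbf G_i^H)}{\sigma^{2N_R}} \;=\; \det\!\left(\mbf I_{N_R}+\tfrac{1}{\sigma^2}\mbf G_i\tilde{\mbf Q}\mbf G_i^H\right),
\]
and the quadratic forms in the exponents pick up the factor $\tfrac{1}{\ln(2)}$ when converting from $\ln$ to $\log$ (since the channel uses $\log$ base $2$). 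Together with $(\sigma^2\mbf I_{N_R}+\mbf G_i\tilde{\mbf Q}\mbf G_i^H)^{-1}=\sigma^{-2}(\mbf I_{N_R}+\sigma^{-2}\mbf G_i\tilde{\mbf Q}\mbf G_i^H)^{-1}$ and the identity $\bs Z_i-\mbf G_i\bs T_i=\bs\xi_i$ (used only implicitly — the formula keeps $\bs Z_i-\mbf G_i\bs T_i$), this yields the claimed three-term expression. There is no real obstacle here; the only point that requires a small amount of care is verifying that the Gaussian marginal $p_{\bs Z_i\mid\mbf G_i}$ has the covariance $\sigma^2\mbf I_{N_R}+\mbf G_i\tilde{\mbf Q}\mbf G_i^H$ (valid because $\bs T_i$ and $\bs\xi_i$ remain independent after conditioning on the independent random matrix $\mbf G_i$), and bookkeeping the $1/\ln(2)$ factor correctly.
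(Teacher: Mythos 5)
Your proposal is correct and follows essentially the same route as the paper's proof: reduce $i(\bs{T}_i;\bs{Z}_i,\mbf G_i)$ to $\log\bigl(p_{\bs{Z}_i|\mbf G_i,\bs{T}_i}/p_{\bs{Z}_i|\mbf G_i}\bigr)$ via independence of $\bs{T}_i$ and $\mbf G_i$, identify the two conditional complex Gaussian laws with covariances $\sigma^2\mbf I_{N_R}$ and $\sigma^2\mbf I_{N_R}+\mbf G_i\tilde{\mbf Q}\mbf G_i^{H}$, and simplify the log-ratio of densities with the determinant identity and the $1/\ln(2)$ conversion. No gaps.
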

\color{black}
\begin{proof}
Notice that
\begin{align}
 i(\bs{T}_i;\bs{Z}_i,\mbf G_i) &=\log\left(  \frac{ p_{\bs{Z}_i,\mbf G_i,\bs{T}_i}\left(\bs{Z}_i,\mbf G_i ,\bs{T}_i\right)   }{ p_{\bs{Z}_i,\mbf G_i}\left( \bs{Z}_i,\mbf G_i\right)p_{\bs{T}_i}(\bs{T}_i) }   \right) \nonumber \\  \nonumber \\
 &=\log\left( \frac{ p_{\bs{Z}_i|\mbf G_i,\bs{T}_i}\left(\bs{Z}_i|\mbf G_i ,\bs{T}_i\right)   }{ p_{\bs{Z}_i|\mbf G_i}\left( \bs{Z}_i|\mbf G_i\right)}   \right),\nonumber 
\end{align}
where we used that $\bs{T}_i$ and $\mbf G_i$ are independent.
 
It holds that
\begin{align}
    \bs{Z}_i|\mbf G_i,\bs T_i \sim \mc N_{\mbb C}\left(\mbf G_i \bs{T}_i,\sigma^2\mbf I_{N_R}\right)
\nonumber \end{align}
and that
\begin{align}
    \bs{Z}_i|\mbf G_i \sim \mc N_{\mbb C} \left( \bs{0}_{N_{R}},\mbf G_i \tilde{\mbf Q} \mbf G_i^{H} + \sigma^2 \mbf I_{N_{R}}        \right).
\nonumber \end{align}
It follows that
\begin{align}
&\log \frac {p_{\bs{Z}_i|\mbf G_i,\bs{T}_i}(\bs{Z}_i|\mbf G_i,\bs{T}_i)}{p_{\bs{Z}_i|\mbf G_i}(\bs{Z}_i|\mbf G_i)} \nonumber\\  \nonumber \\
&=\log\left[\frac{\frac{1}{\pi^{N_R}\det(\sigma^2\mbf I_{N_{R}})}\exp\left( \frac{-1}{\sigma^2}\left(\bs{Z}_i-\mbf G_i\bs{T}_i \right)^{H}\left(\bs{Z}_i-\mbf G_i \bs{T}_i\right)  \right)}{\frac{1}{\pi^{N_R}\det(\mbf G_i \tilde{\mbf Q} \mbf G_i^{H} + \sigma^2 \mbf I_{N_{R}})} \exp\left(-\frac{1}{\sigma^2}\bs{Z}_i^{H} \left(\mbf I_{N_{R}}+\frac{1}{\sigma^2}\mbf G_i \tilde{\mbf Q} \mbf G_i^{H}  \right)^{-1}\bs{Z}_i  \right)}\right] \nonumber \\  \nonumber \\
&=\log \left[\frac{\det(\mbf G_i \tilde{\mbf Q} \mbf G_i^{H} + \sigma^2 \mbf I_{N_{R}})}{\det(\sigma^2\mbf I_{N_{R}})}2^{\left( \frac{-1}{\ln(2)\sigma^2}\left(\bs{Z}_i-\mbf G_i\bs{T}_i \right)^{H}\left(\bs{Z}_i-\mbf G_i \bs{T}_i\right) +\frac{1}{\ln(2)\sigma^2}\bs{Z}_i^{H} \left(\mbf I_{N_{R}}+\frac{1}{\sigma^2}\mbf G_i \tilde{\mbf Q} \mbf G_i^{H}  \right)^{-1}\bs{Z}_i  \right)}\right] \nonumber \\  \nonumber \\
&=\log\det(\mbf I_{N_{R}}+\frac{1}{\sigma^2}\mbf G_i \tilde{\mbf Q} \mbf G_i^{H})-\frac{1}{\ln(2)\sigma^2}\left(\bs{Z}_i-\mbf G_i\bs{T}_i \right)^{H}\left(\bs{Z}_i-\mbf G_i \bs{T}_i\right)+\frac{1}{\ln(2)\sigma^2}\bs{Z}_i^{H} \left(\mbf I_{N_{R}}+\frac{1}{\sigma^2}\mbf G_i \tilde{\mbf Q} \mbf G_i^{H}  \right)^{-1}\bs{Z}_i.
\nonumber \end{align}
\end{proof}
\color{black}

\vspace{12pt}
\end{document}